\definecolor{darkblue}{rgb}{0,0,.6}
\newtheorem{theorem}{Theorem}
\newtheorem{lemma}{Lemma}
\newtheorem{corollary}{Corollary}
\theoremstyle{definition}
\newtheorem{assum}{Assumption}
\newtheorem{remark}{Remark}
\numberwithin{equation}{section}
\begin{document}
	\title{\textbf{Adaptive LAD-Based Bootstrap Unit Root Tests under Unconditional Heteroskedasticity}}
	
	\author[a]{Jilin Wu}
	\author[b]{Ruike Wu\thanks{\noindent Correspondence to: Department of Finance, School of Economics, Xiamen University, Xiamen, China. E-mail addresses: rainforest1061@126.com (J. Wu),  ruikewu@foxmail.com (R. Wu), xiaoz@bc.edu (Z. Xiao)}}
	\author[c]{Zhijie Xiao}
	
	\affil[a]{Department of Finance, School of Economics, Gregory and Paula Chow Institute for studies in Economics, Wang Yanan Institute for Studies in Economics (WISE), Xiamen University, China}
	\affil[b]{Department of Finance, School of Economics, Xiamen University, China}
	\affil[c]{Department of Economics, Boston College, USA}
	
	\date{}
	\maketitle
	\begin{abstract}
		This paper explores testing unit roots based on least absolute deviations (LAD) regression under unconditional heteroskedasticity. We first derive the asymptotic properties of the LAD estimator for a first-order autoregressive process with the coefficient (local to) unity under unconditional heteroskedasticity and weak dependence, revealing that the limiting distribution of the LAD estimator (consequently the derived test statistics) is closely associated with unknown time-varying variances. To conduct feasible LAD-based unit root tests under heteroskedasticity and serial dependence, we develop an adaptive block bootstrap procedure, which accommodates time-varying volatility and serial dependence, both of unknown forms, to compute critical values for LAD-based tests. The asymptotic validity is established. We then extend the testing procedure to allow for deterministic components. Simulation results indicate that, in the presence of unconditional heteroskedasticity and serial dependence, the classic LAD-based tests demonstrate severe size distortion, whereas the proposed LAD-based bootstrap tests exhibit good size-control capability. Additionally, the newly developed tests show superior testing power in heavy-tailed distributed cases compared to considered benchmarks. Finally, empirical analysis of real effective exchange rates of 16 EU countries is conducted to illustrate the applicability of the newly proposed tests.
	\end{abstract}
	
	\textbf{Keywords:} Unit root test, Unconditional heteroskedasticity, Least absolute deviations, Adaptive block bootstrap.
	
	\textbf{JEL Classification:} C14, C32
	
	\par
	\newpage

	\begin{sloppypar}	\section{Introduction}
		
		More and more evidence shows that time variation in unconditional  volatility is a common feature in macroeconomic and financial data, see, \cite{Andreou2002} and \cite{Liu2008}. The presence of unconditional heteroskedasticity may mislead conventional unit root testing procedure, leading to erroneous conclusions. \cite{Hamori1997} found that a simple level shift in innovation variance could invalidate  usual unit root asymptotics. \cite{Kim2002} investigated the impact of permanent variance shift in the innovations on Dickey-Fuller tests, and found they were badly oversized. \cite{Cavaliere2005} systematically explored the effect of time-varying variance  on unit root tests in a general framework, and showed that the size and power of the tests are largely affected by the form of underlying volatility process. Therefore, if unconditional heteroskedasticity is not properly accounted for, spurious inference about the presence or absence of a unit root may be made.
		
		Consequently, subsequent studies have developed numerous modified unit root tests that are robust to unknown unconditional heteroskedasticity. 
		For example, \cite{Cavaliere2007} computed critical values for the M-type unit root tests developed by \cite{Perron1996}  via simulation based on the variance profile estimator. \cite{Cavaliere2008a} proposed wild bootstrap versions of  the M-type unit root tests and established their validity under nonstationary volatility. \cite{Cavaliere2008b} proposed a nonparametric correction method based on variance profiles to deal with heteroskedasticity, demonstrating that the  corrected statistics share the same asymptotic distributions as the corresponding standard tests derived under homoskedasticity. To achieve higher testing power, \cite{Boswijk2018} developed a likelihood ratio (LR) test that accommodates nonstationary volatility and achieves power close to the Gaussian asymptotic power envelope. However, their LR test is derived under the assumption of Gaussianity.
		\cite{Beare2018} proposed a class of modified unit root test statistics that are robust to unstable volatility, which is achieved by purging heteroskedasticity from the data using a kernel estimate of volatility before applying standard tests.
		
		\subsection{Motivation}
		The aforementioned tests  are all derived in the framework of least squares (LS) estimation or rely on Gaussian assumption.  Under departure from normality, particularly for those heavy-tailed innovations, these tests may exhibit relatively poor testing powers. Macroeconomic and financial data are often characterized by heavy tails, which have been extensively documented in the literature, see, for example, \cite{rachev2003} for an overview.
		Moreover, heavy-tailed behavior and time-varying volatility in the data often occur simultaneously and are closely intertwined, making it difficult to clearly distinguish between the two ones.
		Consequently, it is essential to devise a unit root testing procedure that is robust to both heavy tails and time-varying volatility in innovations.

		In the unit root testing literature, considerable effort has been devoted to deriving robust inference \citep{Herce1996, Hasan1997, Koenker2004, Thompson2004}. However, the impact of heteroskedasticity on these robust tests has been rarely discussed and remains unclear, necessitating further investigation. In this paper, we focus on studying unit root tests based on the LAD regression in the context of unconditional heteroskedasticity, as it serves as the most straightforward robust counterpart to LS-based unit root tests.
		The LAD technique utilizes absolute values of errors instead of squared errors, making it less sensitive to extreme values. Consequently, it is regarded as more robust than the LS technique when innovations follow heavy-tailed distributions.

		The unit root processes under the LAD framework have been extensively studied.  \cite{Knight1989} developed asymptotic theory for the LAD estimation of the autoregressive coefficient in  a unit root process with independent and infinite variance errors. \cite{Herce1996} complemented this work by developing the asymptotic theory for the LAD estimator under the assumption of mixing-dependent errors with finite variances. The limiting distribution of the LAD estimator depends on  long-run variance parameters, leading to the proposal of several transformed test statistics with computationally convenient critical values. However, \cite{Thompson2004} pointed out that the transformation causes the problem\footnote{It is also noted by \cite{Herce1996}} that the transformed tests exhibit powers equal to sizes for normal innovations. Hence, he described a  way to compute critical values without transforming the test statistics. The resulting tests are much more powerful than  the LS-based tests for thicker-tailed distributions.
		\cite{Moreno2000} proposed an LAD-based bootstrap procedure to obtain critical values, their simulation results show that LAD-based tests exhibit higher power than the LS counterparts when errors are heavy-tailed distributed. However, their bootstrap procedure is established under the assumption that errors are  independently and identically distributed (i.i.d). \cite{Jansson2008} derived the asymptotic power envelopes for the locally asymptotically $\alpha$-similar  unit root tests, which contains the LAD-based unit root tests, in a zero-mean AR(1) model. \cite{Li2009} considered the LAD-based unit root tests for GARCH-type errors.  In addition, since the LAD estimator can be regarded as a special case of the quantile regression (QR) estimator at median, some attention is also given to unit root tests within the QR framework. \cite{Koenker2004} proposed several QR-based unit root tests which show higher detection ability than the LS-based tests in non-Gaussian cases. \cite{Galvao2009} extended these tests by incorporating stationary covariates and allowing for a linear time trend. \cite{Li2018} further generalized these QR-based tests to a nonlinear quantile autoregressive framework. 
		
		\subsection{Contribution}
		
		To the best of our knowledge, none of the LAD-based studies have explored the impact of unconditional heteroskedasticity on asymptotic theory of a (nearly) unit root process.
		In this paper, we try to fill this gap. We first  derive the asymptotic theory of the LAD estimator of the autoregressive coefficient in a (nearly) unit root process under unconditional heteroskedasticity and weak dependence, showing that the limiting distribution of the LAD estimator is not free of nuisance parameters and can differ substantially (depending on the underlying heteroskedastic processes) from the limiting distribution that is obtained under homoskedasticity \citep{Herce1996}.  To conduct feasible LAD-based unit root tests in the presence of time-varying volatility and serial dependence, we propose an adaptive block bootstrap resampling procedure to compute the critical values for the LAD-based tests. The asymptotic validity of resampling procedure is established.  The newly proposed bootstrap procedure successfully generates unit root pseudo series, adaptively retaining the important heteroskedastic dynamics and dependence structure of the data, without making any parametric assumptions on the two characteristics.
		Next, we  extend our main results to accommodate deterministic components in  the unit root process, the asymptotic properties are also derived for the extended models.

		Simulation results show that, in the presence of unconditional heteroskedasticity and weak dependence, the LAD-based  bootstrap unit root tests exhibit reasonable empirical sizes in all considered situations, whereas the conventional LAD-based tests derived under homoskedasticity suffer from substantial size distortion. Furthermore, compared to other popular LS-based competitors that are robust to both time-varying volatility and serial dependence, the proposed tests also achieve higher testing power in non-Gaussian situations. Finally, we apply the newly proposed LAD-based bootstrap tests to investigate the validity of purchasing power parity condition for 16 European countries using real effective exchange rates.

		
		The main contributions can be summarized into two folds: 
		(1) We are the first to derive the asymptotic theory of the LAD estimator in the presence of both unconditional heteroskedasticity and serial dependence. (2) We develop  feasible LAD-based bootstrap unit root tests that accommodate unconditional heteroskedasticity and serial dependence, both of unknown forms, which is new to the literature. We note that although we focus on solving the problem of heteroskedasticity, we also adapt our testing procedure to allow for serial dependence to enhance its practical applicability, which poses significant challenges for theoretical proof. As a result, the new LAD-based tests are particularly suitable for the situations where errors are heavy-tailed, heteroskedastic, and weakly dependent. Moreover, It is worth noting that the newly developed bootstrap procedure is general and can  also be easily applied to LS-based tests. Lastly, we clarify that the focus of this paper is on addressing unconditional heteroskedasticity in LAD-based unit root tests, the good performance of the LAD-based tests in  heavy-tailed situations has been extensively discussed in the aforementioned literature.

		\subsection{Organization}	
		This paper is organized as follows: In Section \ref{section model}, we consider the basic  unit root process under unconditional heteroskedasticity and serial dependence, and study the asymptotic properties of the LAD estimator. Section \ref{section theorem} proposes an infeasible/feasible adaptive block bootstrap procedure to approximate the limiting null distributions of the LAD-based unit root tests.  In Section \ref{section extension}, we further extend our testing procedure to allow for deterministic components. Section \ref{section simu} reports Monte Carlo simulation results  to assess the finite sample performance of the newly proposed testing procedure. An empirical application to real effective exchange rates of 16 European countries is conducted for checking the validity of purchasing power parity in Section \ref{section empirical}.  Section \ref{section conclusion} offers conclusions. Proofs and some additional simulation results are relegated to the supplementary material.
		
		Throughout the paper, $x^{\prime}$ is the transpose of $x$, $\left\lfloor x\right\rfloor $ represents the largest integer less than or equal to $x$.
		{$\overset{p}{\rightarrow}$ and $\overset{d}{\rightarrow}$ denote
			the convergence in probability and in distribution, respectively. $\overset{p^{*}}{\rightarrow}$ and $\overset{d^{*}}{\rightarrow}$ stand for the convergence in probability and in distribution under the bootstrap law, respectively.  $sgn(x)=I(x>0)-I(x<0)$ is the sign function of $x\in \mathbb{R}$ where $I(\cdot)$ is the indicator function}.
		
		\section{The Basic Model and  Asymptotic Properties}\label{section model}
		Consider the following first order autoregressive model
		\begin{equation}
			y_{t}=\gamma_{0}y_{t-1}+u_{t}, \label{AR}%
		\end{equation}
		where the error satisfies
		\begin{equation}
			u_{t}=\sigma_{t}\varepsilon_{t}\colon=\sigma\left(  \frac{t}{T}\right)
			\varepsilon_{t} \label{HUF}%
		\end{equation}
		for $t=1,\ldots,T$ with $y_{0} = 0$ for convenience. $\left\{\sigma_{t}\right\}  $ is a deterministic positive sequence functioning as a proxy for all factors that affect the unconditional heteroskedasticity, and $\left\{  \varepsilon_{t}\right\}  $ is a strictly stationary mixing process, which accounts for  conditional heterogeneity and temporal dependence in $u_{t}$. We do not assume a particular parametric specification but only impose some mild restrictions on $\sigma\left(
		\cdot\right)  $. This sort of model-free assumptions not only helps to avoid potential misspecifications in $\sigma\left(  \cdot\right)  $, but  {is also} general enough to encompass  abrupt breaks and smooth transitions as special cases. Such rescaling device for $\sigma\left(  \cdot\right)  $ was first introduced by \cite{Dahlhaus1997}, and since then has been widely adopted in \textcolor{black}{the} time series literature; see, for example, \cite{Cavaliere2005}, \cite{Cavaliere2007}, and \cite{Wu2018}, to name only a few. 
		
		Our focus in this paper is on testing the unit root null hypothesis of $\mathbb{H}_{0}:\gamma_{0}=1$ in (\ref{AR}) against the alternative $\mathbb{H}_{A}:\gamma_{0}<1$. The LAD estimator can be calculated by solving
		\begin{equation}
			\hat{\gamma}_{LAD}=\arg\min_{\gamma}\sum_{t=1}^{T}\left\vert
			y_{t}-\gamma y_{t-1}\right\vert . \label{LAD0}%
		\end{equation}
		
		Denote $v=T\left(  \gamma-\gamma_{0}\right)  $, then    {solving the minimization problem} (\ref{LAD0}) is equivalent to minimizing the following criterion
		\begin{equation}
			H_{T}\left(  v\right)  =\sum_{t=1}^{T}  \left\vert u_{t}-vT^{-1}y_{t-1}\right\vert -\left\vert u_{t}\right\vert  . \label{MINC}%
		\end{equation}
		By Knight's identity \citep{Knight1989}, the objective function (\ref{MINC}) can be further rewritten as
		\begin{equation}
			H_{T}\left(  v\right)  =-v\left(  \frac{1}{T}\sum_{t=1}^{T}y_{t-1}sgn\left(u_{t}\right)  \right)  +2\sum_{t=1}^{T}\int_{0}^{vT^{-1}y_{t-1}}\left[I\left(  u_{t}\leq s\right)  -I\left(  u_{t}\leq0\right)  \right]  ds.
			\label{HTV}%
		\end{equation}
		If $\hat{v}$ is the minimizer of $H_{T}\left(  v\right)  $, then we have $\hat{v}=T\left(  \hat{\gamma}_{LAD}-\gamma_{0}\right) $. 
		Therefore, in order to derive the limiting distribution of $\hat{\gamma}_{LAD}$, we need to study the two terms in the above equation $H_{T}\left(  v\right)  $, whose asymptotic analyses are based on the following assumptions.

		\begin{sloppypar}
			\begin{assum}
				\label{assum 1} 
				\begin{enumerate}
					\item[(i)] $\left\{  \varepsilon_{t}\right\}
					$ is a strictly stationary mixing process with $E\left(
					\varepsilon_{t}\right)  =0$ and $E\left( sgn(\varepsilon_{t})\right)=0$;
					\item[(ii)] The density function $f$ of $\varepsilon_{t}$ is continuous and positive at 0 with respect to the Lebesgue measure. The conditional cumulative density function $F_{t-1}(x)=P(\varepsilon_{t}<x|\varepsilon_{t-1},\varepsilon_{t-2},\cdots)$ has the first derivative $f_{t-1}(x)$ a.s. with $f_{t-1}(x_{T})$ uniformly integrable for any sequence $x_{T}\rightarrow0$ and $E(f_{t-1}^{r}(0))<\infty$ for some $r>1$;
					\item[(iii)] The long-run variance-covariance 
					$$\Sigma=\left(
					\begin{array}
						[c]{cc}%
						\delta_{1}^{2} & \delta_{12}\\
						\delta_{12} & \delta_{2}^{2}%
					\end{array}
					\right)  :=\Sigma_{0}+ \Sigma_{1} + \Sigma_{1}^{\prime}$$
					is strictly positive definite  with $\Sigma_{1}=\sum_{s=2}^{\infty}E\left[\left(  \varepsilon_{1},sgn(\varepsilon_{1})\right)^{\prime}\left(  \varepsilon_{s},sgn(\varepsilon_{s})\right)\right]$ and $\Sigma_{0}=E[\left(\varepsilon_{1},sgn(\varepsilon_{1})\right)^{\prime}\left(  \varepsilon_{1},sgn(\varepsilon_{1})\right)]$.
				\end{enumerate}
			\end{assum}
		\end{sloppypar}
		
		The zero-median assumption  {postulated by Assumption \ref{assum 1}(i)} is a general set-up for LAD-type estimators, see e.g., \cite{Herce1996},\cite{zhu2015},  \cite{Zhu2019}, and \cite{Zhang2022}.
		Assumption \ref{assum 1}(ii) gives the limitations for the density function, the conditional density function and the conditional cumulative density function, which are  technique requirements for the LAD regression under serial dependence, see Assumption 1(iii) of \cite{Herce1996}. 
		
		\begin{assum}
			\label{assum 2}
			\textit{There exists $ p > 2$   such that $ E|\varepsilon_t|^{p} <\infty $ and the mixing coefficients $\{\alpha_{m}\}$ satisfy $\sum_{m=1}^{\infty}\alpha_{m}%
				^{1/\beta-1/p}<\infty$ for some $2<\beta< p$.} 
		\end{assum}
		
		Assumption \ref{assum 2} provides strong mixing and moment conditions for $\left\{{\varepsilon_{t}}\right\}  $. It imposes  \textcolor{black}{weaker}  moment restrictions on $\varepsilon_t$ than those required by LS-based unit root tests 
		\citep[see, e.g.,][]{Cavaliere2005, Cavaliere2007, Cavaliere2008a}, and hence admits \textcolor{black}{heavier tails in innovations}.
		The restriction on the mixing coefficients $\alpha_m$ embodies a trade-off between the mixing decay rate and the moment conditions. 
		{Under Assumptions \ref{assum 1} and \ref{assum 2},  a broad scope of data generating mechanisms for $\{\varepsilon_t\}$ is included, such as various stationary ARMA-type and GARCH-type processes.}
		
		
		\begin{assum}
			\label{assum 3} \textit{$\sigma\left(  \cdot\right)  $ is a bounded
				positive and deterministic function, and has continuous second derivatives
				except at finite discontinuity points on $[0,1]$.}
		\end{assum}

		Under Assumption \ref{assum 3}, the heteroskedastic function $\sigma(\cdot)$ is bounded, and can have up to a finite number of jumps. Models of single or multiple volatility shifts satisfy Assumption \ref{assum 3} with $\sigma(\cdot)$  piece-wise constant. 
		The assumption of a deterministic  function $\sigma(\cdot)$ can be weakened by simply assuming stochastic independence between $\{\varepsilon_{t}\}$ and $\{\sigma_{t}\}$, 
		readers can be referred to \cite{Cavaliere2009} for more discussion. 
		
		The following lemma provides a bivariate invariance principle, laying the foundation for this paper.
		\begin{lemma}
			\label{lemma 1} 
			\begin{enumerate}
				\item[(i)] Suppose Assumptions \ref{assum 1}-\ref{assum 2} hold true, then we  have
				\begin{equation}
					\frac{1}{\sqrt{T}}\sum_{t=1}^{\left\lfloor Tr\right\rfloor }(\varepsilon_{t},sgn(\varepsilon_{t}))^{\prime}\overset{d}{\rightarrow}\left(B_{1}(r),B_{2}(r)\right)^{\prime}:=\Sigma^{1/2}\left(  W_{1}(r),W_{2}(r)\right)  ^{\prime};\label{sbip0}
				\end{equation}
				\item[(ii)] Suppose Assumptions \ref{assum 1}-\ref{assum 3} hold true, then we  have
				\begin{equation}
					\frac{1}{\sqrt{T}}\sum_{t=1}^{\left\lfloor Tr\right\rfloor }(u_{t},sgn(u_{t}))^{\prime}\overset{d}{\rightarrow}\left(  B_{\sigma}(r),B_{2}(r)\right)^{\prime}, \label{gbip}%
				\end{equation}
				where $\left(  W_{1}(\cdot),W_{2}(\cdot)\right)  ^{\prime}	$ is \textcolor{black}{a} standard bivariate Brownian motion, $\Sigma$ is defined in Assumption \ref{assum 1}(iii), and $\mathit{B_{\sigma}(r)=\int_{0}^{r}\sigma(s)dB_{1}(s)}$.
			\end{enumerate}
		\end{lemma}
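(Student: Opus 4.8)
The plan is to treat the two parts in turn, deriving (ii) from (i) by a weighting argument that exploits the deterministic, piecewise-smooth structure of $\sigma(\cdot)$.

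For part (i), set $\xi_t := (\varepsilon_t, sgn(\varepsilon_t))'$. Since $sgn(\varepsilon_t)$ is a measurable function of $\varepsilon_t$, the vector sequence $\{\xi_t\}$ inherits strict stationarity from $\{\varepsilon_t\}$ and is strongly mixing with coefficients no larger than $\{\alpha_m\}$; by Assumption \ref{assum 1}(i) it is centered, $E\xi_t = 0$; by Assumption \ref{assum 2}, $E\|\xi_t\|^{p} \le C(1 + E|\varepsilon_t|^{p}) < \infty$ and the mixing coefficients obey the stated summability; and its long-run covariance matrix is exactly $\Sigma = \Sigma_0 + \Sigma_1 + \Sigma_1'$, positive definite by Assumption \ref{assum 1}(iii). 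The claim then follows from a standard bivariate invariance principle for strictly stationary strongly mixing sequences: finite-dimensional convergence is obtained via the Cram\'er--Wold device together with a central limit theorem for $\alpha$-mixing sequences under the moment/mixing trade-off of Assumption \ref{assum 2}, and tightness in $D[0,1]^2$ follows from the same conditions, yielding the limit $\Sigma^{1/2}(W_1, W_2)' = (B_1, B_2)'$ with $(W_1, W_2)'$ a standard bivariate Brownian motion.

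For part (ii), the second coordinate is immediate: because $\sigma(\cdot) > 0$ we have $sgn(u_t) = sgn(\sigma(t/T)\varepsilon_t) = sgn(\varepsilon_t)$ for every $t$, so the partial-sum process of $sgn(u_t)$ coincides with that of $sgn(\varepsilon_t)$ and converges to $B_2$. For the first coordinate, let $B_{T,1}(r) := T^{-1/2}\sum_{t=1}^{\lfloor Tr\rfloor}\varepsilon_t$, so that $T^{-1/2}\sum_{t=1}^{\lfloor Tr\rfloor}\sigma(t/T)\varepsilon_t = \int_{(0,r]}\sigma(s)\,dB_{T,1}(s)$. I would pass to the limit by a blocking argument: partition $[0,1]$ using the finitely many discontinuity points of $\sigma$ and refine into subintervals $0 = r_0 < r_1 < \cdots < r_K = 1$; on each block replace $\sigma(t/T)$ by $\sigma(r_{j-1})$, bounding the replacement error uniformly in $r$ by the modulus of continuity of $\sigma$ on that piece times a maximal inequality for mixing partial sums (available under Assumption \ref{assum 2}); apply part (i) within each block to get $\sum_{j}\sigma(r_{j-1})\big(B_1(r_j\wedge r) - B_1(r_{j-1}\wedge r)\big)$ in the limit; and finally let $K\to\infty$ so that these Riemann--Stieltjes sums converge to $\int_0^r \sigma(s)\,dB_1(s) = B_\sigma(r)$. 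Joint convergence with the second coordinate is preserved throughout, since the $sgn$ component is untouched by the weighting; equivalently one may phrase this step as the continuous mapping theorem applied to $(x_1, x_2)\mapsto\big(\sigma(\cdot)x_1(\cdot) - \int_0^\cdot x_1\,d\sigma,\ x_2(\cdot)\big)$, a map that is continuous at pairs whose first component lies in $C[0,1]$ — and the limit $(B_1, B_2)$ has continuous paths almost surely (the discretization of the Stieltjes integral contributing only $o_p(1)$ terms).

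The main obstacle is precisely this weighted-sum limit when $\sigma$ is permitted to jump: the blocks must be aligned with the discontinuity points of $\sigma$, and the within-block replacement of $\sigma(t/T)$ by a constant must be shown negligible \emph{uniformly in} $r\in[0,1]$, which is where the maximal inequality and the moment/mixing conditions of Assumption \ref{assum 2} are essential. The remaining ingredients — the bivariate functional central limit theorem of part (i) and the convergence of Riemann--Stieltjes sums to the stochastic integral $\int_0^r\sigma\,dB_1$ — are standard once the correct mixing and moment bookkeeping is in place.
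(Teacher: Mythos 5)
Your proposal is correct, but it follows a genuinely different route from the paper. For part (i) you and the paper do essentially the same thing (the paper simply invokes Corollary 2.2 of Phillips and Durlauf (1986) for the strictly stationary mixing vector $(\varepsilon_t, sgn(\varepsilon_t))'$, which is what your Cram\'er--Wold plus mixing-CLT plus tightness sketch amounts to). For part (ii), however, the paper does not discretize $\sigma$: following Lemma 2 of Cavaliere, Rahbek and Taylor (2010), it writes $\nu_t=(\varepsilon_t,sgn(\varepsilon_t))'$, applies Hansen's (1992) martingale approximation $\nu_t=\zeta_t+z_{t-1}-z_t$, and decomposes $T^{-1/2}\sum_{t\le Tr}U_t\nu_t$ (with $U_t=\mathrm{diag}(\sigma_t,1)$) into a weighted martingale-approximant sum, handled by Hansen's Theorem 2.1 to get $\int_0^r \mathrm{diag}(\sigma,1)\,d(B_1,B_2)'$, plus two remainder terms shown to be $o_p(1)$ using $\sup_t\|z_t\|/\sqrt{T}\overset{p}{\to}0$ and the boundedness of the deterministic $\sigma(\cdot)$; jumps in $\sigma$ cause no extra work because $\sigma$ enters only through the deterministic weights. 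Your route is instead the Cavaliere (2005)-style argument extended to the bivariate case: exploit $sgn(u_t)=sgn(\varepsilon_t)$ for the second coordinate (the paper uses the same fact implicitly), and for the first coordinate use the integral-against-the-partial-sum representation with a step-function approximation of $\sigma$ aligned with its discontinuities, the CMT, and a uniform negligibility step. This is a legitimate and more elementary alternative that avoids Hansen's stochastic-integral convergence theorem, but be aware that its crux is exactly the step you flag: the within-block replacement error must be killed by a maximal inequality for \emph{weighted} mixing partial sums (giving a bound of order $\sup_t|\sigma(t/T)-\sigma_K(t/T)|$), since a pure summation-by-parts bound fails there — the total variation of $\sigma-\sigma_K$ does not shrink as the partition refines — and the argument must be organized as the usual double limit ($T\to\infty$ then $K\to\infty$, \`a la Billingsley's approximation theorem). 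With that bookkeeping done, your proof goes through and delivers the same bivariate limit $(B_\sigma, B_2)'$.
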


		The first result (\ref{sbip0}) has been well-established in the literature, see, for example, \cite{Hansen1992}. The second result (\ref{gbip}) extends (\ref{sbip0}) to allow for deterministic heteroskedasticity. As a special case of ours, \cite{Cavaliere2005} developed a univariate invariance principle for  {the partial sum of} heteroskedastic errors, see his Lemma 1. It is worth noting that $sgn(u_{t})$ is always correlated with $u_{t}$ even if $\left\{  u_{t}\right\}  $ is an i.i.d. sequence. 
		
		\subsection{Asymptotic properties under the null hypothesis} \label{subsection asy null}
		In this subsection, we  demonstrate the asymptotic results for the LAD estimator   $\hat{\gamma}_{LAD}$ under the  null hypothesis.  The following Lemma \ref{lemma 2} provides the limiting distributions for the two terms in $H_T(v)$, respectively.
		
		\begin{lemma}
			\label{lemma 2} 
			Suppose Assumptions \ref{assum 1}-\ref{assum 3} hold true, 
			let $y_{t}$ be determined by (\ref{AR}) and
			(\ref{HUF}) with $\gamma_{0}=1$, then we have
			$$
			\frac{1}{T}\sum_{t=1}^{T}y_{t-1}sgn\left(  u_{t}\right)  \overset
			{d}{\rightarrow}\int_{0}^{1}B_{\sigma}(r)dB_{2}(r)+\Gamma,$$
			$$2\sum_{t=1}^{T}\int_{0}^{vT^{-1}y_{t-1}}\left[  I\left(  u_{t}\leq s\right)
			-I\left(  u_{t}\leq0\right)  \right]  ds\overset{d}{\rightarrow}v^{2}\left(	f\left(  0\right)  \int_{0}^{1}\sigma^{-1}\left(  r\right)  B_{\sigma}^{2}\left(  r\right)  dr\right), 
			$$
			where $\Gamma=lim_{T\rightarrow\infty}\frac{1}{T}\sum_{i=1}^{T}\sum_{j=i+1}^{\infty}E(u_{i}sgn(u_{j}))$.
		\end{lemma}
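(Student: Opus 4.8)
The plan rests on reading each term as a (near\nobreakdash-)stochastic integral and isolating the bias caused by serial dependence. Set $\mathcal{F}_{t}=\sigma(\varepsilon_{t},\varepsilon_{t-1},\ldots)$ and note that under $\gamma_{0}=1$ one has $y_{t-1}=\sum_{s=1}^{t-1}u_{s}$ and $sgn(u_{t})=sgn(\varepsilon_{t})$ since $\sigma_{t}>0$.

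\emph{First display.} I would read $T^{-1}\sum_{t=1}^{T}y_{t-1}sgn(u_{t})=T^{-1}\sum_{t=1}^{T}\big(\sum_{s=1}^{t-1}u_{s}\big)sgn(u_{t})$ as the sample stochastic integral of the partial-sum process of $\{u_{t}\}$ against that of $\{sgn(u_{t})\}$. Lemma \ref{lemma 1}(ii) already supplies the joint invariance principle $\big(T^{-1/2}y_{\lfloor Tr\rfloor},\,T^{-1/2}\sum_{s=1}^{\lfloor Tr\rfloor}sgn(u_{s})\big)\Rightarrow(B_{\sigma}(r),B_{2}(r))$, so one expects $T^{-1}\sum_{t=1}^{T}y_{t-1}sgn(u_{t})\Rightarrow\int_{0}^{1}B_{\sigma}(r)\,dB_{2}(r)+\Lambda$, with $\Lambda$ the limiting dependence bias. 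Because the inner sum runs over $s\le t-1$ only, $\Lambda$ should be the one-sided long-run covariance between lagged levels of $\{u_{t}\}$ and current signs, over strictly positive lags and with no contemporaneous term: $\Lambda=\lim_{T}T^{-1}\sum_{t=1}^{T}\sum_{s=1}^{t-1}E(u_{s}sgn(u_{t}))=\Gamma$, which is finite since $E(u_{s}sgn(u_{t}))=\sigma(s/T)\,E(\varepsilon_{0}sgn(\varepsilon_{t-s}))$ and $\sum_{k\ge1}|E(\varepsilon_{0}sgn(\varepsilon_{k}))|<\infty$ under the mixing/moment trade-off of Assumption \ref{assum 2}. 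Since no classical convergence-to-stochastic-integral result (e.g., \cite{Hansen1992}) accommodates heteroskedasticity and dependence jointly, I would make this rigorous by (a) writing $sgn(u_{t})=m_{t}+g_{t-1}$ with $m_{t}=sgn(\varepsilon_{t})-E[sgn(\varepsilon_{t})\mid\mathcal{F}_{t-1}]$ a stationary martingale difference and $g_{t-1}=1-2F_{t-1}(0)$ a mean-zero $\mathcal{F}_{t-1}$-measurable sequence; (b) handling $T^{-1}\sum_{t}(\sum_{s<t}u_{s})m_{t}$ as a martingale transform via a martingale FCLT / stochastic-integral argument applied block by block, after a blocking step replacing $\sigma(\cdot)$ by a step function on $o(T)$ blocks of diverging length; (c) handling $T^{-1}\sum_{t}(\sum_{s<t}u_{s})g_{t-1}$ by summation by parts, which reduces it to partial-sum limits plus a bias that accumulates exactly to $\Gamma$; and (d) recombining, using additivity of the It\^o integral in the integrator and the fact that the partial sums of $\{m_{t}\}$ and $\{g_{t-1}\}$ jointly converge to the two components of $B_{2}$.

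\emph{Second display.} Put $a_{t}=vT^{-1}y_{t-1}$, which is $\mathcal{F}_{t-1}$-measurable and $O_{p}(T^{-1/2})$. The plan is the standard Knight-identity split into a conditional mean plus a negligible martingale remainder. Since $\sigma_{t}$ is deterministic, $E\big[\int_{0}^{a_{t}}(I(u_{t}\le s)-I(u_{t}\le0))\,ds\mid\mathcal{F}_{t-1}\big]=\int_{0}^{a_{t}}\big(F_{t-1}(s/\sigma_{t})-F_{t-1}(0)\big)\,ds$, and a first-order Taylor expansion $F_{t-1}(s/\sigma_{t})-F_{t-1}(0)=f_{t-1}(0)\,s/\sigma_{t}+R_{t}(s)$ turns this into $f_{t-1}(0)a_{t}^{2}/(2\sigma_{t})+\int_{0}^{a_{t}}R_{t}(s)\,ds$; the uniform integrability of $f_{t-1}(x_{T})$ for $x_{T}\to0$ from Assumption \ref{assum 1}(ii), together with $|a_{t}|=O_{p}(T^{-1/2})$, makes $\sum_{t}\int_{0}^{a_{t}}R_{t}(s)\,ds=o_{p}(1)$. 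The centered part $\sum_{t}\{\int_{0}^{a_{t}}(\cdots)\,ds-E[\cdot\mid\mathcal{F}_{t-1}]\}$ is a martingale with per-term conditional variance $O_{p}(|a_{t}|^{3})=O_{p}(T^{-3/2})$, hence $o_{p}(1)$ overall. Thus the left-hand side equals $v^{2}\,T^{-1}\sum_{t=1}^{T}\sigma(t/T)^{-1}f_{t-1}(0)\big(T^{-1/2}y_{t-1}\big)^{2}+o_{p}(1)$, and it remains to show that this converges to $v^{2}f(0)\int_{0}^{1}\sigma^{-1}(r)B_{\sigma}^{2}(r)\,dr$. Here $\{f_{t-1}(0)\}$ is stationary and ergodic with $E[f_{t-1}(0)]=f(0)$ (the unconditional density at $0$), $\sigma(\cdot)^{-1}$ is piecewise continuous, and $T^{-1/2}y_{\lfloor Tr\rfloor}\Rightarrow B_{\sigma}(r)$; a blocking argument over blocks of length $\ell_{T}\to\infty$ with $\ell_{T}/T\to0$ — on which $\sigma(t/T)^{-1}$ and $(T^{-1/2}y_{t-1})^{2}$ are asymptotically constant while $\ell_{T}^{-1}\sum_{\text{block}}f_{s-1}(0)\to f(0)$ by the ergodic theorem (uniform integrability from $E[f_{t-1}^{r}(0)]<\infty$) — delivers the limit.

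\emph{Main obstacle.} The hard part is the first display: carrying a convergence-to-stochastic-integral argument through under time-varying volatility and serial dependence simultaneously, a combination not covered by classical results, while tracking the bias finely enough that exactly $\Gamma$ — the one-sided long-run covariance over strictly positive lags, with no contemporaneous term because $y_{t-1}$ omits $u_{t}$ — emerges, and that it is finite under the mixing-moment trade-off of Assumption \ref{assum 2}. The weighted-sum limit in the second display, where the random weight $f_{t-1}(0)$ has to be averaged out against the slowly varying process $B_{\sigma}^{2}$, is the secondary technical point.
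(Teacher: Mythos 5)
Your treatment of the second display is essentially the paper's own argument: decompose via the conditional mean $\mu_t=E(M_t\mid\mathcal{F}_{t-1})$, Taylor-expand $F_{t-1}$ at zero using Assumption \ref{assum 1}(ii) to get $\sum_t\sigma_t^{-1}f_{t-1}(0)v^2y_{t-1}^2/(2T^2)$, kill the centered part as a martingale with small conditional variance, and then average $f_{t-1}(0)$ out to $f(0)$ against the slowly varying weight $\sigma^{-1}(t/T)\,(T^{-1/2}y_{t-1})^2$ (the paper does this through a uniform law $\sup_{s}|T^{-(1-\epsilon)}\sum_{t\le Ts}(f_{t-1}(0)-f(0))|\overset{p}{\to}0$ rather than your blocking/ergodic argument, but the content is the same).

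For the first display, however, your plan rests on a false premise and contains a step that does not close. You assert that no classical convergence-to-stochastic-integral result, ``e.g.\ \cite{Hansen1992}'', covers dependence and heteroskedasticity jointly; but \cite{Hansen1992} is precisely a result for dependent \emph{heterogeneous} arrays, and the paper's proof consists of applying it: write $\varsigma_t=(u_t,sgn(u_t))'=U_t\nu_t$ with $U_t=\mathrm{diag}(\sigma_t,1)$ deterministic and $\nu_t=(\varepsilon_t,sgn(\varepsilon_t))'$ stationary mixing, form the martingale approximation $\varsigma_t=\epsilon_t+m_{t-1}-m_t$, and invoke Hansen's Theorems 2.1 and 3.1--3.3 together with Lemma \ref{lemma 1}; the term $\frac{1}{T}\sum_t\varsigma_t m_t'$ then delivers exactly the one-sided long-run covariance whose $(1,2)$ entry is $\Gamma$. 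Your substitute argument has a genuine gap at step (c): writing $sgn(u_t)=m_t+g_{t-1}$ and summing by parts in $\frac{1}{T}\sum_t V_{t-1}g_{t-1}$ (with $V_{t-1}=\sum_{s<t}u_s$, $G_t=\sum_{s\le t}g_{s-1}$) does \emph{not} reduce the problem to partial-sum limits plus a bias; it yields $\frac{1}{T}V_TG_T-\frac{1}{T}\sum_t G_t u_t$, and the second term is again a ``level times increment'' sum in which the increment $u_t$ is weakly dependent (not a martingale difference under Assumptions \ref{assum 1}--\ref{assum 2}), i.e.\ an object of exactly the same type you started from, with its own bias $\lim T^{-1}\sum_t\sum_{s\le t}E(g_{s-1}u_t)$ to be tracked. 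To bottom this out you would have to perform a further martingale approximation of $u_t$ and re-derive the associated convergence and one-sided-covariance results under the deterministic $\sigma(\cdot)$ weighting — at which point you have reconstructed the Hansen machinery the paper simply cites. So the identification of the bias as $\Gamma$ (correct at the level of expectations, since the $m_t$-transform is unbiased and $E[V_{t-1}g_{t-1}]=\sum_{s<t}E(u_s\,sgn(u_t))$) is fine, but the convergence argument as sketched is incomplete/circular, and the ``main obstacle'' you flag is in fact resolved by applying the right existing theorem rather than by new machinery.
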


		As an immediate result of Lemma \ref{lemma 2}, we obtain
		$$
		H_{T}\left(  v\right)  \overset{d}{\rightarrow}H\left(  v\right)
		:=-v\left(  \int_{0}^{1}B_{\sigma}(r)dB_{2}(r)+\Gamma\right)
		+v^{2}\left(  f\left(  0\right)  \int_{0}^{1}\sigma^{-1}\left( r\right)
		B_{\sigma}^{2}\left(  r\right)  dr\right)  .
		$$
		Note that $H\left(  v\right)  $ is minimized at $\hat{v}=T\left( \hat{\gamma}_{LAD}-\gamma_{0}\right)  $, by the convexity lemma of \cite{Pollard1991}  and the arguments of \cite{Knight1989}, we  have the following theorem.
		\begin{theorem}
			\label{theorem 1} 
			Suppose Assumptions \ref{assum 1}-\ref{assum 3} hold true, 
			let $y_{t}$ be determined by (\ref{AR})
			and (\ref{HUF}) with $\gamma_{0}=1$, then we
			have
			\begin{align}
				T(\hat{\gamma}_{LAD}-1)\overset{d}{\rightarrow}
				\mathcal{D}(\sigma):= \frac{\int_{0}^{1}B_{\sigma	}(r)dB_{2}(r)+\Gamma}{2f(0)\int_{0}^{1}\sigma^{-1}(r){\ B_{\sigma}^{2}(r)}	dr}.
				\label{Limit of coeff under null}
			\end{align}
		\end{theorem}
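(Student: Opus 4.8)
The plan is to read off $T(\hat\gamma_{LAD}-1)$ as the minimizer of the convex random criterion $H_T(\cdot)$ and then invoke an argmin–continuity argument, since the weak limit of $H_T(\cdot)$ has already been identified. Write $H_T(v)=-v\,X_T+R_T(v)$, where $X_T=\frac1T\sum_{t=1}^T y_{t-1}\,sgn(u_t)$ and $R_T(v)=2\sum_{t=1}^T\int_0^{vT^{-1}y_{t-1}}\big[I(u_t\le s)-I(u_t\le 0)\big]\,ds$, and recall $\hat v:=T(\hat\gamma_{LAD}-\gamma_0)=\arg\min_v H_T(v)$. Observe first that $v\mapsto H_T(v)$ is convex, being $\sum_t|u_t-vT^{-1}y_{t-1}|$ minus the constant $\sum_t|u_t|$. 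By Lemma~\ref{lemma 2}, for each fixed $v$ we have $H_T(v)\overset{d}{\rightarrow}H(v)=-vA+v^2B$, with $A=\int_0^1 B_\sigma(r)\,dB_2(r)+\Gamma$ and $B=f(0)\int_0^1\sigma^{-1}(r)B_\sigma^2(r)\,dr$.

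Next I would upgrade the fixed-$v$ convergence to joint convergence of $\big(H_T(v_1),\dots,H_T(v_k)\big)$ for any finite collection $v_1,\dots,v_k$. This is essentially automatic: $X_T$ and, for each $v$, the quadratic functional to which $R_T(v)$ converges are continuous functionals of the bivariate partial-sum process $T^{-1/2}\sum_{t\le\lfloor T\cdot\rfloor}(u_t,sgn(u_t))'$ whose joint weak limit is supplied by Lemma~\ref{lemma 1}, so the continuous mapping theorem gives joint convergence of $\big(X_T,R_T(v_1),\dots,R_T(v_k)\big)\overset{d}{\rightarrow}\big(A,v_1^2B,\dots,v_k^2B\big)$ on a common probability space, hence $\big(H_T(v_j)\big)_j\overset{d}{\rightarrow}\big(H(v_j)\big)_j$ with $H$ convex. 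One should check that the $o_p(1)$ remainder incurred in passing from $R_T(v)$ to $v^2B$ — obtained by linearizing the indicator increments about the origin using the density and conditional-density restrictions of Assumption~\ref{assum 1}(ii) — holds uniformly over compact sets of $v$; this is a routine stochastic-equicontinuity estimate that also uses the mixing/moment controls of Assumption~\ref{assum 2}.

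Then I would apply the convexity lemma of \cite{Pollard1991}: finite-dimensional convergence of convex functions to a finite convex limit forces uniform convergence on compacta, so $H_T(\cdot)\overset{d}{\rightarrow}H(\cdot)$ in $C$ on compact sets. Since $f(0)>0$ by Assumption~\ref{assum 1}(ii) and $\int_0^1\sigma^{-1}(r)B_\sigma^2(r)\,dr>0$ almost surely — because $B_\sigma$ is a non-degenerate Gaussian process and $\sigma(\cdot)$ is bounded and bounded away from $0$ on a set of positive measure — we get $B>0$ a.s., so $H(v)$ is strictly convex with a unique minimizer $v^\ast=A/(2B)$; moreover $H(v)\to+\infty$ as $|v|\to\infty$, which (again via convexity) yields $\hat v=O_p(1)$ so that the minimization may be confined to a compact set. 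The argmax/argmin continuous-mapping theorem in the form used by \cite{Knight1989} then delivers $\hat v\overset{d}{\rightarrow}v^\ast$. Since under $\mathbb{H}_0$ we have $\hat v=T(\hat\gamma_{LAD}-1)$, this gives
\begin{equation}
T(\hat\gamma_{LAD}-1)\overset{d}{\rightarrow}\frac{A}{2B}=\frac{\int_0^1 B_\sigma(r)\,dB_2(r)+\Gamma}{2f(0)\int_0^1\sigma^{-1}(r)B_\sigma^2(r)\,dr}=\mathcal{D}(\sigma),
\end{equation}
which is the claim. Given Lemma~\ref{lemma 2}, the only delicate point is the uniform (over compacta) control of $R_T(v)$ and the a.s. positivity of $B$ ensuring a well-defined minimizer; once those are secured, the convexity machinery makes the conclusion essentially mechanical. (The genuinely heavy lifting — the invariance principle and the limits in Lemma~\ref{lemma 2} under heteroskedasticity and weak dependence — is already discharged upstream.)
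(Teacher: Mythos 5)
Your proposal is correct and follows essentially the same route as the paper: Lemma \ref{lemma 2} gives the pointwise weak limit of the convex criterion $H_T(v)$, and the conclusion is drawn via the convexity lemma of \cite{Pollard1991} together with the argmin continuous-mapping arguments of \cite{Knight1989}, exactly as the paper does. The only (harmless) redundancy is your call for a uniform-in-$v$ equicontinuity check: convexity plus finite-dimensional convergence already yields uniform convergence on compacta by Pollard's lemma, which is precisely why that route is used.
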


		To better observe the limiting distribution of $T(\hat{\gamma}_{LAD}-1)$, we choose
		\begin{equation}
			\Sigma^{1/2}=\left[
			\begin{array}
				[c]{cc}%
				\delta_{1} & 0\\
				\delta_{12}/\delta_{1} & \Delta^{1/2}/\delta_{1}%
			\end{array}
			\right]  \label{CHDC}%
		\end{equation}
		as a triangular decomposition of $\Sigma$, where $\Delta=\delta_{1}^{2}%
		\delta_{2}^{2}-\delta_{12}^{2}$. Correspondingly,  we can rewrite $B_{1}(\cdot)=\delta_{1}W_{1}(\cdot)$, $B_{2}(\cdot)=\delta_{12}\delta_{1}^{-1}W_{1}(\cdot)+\Delta^{1/2}\delta_{1}^{-1}W_{2}(\cdot)$ and $B_{\sigma}(\cdot)=\delta_{1}W_{\sigma}(\cdot)$ where $W_{\sigma}(r)=\int_{0}^{r}\sigma(s)dW_{1}(s)$.  It is not hard to  \textcolor{black}{show} that $W_{\sigma}(r)$ is normally distributed with mean 0 and variance $\int_{0}^{r}\sigma^{2}(s)ds$, which corresponds to the diffusion process generated by $dW_{\sigma}(r)=\sigma(r)dW_{1}(r)$ with $W_{\sigma}(0)=0$. As a result, $\int_{0}^{1}B_{\sigma}(r)dB_{2}(r)$ has the following decomposition
		\begin{equation}
			\int_{0}^{1}B_{\sigma}(r)dB_{2}(r) =\delta_{12}\int_{0}^{1}W_{\sigma}(r)dW_{1}(r) +\Delta^{1/2}\left( \int_{0}^{1}{W_{\sigma}^{2}}(r)dr\right)  ^{1/2}\vartheta, \label{BDC}%
		\end{equation}
		where  the second term is derived by the fact that $\int_0^1W_\sigma(r)dW_2(r)$ is a scale mixture of normals,
		and  $\vartheta$ is a standard normal random variable independent of $\int_{0}^{1}W_{\sigma}^{2}(r)dr$. When $\sigma(r)=1$, $\int_{0}^{1}W_{1}(r)d{W_{\sigma}}(r)=(W_{1}^{2}(1)-1)/2$, the decomposition (\ref{BDC}) degenerates to equation (16) in \cite{Herce1996}. 
		
		Based on (\ref{BDC}), the limiting distribution of $T(\hat{\gamma
		}_{LAD}-1)$ can be decomposed into 
		\begin{equation}
			\frac{\delta_{12}\int_{0}^{1}W_{\sigma}(r)dW_{1}(r)  +\Gamma}{2f(0)\delta_{1}^{2}\int_{0}^{1}\sigma^{-1}(r){\ W_{\sigma}^{2}(r)}dr}+\frac{\Delta^{1/2}}{2f(0)\delta_{1}^{2}}\frac{\left( \int_{0}^{1}{W_{\sigma}^{2}}(r)dr\right)  ^{1/2}}{\int_{0}^{1}\sigma^{-1}(r){W_{\sigma}^{2}}(r)dr}\vartheta. \label{DC1}%
		\end{equation}
		
		Recall the formula of the LS estimator $\hat{\gamma}_{LS}$ 
		\citep[see,][]{Phillipsperron1987}
		{under the conditions of Theorem \ref{theorem 1}}. By using Lemma \ref{lemma 1}, we can easily show that 
		\begin{equation}
			T(\hat{\gamma}_{LS}-1)\overset{d}{\rightarrow}\frac{\delta_{1}^{2}\int_{0}^{1}W_{\sigma}(r)dW_{\sigma}(r)+\Gamma_{2}}{\delta_{1}^{2}\int_{0}^{1}{W_{\sigma}^{2}}(r)dr}, \label{DC2}
		\end{equation}
		where $\Gamma_{2}=lim_{T\rightarrow\infty}\frac{1}{T}\sum_{i=1}^{T}\sum_{j=i+1}^{\infty}E(u_{i}u_{j})$. Therefore, by comparing (\ref{DC1}) with  (\ref{DC2}),  it is shown  that  the limiting distribution of $T(\hat{\gamma}_{LAD}-1)$ can be roughly regarded as the combination of  a ``some sort of unit root" distribution  and a  normal distribution. In addition,  the second term in (\ref{DC1}) vanishes in (\ref{DC2}) since $\Delta=0$ in the LS  case. We also note that the denominator $\int_{0}^{1}\sigma^{-1}(r){W_{\sigma}^{2}}(r)dr$ in (\ref{DC1}) has an extra term $\sigma^{-1}(\cdot)$, which plays the role of a weighting function and helps reduce the heteroskedastic impact from outliers, hence enhancing the robustness of the LAD estimator to some degree.
		
		\subsection{Asymptotic properties under the local to unity alternative hypothesis}
		Next, we derive the asymptotic local power function of $\hat{\gamma}_{LAD}-1$, which can  contribute to  {a} deeper understanding of  {the origins of the testing powers of unit root tests.}
		Specifically, we consider the following local departure from the null hypothesis
		\begin{equation}
			\mathbb{H}_{A}:\gamma_{0}=\exp\left(  -\frac{c}{T}\right)  \approx1-\frac{c}{T}%
			,c\geq0, \label{ALT}%
		\end{equation}
		where the non-negative parameter $c$ controls the degree of departure and  $T^{-1}$ is  the speed at which the local alternative deviates from the null. The null holds when $c=0$ and holds ``locally" as $T\rightarrow\infty$ for $c>0$. 
		
		Under $\mathbb{H}_{A}$, the data generated by  (\ref{AR}) and (\ref{HUF}) has the representation
		$
		y_{t}=\sum_{j=1}^{t}e^{-c\left(  t-j\right)  /T}\sigma_{j}\varepsilon
		_{j}
		$. As such, by applying Lemma \ref{lemma 1} (ii), the asymptotic behavior of $y_{\lfloor Tr \rfloor}$ scaled by $\sqrt{T}$  for $r \in [0,1]$ is 
		\begin{equation}
			\frac{1}{\sqrt{T}}y_{\left\lfloor Tr\right\rfloor }\overset{d}{\rightarrow
			}\mathcal{J}_{c,\sigma}(r)=\int_{0}^{r}e^{-c(r-s)}dB_{\sigma}(s),\label{AOU}%
		\end{equation}
		where $\mathcal{J}_{c,\sigma}(\cdot)$ is an Ornstein--Uhlenbeck (OU)-type process \textcolor{black}{with heteroskedastic increments} 
		and satisfies
		$d\mathcal{J}_{c,\sigma}(r)=-c\mathcal{J}_{c,\sigma}(r)+dB_{\sigma}(r)$.  Obviously, \textcolor{black}{a} larger value of $c$ implies \textcolor{black}{a} larger deviation of $\mathcal{J}_{c,\sigma}(r)$ from $B_\sigma(r)$.
		By using  {the} similar arguments to those of proving Theorem \ref{theorem 1}, we obtain the following  local to unity asymptotic result for $\hat{\gamma}_{LAD}$.
		\begin{theorem}
			\label{theorem 2}
			Suppose Assumptions \ref{assum 1}-\ref{assum 3} hold true,  let $y_{t}$ be determined by (\ref{AR})	and (\ref{HUF}) with $\gamma_{0}$ satisfying (\ref{ALT}), we then have
			\begin{equation}
				T(\hat{\gamma}_{LAD}-\gamma_{0})\overset{d}{\rightarrow}\mathcal{D}(c,\sigma) := \frac{\int_{0}	^{1}\mathcal{J}_{c,\sigma}(r)dB_{2}(r)+\Gamma}{2f(0)\int_{0}^{1}\sigma^{-1}(r){\ \mathcal{J}_{c,\sigma}^{2}(r)}dr}. \label{TH2}%
			\end{equation}
		\end{theorem}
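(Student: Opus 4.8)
The plan is to re-run the argument behind Theorem~\ref{theorem 1} almost verbatim, with the integrated regressor $\{y_{t-1}\}$ replaced by the near-integrated one and the limit $B_\sigma$ replaced everywhere by the Ornstein--Uhlenbeck-type process $\mathcal{J}_{c,\sigma}$. Keeping $v=T(\gamma-\gamma_0)$ but now with $\gamma_0=\exp(-c/T)$, solving (\ref{LAD0}) is still equivalent to minimizing $H_T(v)$ in (\ref{MINC}), and Knight's identity still yields the representation (\ref{HTV}); its two summands involve $y_{t-1}$ and $u_t$ only, and $u_t=y_t-\gamma_0 y_{t-1}=\sigma_t\varepsilon_t$ is unchanged, so the sole new ingredient is the limiting law of the partial sums of $y_{t-1}$. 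Since $y_t=\sum_{j=1}^{t}e^{-c(t-j)/T}u_j$ under $\mathbb{H}_A$ and $\mathcal{J}_{c,\sigma}$ is a continuous functional of $B_\sigma$ (indeed $\mathcal{J}_{c,\sigma}(r)=B_\sigma(r)-c\int_0^r e^{-c(r-s)}B_\sigma(s)\,ds$), Lemma~\ref{lemma 1}(ii) and the continuous mapping theorem upgrade (\ref{AOU}) to the joint convergence
\[
\left(\frac{1}{\sqrt{T}}y_{\left\lfloor Tr\right\rfloor},\ \frac{1}{\sqrt{T}}\sum_{t=1}^{\left\lfloor Tr\right\rfloor}sgn(u_t)\right)\overset{d}{\rightarrow}\left(\mathcal{J}_{c,\sigma}(r),\ B_2(r)\right),
\]
which, together with the joint convergence of $\tfrac{1}{\sqrt T}\sum_{t\leq\lfloor Tr\rfloor}(u_t,sgn(u_t))$ from Lemma~\ref{lemma 1}(ii), supplies precisely what (\ref{gbip}) supplied in the proof of Lemma~\ref{lemma 2}.

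First I would re-establish the two displays of Lemma~\ref{lemma 2} with $B_\sigma$ replaced by $\mathcal{J}_{c,\sigma}$. For the numerator, decompose $sgn(u_t)$ into a martingale-difference component (relative to the conditioning information $\{\varepsilon_{t-1},\varepsilon_{t-2},\dots\}$ of Assumption~\ref{assum 1}(ii)) plus a predictable remainder; the first part, together with the joint convergence above and the stochastic-integral convergence theory for dependent heterogeneous arrays of \cite{Hansen1992}, gives $\tfrac1T\sum_t y_{t-1}sgn(u_t)\overset{d}{\rightarrow}\int_0^1\mathcal{J}_{c,\sigma}(r)\,dB_2(r)+\Gamma$, where the drift $\Gamma=\lim_{T\to\infty}\tfrac1T\sum_{i=1}^T\sum_{j=i+1}^\infty E(u_i\,sgn(u_j))$ is collected from the predictable remainder and is literally the $c=0$ quantity, because the weights $e^{-c(t-1-j)/T}$ tend to $1$ at every fixed lag $t-1-j$. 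For the denominator, a first-order expansion of the conditional c.d.f.\ of $u_t$ at the origin --- legitimate by Assumption~\ref{assum 1}(ii), whose uniform-integrability clause controls the remainder and whose moment bound $E(f_{t-1}^r(0))<\infty$ lets $f_{t-1}(0)$ be averaged out to $f(0)$ --- gives
\[
2\sum_{t=1}^{T}\int_{0}^{vT^{-1}y_{t-1}}\!\left[I(u_t\leq s)-I(u_t\leq 0)\right]ds=f(0)\,v^2\,\frac{1}{T^2}\sum_{t=1}^{T}y_{t-1}^2\,\sigma^{-1}(t/T)+o_p(1),
\]
the weight $\sigma^{-1}(t/T)$ appearing because $u_t=\sigma(t/T)\varepsilon_t$ has conditional density $\sigma^{-1}(t/T)f_{t-1}(0)$ at zero; by the joint convergence the right-hand side tends to $v^2 f(0)\int_0^1\sigma^{-1}(r)\mathcal{J}_{c,\sigma}^2(r)\,dr$.

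Substituting into (\ref{HTV}) gives, for each fixed $v$,
\[
H_T(v)\overset{d}{\rightarrow}H_c(v):=-v\left(\int_0^1\mathcal{J}_{c,\sigma}(r)\,dB_2(r)+\Gamma\right)+v^2 f(0)\int_0^1\sigma^{-1}(r)\mathcal{J}_{c,\sigma}^2(r)\,dr.
\]
Since $H_T(\cdot)$ is convex with minimizer $\hat v=T(\hat\gamma_{LAD}-\gamma_0)$, while $H_c(\cdot)$ is a.s.\ strictly convex --- its quadratic coefficient is a.s.\ positive, as $f(0)>0$ and $\mathcal{J}_{c,\sigma}$ is a.s.\ nondegenerate --- with unique minimizer $\mathcal{D}(c,\sigma)$, the convexity lemma of \cite{Pollard1991} together with the arguments of \cite{Knight1989} yield $T(\hat\gamma_{LAD}-\gamma_0)\overset{d}{\rightarrow}\mathcal{D}(c,\sigma)$, i.e.\ (\ref{TH2}).

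The hard part, as in the $c=0$ case, will be the numerator: establishing the joint stochastic-integral convergence $\tfrac1T\sum_t y_{t-1}sgn(u_t)\overset{d}{\rightarrow}\int_0^1\mathcal{J}_{c,\sigma}dB_2+\Gamma$ requires checking that the near-integrated, unconditionally heteroskedastic, mixing array $\{y_{t-1}/\sqrt T\}$ is an admissible integrand (tightness together with asymptotic predictability) and that the infinite-lag sum defining $\Gamma$ can be truncated uniformly, which is exactly where the mixing/moment trade-off of Assumption~\ref{assum 2} is used. Everything else is either a direct transcription of the Theorem~\ref{theorem 1} calculations --- notably the uniform-in-$v$ control of the denominator term through the uniform integrability of $f_{t-1}(x_T)$ in Assumption~\ref{assum 1}(ii) --- or a routine continuous-mapping step once the joint weak limit $\mathcal{J}_{c,\sigma}$ is in place.
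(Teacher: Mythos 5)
Your proposal is correct and follows essentially the same route as the paper: the paper's proof of Theorem \ref{theorem 2} simply repeats the argument of Lemma \ref{lemma 2} with $B_\sigma$ replaced by $\mathcal{J}_{c,\sigma}$ (using the near-integrated limit (\ref{AOU})), obtaining the same two limiting displays for the numerator and denominator terms of $H_T(v)$, and then concludes via the convexity argument of \cite{Pollard1991} and \cite{Knight1989}, exactly as you do.
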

		
		Compared with (\ref{Limit of coeff under null}), the asymptotic result (\ref{TH2}) additionally depends on the location parameter $c$. When $c=0$, Theorem \ref{theorem 2} degenerates to Theorem \ref{theorem 1}; when $c > 0$, by using the approximation $exp{\left( -\frac{c}{T}\right) }\approx 1- \frac{c}{T}$, we obtain
		\begin{align}
			T(\hat{\gamma}_{LAD}-1)\overset{d}{\rightarrow}-c+\frac{\int_{0}^{1}\mathcal{J}_{c,\sigma}(r)dB_{2}(r)+\Gamma}{2f(0)\int_{0}^{1}\sigma^{-1}(r){\ \mathcal{J}_{c,\sigma}^{2}(r)}dr}.
			\label{limit coeff local power rewrite}
		\end{align}
		This result is considerably helpful in studying the behavior of $T(\hat{\gamma}_{LAD}-1)$ under  {the} alternatives. The larger value of  $c$ means  {the} larger negative shift in the center (the first term) and  {the} larger deviation  {(in shape)} from the null distribution (the second term).
		Hence, it is natural to develop unit root tests by utilizing  the difference performance of $\hat{\gamma}_{LAD} - 1$ under the null and  under the alternatives.

		
		\section{The LAD-based bootstrap unit root tests}
		\label{section theorem} 
		From Theorem \ref{theorem 1}, 
		the presence of heteroskedasticity    complicates the unit root testing because the limiting distribution of $T(\hat{\gamma}_{LAD}-1)$ is unobtainable in practice, and, explicitly or implicitly, involves the time-varying volatility $\sigma(\cdot)$ that enters in a highly nonlinear fashion.  Therefore, in this section, we  develop an adaptive block bootstrap (ABB)  procedure to calculate the critical values  for conducting feasible LAD-based unit root tests in the presence of unconditional heteroskedasticity.
		
		In this paper, for convenience, we pay our attention  to the classic coefficient-based test statistic $L_T$ and the t-ratio test statistic $t_T$ \citep{Koenker2004}, which are defined as follows respectively: 
		\begin{align}
			&L_T=T(\hat{\gamma}_{LAD}-1), \label{LT stat}	
			\\
			&t_T = 2\widehat{f(0)}\left(Y_{-1}^\prime P_C Y_{-1} \right)^{1/2}\left(\hat{\gamma}_{LAD} - 1 \right), \label{tT stat}
		\end{align}
		where $\widehat{f(0)}$ is  a consistent estimator of $f(0)$, $Y_{-1}$ is  the vector of the lagged dependent variables ${y_t}$,  $P_C = I_{T-1} - \textbf{1}_{T-1}(\textbf{1}_{T-1}^\prime \textbf{1}_{T-1})^{-1}\textbf{1}_{T-1}^\prime$, $\textbf{1}_{T-1}$ and $I_{T-1}$ are $T-1$ dimensional vector of ones and identity matrix, respectively.  
		
		\subsection{Infeasible adaptive block bootstrap}
		\label{Sec:infeasible ABB}
		The core issue of designing a suitable bootstrap procedure is to generate a pseudo series that can achieve the same functional results as in Lemma \ref{lemma 1}.
		Clearly, if we can mimic the behavior of the process $\{\varepsilon_{t}\}$ by some pseudo series $\{\varepsilon_{t}^{*}\}$, then the  {properties} of $\{u_{t}\}$ can be replicated by $\{\sigma_{t}\varepsilon_{t}^{*}\}$ immediately. Since $\{\varepsilon_{t}\}$ is a stationary mixing process, we apply  the moving block bootstrap method established by \cite{Kunsch1989} to help realize this goal, which  samples the blocks with replacement and then pastes the resampled blocks together to form 
		a pseudo series. This method makes no parametric assumptions, and  has been widely applied to deal with serial dependence in time series, see \cite{Paparoditis2003} and \cite{Hounyo2023}.

		Let $E^{*}(\cdot)$ be the expectation operator  {taken with respect to} the bootstrap probability space conditional on the data $\{y_{t}\}_{t=1}^{T}$. For the time being, $\{\sigma_{t}\}_{t=1}^{T}$ is assumed to be known in advance. Formally, the ABB procedure for approximating the distributions of the $L_{T}$ ($t_T$) statistic  takes the following steps:
		
		(i) Estimate $\hat{\gamma}_{LAD}$ by (\ref{LAD0}), compute the test statistic $L_{T}$ ($t_T$) according to \eqref{LT stat} (\eqref{tT stat}), and obtain the standardized residual $\hat{\varepsilon}_{t}=\hat{u}_{t}/\sigma_{t}$, where $\hat{u}_{t}=y_{t} -\hat{\gamma}_{LAD}y_{t-1}$.
		
		(ii) Choose a positive integer $b_{T}<T$, and let $i_{0},i_{1},\cdots,i_{k}$ be drawn independently and  {uniformly from}
		the set $\mathcal{W}=\{-T,-T+1,\cdots,-1-b_{T},1,2,\cdots,T-b_{T}\}$, where $k=\lfloor(T-1)/b_{T}\rfloor$ is the number of blocks and  $b_{T}$ is the block size.
		
		(iii) The bootstrap pseudo series $\{y_{t}^{*}\}_{t=1}^{T}$ is generated by $y_{t}^{*}=y_{t-1}^{*}+\sigma_{t}\varepsilon_{t}^{*}$ with $y_{0}^{*}=0$ and
		\begin{equation}
			\varepsilon_{t}^{*}=\left\{
			\begin{array}
				[c]{ll}%
				\hat{\varepsilon}_{i_{m}+s}, & i_{m}>0\\
				-\hat{\varepsilon}_{i_{m}+s+T+1}, & i_{m}<0
			\end{array}
			\right., \label{btr}%
		\end{equation}
		where $m=\lfloor(t-1)/b_{T}\rfloor$ and $s=t-mb_{T}-1$, $t=1,2,\ldots,T.$
		
		(iv) Compute the bootstrap test statistic $L_{T}^{*}$ ($t_T^*$) based on the bootstrap sequence $\{y_{t}^{*}\}_{t=1}^{T}$.
		
		(v) Repeat (ii)-(iv) $B$ times (e.g., $B=499$), and  then obtain the bootstrap statistic sequence $\{ L_{T,j}^{*}\}  _{j=1}^{B}$($\{ t_{T,j}^{*}\}  _{j=1}^{B}$). The bootstrap $p$-value is  given by  $p^{*}=B^{-1}\sum_{j=1}^{B}I(  L_{T,j}^{*}<L_{T})$ ($p^{*}=B^{-1}\sum_{j=1}^{B}I(  t_{T,j}^{*}<t_{T})$), and the null is rejected whenever $p^{*}$ is smaller than the given significance level.
		

		In (ii), we randomly draw indices from the set $\mathcal{W}$, which, by observing (\ref{btr}), has a one-to-one mapping relationship with $\mathcal{E}=\{-\hat{\varepsilon}_{1},\cdots,-\hat{\varepsilon}_{T-b_{T}},\hat{\varepsilon}_{1},\cdots,\hat{\varepsilon}_{T-b_{T}}\}$. We augment the sampled pool by adding $\{-\hat{\varepsilon}_{i}\}_{t=1}^{T-b_{T}}$ to  ensure that the bootstrap errors satisfy the conditions of both zero mean and zero median,  {a similar construction can be found in \cite{Moreno2000}.} 
		
		
		
		The following bootstrap invariance principle shows that the \textcolor{black}{partial sum of} pseudo vector series $\left\{\left(   \sigma_{t}\varepsilon_{t}^{*},sgn( \sigma_{t}\varepsilon_{t}^{*})\right)  ^{\prime}\right\}  _{t=1}^{T}$  \textcolor{black}{behaves} the same as \textcolor{black}{that of the} true ones $\left\{  \left( u_{t},sgn(u_{t})\right)  ^{\prime}\right\}  _{t=1}^{T}$  {asymptotically}.

		\begin{lemma}
			\label{lemma 3}
			Suppose Assumptions \ref{assum 1}-\ref{assum 3} hold true, let $y_{t}$ be determined by (\ref{AR}) and
			(\ref{HUF}) with $\gamma_{0}$ satisfying (\ref{ALT}), ${1}/{b_{T}}+b_{T}/{T}\rightarrow0$ as
			$T\rightarrow\infty$, we have
			\begin{align}
				\frac{1}{\sqrt{T}}\sum_{t=1}^{\lfloor Tr \rfloor}\left(  \varepsilon_{t}^{*	},sgn(\varepsilon_{t}^{*})\right)  ^{\prime}\overset{d^{*}}{\rightarrow}\left(  B_{1}(r),B_{2}(r)\right)  ^{\prime}; \label{bsbip0}	\\	\frac{1}{\sqrt{T}}\sum_{t=1}^{\lfloor Tr \rfloor}\left(  \sigma_{t}\varepsilon_{t}^{*	},sgn(\sigma_{t}\varepsilon_{t}^{*})\right)^{\prime}\overset{d^{*}}{\rightarrow}\left(  B_{\sigma}(r),B_{2}(r)\right)^{\prime}. \label{bgbip}
			\end{align}
		\end{lemma}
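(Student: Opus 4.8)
The plan is to recast the resampling in \eqref{btr} as a moving block bootstrap with an independent fair sign attached to each block, so that the normalized bootstrap partial sums become a sum of conditionally i.i.d., conditionally mean-zero contributions, and then to import the Gaussian limit from Lemma~\ref{lemma 1}. Drawing $i_m$ uniformly from $\mathcal{W}$ is equivalent to drawing independently a sign $\xi_m\in\{-1,+1\}$ with $P^{*}(\xi_m=1)=1/2$ and a start index $\ell_m$ uniform on $\{1,\dots,T-b_T\}$, after which \eqref{btr} reads $\varepsilon^{*}_{mb_T+s+1}=\xi_m\hat\varepsilon_{\ell_m+s}$ and $sgn(\varepsilon^{*}_{mb_T+s+1})=\xi_m\,sgn(\hat\varepsilon_{\ell_m+s})$ for $s=0,\dots,b_T-1$. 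Two facts will be used throughout: (a) since $E^{*}\xi_m=0$, every resampled block, and hence every partial sum $T^{-1/2}\sum_{t\le\lfloor Tr\rfloor}(\varepsilon^{*}_t,sgn(\varepsilon^{*}_t))'$, has \emph{exactly} zero conditional mean, so the symmetric augmentation of the sampling pool makes recentering unnecessary; (b) $sgn(\sigma_t\varepsilon^{*}_t)=sgn(\varepsilon^{*}_t)$ because $\sigma_t>0$, which is why the second coordinate of \eqref{bgbip} coincides with that of \eqref{bsbip0}.

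I would first prove \eqref{bsbip0} for the \emph{oracle} resampling in which $\hat\varepsilon_j$ is replaced by the true $\varepsilon_j$. Writing $S_m:=\sum_{s=0}^{b_T-1}(\varepsilon_{\ell_m+s},sgn(\varepsilon_{\ell_m+s}))'$, the normalized oracle partial sum equals $T^{-1/2}\sum_{m}\xi_m S_m$ up to an incomplete last block of conditional order $O_{p^{*}}(\sqrt{b_T/T})$. A conditional CLT for the i.i.d.\ array $\{\xi_m S_m\}$ requires (i) the conditional covariance convergence $b_T^{-1}E^{*}[S_mS_m']=\{b_T(T-b_T)\}^{-1}\sum_{\ell=1}^{T-b_T}S(\ell)S(\ell)'\overset{p}{\rightarrow}\Sigma$, where $S(\ell):=\sum_{s=0}^{b_T-1}(\varepsilon_{\ell+s},sgn(\varepsilon_{\ell+s}))'$; this is the usual moving-block long-run-variance consistency for the bivariate process $(\varepsilon_t,sgn(\varepsilon_t))$ and is precisely where the mixing and moment conditions of Assumption~\ref{assum 2} enter (the sign randomization replaces the ordinary centered block covariance by $b_T^{-1}E^{*}[S_mS_m']$, the two differing by a term of order $O_p(b_T/T)\to0$); and (ii) a conditional Lindeberg condition, obtained by truncation together with $T^{-1}\sum_{j}|\varepsilon_j|^{p}=O_p(1)$ from $E|\varepsilon_t|^{p}<\infty$. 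Finite-dimensional convergence follows, and tightness in $D[0,1]^2$ from the same block-sum moment bounds (or by citing a moving-block invariance principle); the limit is $\Sigma^{1/2}(W_1(r),W_2(r))'=(B_1(r),B_2(r))'$, exactly as in Lemma~\ref{lemma 1}(i).

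Next I would show that the feasible scheme \eqref{btr} differs from the oracle one by a $o_{p^{*}}(1)$ amount. By Theorem~\ref{theorem 2}, $\hat\gamma_{LAD}-\gamma_0=O_p(T^{-1})$; moreover $\max_{1\le t\le T}|y_{t-1}|=O_p(\sqrt T)$ by \eqref{AOU} and continuous mapping, and $\sigma(\cdot)$ is bounded and bounded away from $0$ by Assumption~\ref{assum 3}. Since $\hat\varepsilon_j=\varepsilon_j-(\hat\gamma_{LAD}-\gamma_0)y_{j-1}/\sigma_j$, the difference of the two normalized partial sums is the sign-randomized block sum of the pair $(\hat\varepsilon_j-\varepsilon_j,\ sgn(\hat\varepsilon_j)-sgn(\varepsilon_j))$, which has zero conditional mean, so Chebyshev in the bootstrap probability suffices. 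For the first coordinate the conditional variance of the normalized sum is $\lesssim (b_T/T)\sum_{j}(\hat\varepsilon_j-\varepsilon_j)^2=(b_T/T)(\hat\gamma_{LAD}-\gamma_0)^2\sum_{j}y_{j-1}^{2}/\sigma_j^{2}=O_p(b_T/T)$. For the sign coordinate, $sgn(\hat\varepsilon_j)\neq sgn(\varepsilon_j)$ only when $|\varepsilon_j|\le|\hat\gamma_{LAD}-\gamma_0|\,|y_{j-1}|/\sigma_j$, so by the boundedness of the density of $\varepsilon_t$ at the origin (Assumption~\ref{assum 1}(ii)) and $\max_t|y_{t-1}|=O_p(\sqrt T)$ the number $N_T$ of such $j$ is $O_p(\sqrt T)$; a crude block bound then gives a conditional variance of order $O_p(b_T T^{-1/2})$, which is sharpened to $o_{p^{*}}(1)$ by the argument in the last paragraph. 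This gives \eqref{bsbip0}, and then \eqref{bgbip} follows from it exactly as Lemma~\ref{lemma 1}(ii) follows from \ref{lemma 1}(i): since $\sigma_t=\sigma(t/T)$ is bounded and of bounded variation on $[0,1]$ (smooth apart from finitely many jumps), summation by parts gives $T^{-1/2}\sum_{t\le\lfloor Tr\rfloor}\sigma_t\varepsilon^{*}_t=\int_0^{r}\sigma(s)\,dV^{*}_T(s)+o_{p^{*}}(1)$ with $V^{*}_T(r):=T^{-1/2}\sum_{t\le\lfloor Tr\rfloor}\varepsilon^{*}_t\overset{d^{*}}{\rightarrow}B_1(r)$, and combining with $sgn(\sigma_t\varepsilon^{*}_t)=sgn(\varepsilon^{*}_t)$ and the continuous mapping theorem for the stochastic integral gives $T^{-1/2}\sum_{t\le\lfloor Tr\rfloor}(\sigma_t\varepsilon^{*}_t,sgn(\sigma_t\varepsilon^{*}_t))'\overset{d^{*}}{\rightarrow}(B_\sigma(r),B_2(r))'$.

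The step I expect to be the main obstacle is this last sign-approximation. A sign flip at time $j$ writes $sgn(\hat\varepsilon_j)-sgn(\varepsilon_j)=-2\,sgn(\varepsilon_j)\,I(|\varepsilon_j|\le|\hat\gamma_{LAD}-\gamma_0|\,|y_{j-1}|/\sigma_j)$, and because these flips occur exactly where $|\varepsilon_j|$ is minute—so that $sgn(\varepsilon_j)$ is, conditionally, nearly symmetric—the total of $sgn(\hat\varepsilon_j)-sgn(\varepsilon_j)$ over a block behaves like the square root of the number of flips in the block rather than that number, which cuts $\sum_{\ell}\big(\sum_{s}[sgn(\hat\varepsilon_{\ell+s})-sgn(\varepsilon_{\ell+s})]\big)^{2}$ from the trivial $O_p(b_T^{2}N_T)$ down to $O_p(b_TN_T)=O_p(b_T\sqrt T)$, making the conditional variance $O_p(N_T/T)=O_p(T^{-1/2})\to0$ under only $1/b_T+b_T/T\to0$. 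Making this rigorous—handling the statistical dependence between the sign-flip indicators and $\hat\gamma_{LAD}$ together with the mixing of $\{sgn(\varepsilon_j)\}$—is the part with no analogue in least-squares bootstraps and the reason Assumption~\ref{assum 1}(ii) is imposed. A secondary technical point is verifying the conditional long-run-variance consistency and the conditional Lindeberg condition in step two for the \emph{bivariate} process $(\varepsilon_t,sgn(\varepsilon_t))$ under only $E|\varepsilon_t|^{p}<\infty$ with $p>2$.
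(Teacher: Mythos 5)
Your overall architecture (an oracle block-bootstrap CLT for the bivariate process $(\varepsilon_t,sgn(\varepsilon_t))$, negligibility of the estimation error in $\hat\varepsilon_t$, then the extension to the $\sigma_t$-weighted sums) parallels the paper's proof, and several of your observations match it: the equivalence of drawing from $\mathcal{W}$ with attaching an independent fair sign to each block, the exact zero conditional mean, $sgn(\sigma_t\varepsilon_t^{*})=sgn(\varepsilon_t^{*})$, and the $O_{p^{*}}(\sqrt{b_T/T})$ bound for the level coordinate. (The paper verifies the conditions of a multivariate invariance principle of Phillips and Durlauf and invokes K\"unsch/Lahiri for the conditional covariance rather than arguing via a conditional Lindeberg CLT; that difference is immaterial.)

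The genuine gap is in your sign-coordinate step. First, the identity is not right: $sgn(\hat\varepsilon_j)-sgn(\varepsilon_j)=-2\,sgn(\varepsilon_j)$ holds only on the one-sided event that $\varepsilon_j$ lies between $0$ and $(\hat\gamma_{LAD}-\gamma_0)y_{j-1}/\sigma_j$, not on $\{|\varepsilon_j|\le|\hat\gamma_{LAD}-\gamma_0|\,|y_{j-1}|/\sigma_j\}$. Consequently, conditional on a flip at time $j$, $sgn(\varepsilon_j)$ equals the sign of $(\hat\gamma_{LAD}-\gamma_0)y_{j-1}$ deterministically — it is not ``nearly symmetric''. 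Since $y_{j-1}$ is a near-unit-root path whose sign is persistent over stretches of length of order $T\gg b_T$, all flips within a typical block contribute with the same sign, so the within-block total of $sgn(\hat\varepsilon_j)-sgn(\varepsilon_j)$ is of the order of the number of flips in the block, not its square root. Your claimed sharpening of $\sum_\ell\bigl(\sum_s[sgn(\hat\varepsilon_{\ell+s})-sgn(\varepsilon_{\ell+s})]\bigr)^2$ from $O_p(b_T^2N_T)$ to $O_p(b_TN_T)$ therefore fails as argued, and the crude bound $O_p(b_TT^{-1/2})$ you fall back on is not $o_p(1)$ under the lemma's only restriction $b_T/T\to0$ (take $b_T\sim T^{3/4}$). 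The conclusion is still true and can be rescued by a sparsity argument — the flip intensity per observation is $O_p(T^{-1/2})$, so per-block flip counts are $O_p(1+b_TT^{-1/2})$, giving $\sum_\ell\bigl(\sum_s[\cdot]\bigr)^2=O_p(b_T\sqrt{T}+b_T^2)$ and a conditional variance of order $O_p(T^{-1/2}+b_T/T)$ — but making that rigorous requires controlling the joint dependence of the flip indicators on $\hat\gamma_{LAD}$ and on the path $\{y_{j-1}\}$, which you have not supplied. The paper sidesteps the issue entirely by smoothing: it replaces $sgn(\cdot)$ by a regular sequence $sgn_x(\cdot)$ with bounded derivative (Phillips, 1995; Xiao, 2012), expands $sgn_x(\hat\varepsilon_t)=sgn_x(\varepsilon_t)-(\hat\gamma_{LAD}-\gamma_0)\frac{y_{t-1}}{\sigma_t}sgn_x^{\prime}(\varepsilon_{*t})$, and bounds the sign correction exactly like the level correction, i.e.\ $O_{p^{*}}(\sqrt{b_T/T})$. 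You should either adopt such a smoothing device or work out the sparsity argument in full; as written, the key negligibility step rests on a cancellation that does not occur.
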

		
		Based on Lemma \ref{lemma 3}, it is evident that the unit root pseudo series $\{y_t^*\}$, generated by the ABB procedure, adaptively retains the same unconditional heteroskedasticity and mixing dependence as in $\{y_t\}$, without assuming any parametric structures on them. Hence, the resampling procedure is data-adaptive.
		
		\begin{remark}
			Dependent wild bootstrap (DWB) of \cite{Shao2010} can also  be adopted to capture serial dependence  {and unconditional heteroskedasticity}, however, it does not work effectively in \textcolor{black}{the LAD} framework.  The key to  the validity of the  bootstrap procedure is to  {ensure that the pseudo series achieves the same asymptotic result}
			as displayed in (\ref{bgbip}). For DWB, the  bootstrap pseudo series drawn from $\{u_t\}_{t=1}^T$  takes the form of  $w_t {u_t}$, where $w_t$  is a random weight drawn from the multivariate normal distribution with mean 0 and covariance matrix $\Sigma$, whose $(i,j)${-th}  element  is given by $ K\left( (i-j)/q_T\right) $ with the  kernel function $K(\cdot)$ and the truncation parameter $q_T$. It is not hard to find that $T^{-1/2}\sum_{t=1}^{\lfloor Tr \rfloor}sgn(w_tu_t)$ differs from $T^{-1/2}\sum_{t=1}^{\lfloor Tr \rfloor}sgn(\varepsilon_t)$ since 	$$E^*\left(\frac{1}{\sqrt{T}} \sum_{t=1}^{\lfloor Tr \rfloor}\left( sgn(\varepsilon_t) -  sgn(w_t {u_t})\right) \right)  = E^*\left(\frac{1}{\sqrt{T}} \sum_{t=1}^{\lfloor Tr \rfloor}2 {I}(w_t < 0) sgn(\varepsilon_t)\right) = \frac{1}{\sqrt{T}}\sum_{t=1}^{\lfloor Tr \rfloor}sgn(\varepsilon_t).$$
		\end{remark}
		
		Before illustrating the asymptotic properties of the bootstrap test statistics, we first present the limiting distribution of the t-ratio test statistic in the following corollary, which follows directly from Theorem \ref{theorem 2}.
		
		\begin{corollary}
			Suppose Assumptions \ref{assum 1}-\ref{assum 3} hold true,  let $y_{t}$ be determined by (\ref{AR})	and (\ref{HUF}) with $\gamma_{0}$ satisfying (\ref{ALT}), we then have
			$$
			t_T \overset{d}{\rightarrow}\mathcal{B}(c,\sigma):=\left(\int_{0}^{1}\mathcal{J}_{c,\sigma}^2(r)dr - \left(\int_{0}^{1}\mathcal{J}_{c,\sigma}(r)dr \right)^2  \right)^{1/2}\left[-2f(0)c+\frac{\int_{0}^{1}\mathcal{J}_{c,\sigma}(r)dB_{2}(r)+\Gamma}{\int_{0}^{1}\sigma^{-1}(r){\ \mathcal{J}_{c,\sigma}^{2}(r)}dr}\right].$$
		\end{corollary}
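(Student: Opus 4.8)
The plan is to factor $t_T$ into pieces whose joint limit is already known and then apply the continuous mapping theorem together with Slutsky's lemma. Writing $P_C$ as the demeaning projection, $Y_{-1}^\prime P_C Y_{-1}$ equals the centred sum of squares $\sum_t y_{t-1}^2-(T-1)^{-1}\big(\sum_t y_{t-1}\big)^2$, so that
\[
t_T=2\widehat{f(0)}\left(\frac{1}{T^2}Y_{-1}^\prime P_C Y_{-1}\right)^{1/2}\cdot T(\hat{\gamma}_{LAD}-1).
\]
The first factor satisfies $\widehat{f(0)}\overset{p}{\rightarrow}f(0)$ by assumption (the residuals $\hat u_t$ differ from $u_t$ by an $O_p(T^{-1/2})$ term under $\mathbb{H}_A$, so consistency is unaffected). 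For the last factor, under $\mathbb{H}_A$ I would decompose $T(\hat{\gamma}_{LAD}-1)=T(\hat{\gamma}_{LAD}-\gamma_0)+T(\gamma_0-1)$ and combine Theorem \ref{theorem 2} with $T(\gamma_0-1)=-c+o(1)$ to get $T(\hat{\gamma}_{LAD}-1)\overset{d}{\rightarrow}-c+\mathcal{D}(c,\sigma)$, exactly as in (\ref{limit coeff local power rewrite}).

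For the middle factor, (\ref{AOU}) and the continuous mapping theorem give $T^{-2}\sum_t y_{t-1}^2\overset{d}{\rightarrow}\int_0^1\mathcal{J}_{c,\sigma}^2(r)dr$ and $T^{-3/2}\sum_t y_{t-1}\overset{d}{\rightarrow}\int_0^1\mathcal{J}_{c,\sigma}(r)dr$, hence $T^{-2}Y_{-1}^\prime P_C Y_{-1}\overset{d}{\rightarrow}\int_0^1\mathcal{J}_{c,\sigma}^2(r)dr-\big(\int_0^1\mathcal{J}_{c,\sigma}(r)dr\big)^2$; taking square roots (a continuous map, the limit being a.s. strictly positive) yields the limit of the middle factor. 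Crucially, all of these limit objects, together with the numerator $\int_0^1\mathcal{J}_{c,\sigma}(r)dB_2(r)+\Gamma$ and the denominator of $\mathcal{D}(c,\sigma)$, are continuous functionals of the single pair of limiting processes $(B_\sigma,B_2)$ supplied by Lemma \ref{lemma 1}(ii), so the three factors above converge jointly; one final application of the continuous mapping theorem gives
\[
t_T\overset{d}{\rightarrow}2f(0)\left(\int_0^1\mathcal{J}_{c,\sigma}^2(r)dr-\Big(\int_0^1\mathcal{J}_{c,\sigma}(r)dr\Big)^2\right)^{1/2}\big(-c+\mathcal{D}(c,\sigma)\big),
\]
which, after inserting $\mathcal{D}(c,\sigma)$ and cancelling the factor $2f(0)$ against the $2f(0)$ in its denominator, is precisely $\mathcal{B}(c,\sigma)$.

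The one delicate point — and the main obstacle — is justifying the joint convergence of $T(\hat{\gamma}_{LAD}-1)$ with the sample moments $T^{-2}\sum_t y_{t-1}^2$ and $T^{-3/2}\sum_t y_{t-1}$, since $\hat v=T(\hat{\gamma}_{LAD}-\gamma_0)$ is produced by an argmin and is not a manifestly continuous functional of the limiting processes. I would handle this by noting that in the limit $\hat v$ is the (a.s.\ unique) minimiser of the convex limit objective $H(\cdot)$, which is itself a measurable functional of $(B_\sigma,B_2)$; consequently the convexity/argmin apparatus of \cite{Pollard1991} and \cite{Knight1989} already invoked in the proof of Theorem \ref{theorem 2} can be applied to the vector $\big(H_T(\cdot),\,T^{-2}\sum_t y_{t-1}^2,\,T^{-3/2}\sum_t y_{t-1}\big)$ jointly rather than to $H_T(\cdot)$ alone, after which Slutsky's lemma absorbs $\widehat{f(0)}\overset{p}{\rightarrow}f(0)$ and the displayed limit follows.
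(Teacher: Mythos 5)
Your proof is correct and takes essentially the same route the paper intends: the paper offers no separate argument, stating only that the corollary ``follows directly from Theorem \ref{theorem 2},'' and your factorization of $t_T$ into $2\widehat{f(0)}$, the scaled demeaned quadratic form $T^{-2}Y_{-1}^\prime P_C Y_{-1}$, and $T(\hat{\gamma}_{LAD}-1)$, combined with (\ref{AOU}), (\ref{limit coeff local power rewrite}) and the continuous mapping theorem, is exactly that argument spelled out. Your explicit handling of the joint convergence of the argmin with the sample moments via the convexity-lemma apparatus of Pollard/Knight is a detail the paper leaves implicit, and it is resolved correctly.
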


		When  $c = 0$, $\mathcal{J}_{c,\sigma}(r)$ becomes $B_{\sigma}(r)$ and the limiting null distribution $\mathcal{B}(0,\sigma)$  of  $t_T^d$ is obtained. The following Theorem \ref{theorem 3} shows the asymptotic results of the bootstrap test statistics $L_T^*$ and $t_T^*$.

		\begin{theorem}
			\label{theorem 3}
			Suppose Assumptions \ref{assum 1}-\ref{assum 3} hold true,
			let $y_{t}$ be determined by (\ref{AR})	and (\ref{HUF}) with $\gamma_{0}$  satisfying (\ref{ALT}), ${1}/{b_{T}}+b_{T}/{T}\rightarrow0$ as
			$T\rightarrow\infty$, we then have $ L_{T}^{*}\overset{d^*}{\rightarrow} \mathcal{D}(\sigma)$ and  $t_T^* \overset{d^*}{\rightarrow} \mathcal{B}(0,\sigma)$.
		\end{theorem}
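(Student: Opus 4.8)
The strategy is to reproduce, inside the bootstrap probability space, the argument used for Theorem \ref{theorem 1} and for the corollary, with the bootstrap invariance principle of Lemma \ref{lemma 3} playing the role of Lemma \ref{lemma 1}. The essential observation is that, by step (iii) of the ABB scheme, the pseudo data obey $y_t^*=y_{t-1}^*+\sigma_t\varepsilon_t^*$ with $y_0^*=0$, i.e.\ $\{y_t^*\}$ is an \emph{exact} unit root process under $P^{*}$ whatever the true $\gamma_0$; this is precisely why the bootstrap limits are the $c=0$ quantities $\mathcal{D}(\sigma)$ and $\mathcal{B}(0,\sigma)$ rather than $\mathcal{D}(c,\sigma),\mathcal{B}(c,\sigma)$.

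Write $v^{*}:=T(\hat{\gamma}_{LAD}^{*}-1)$. By Knight's identity the bootstrap LAD criterion is
\[
H_T^{*}(v)=-v\Big(\tfrac1T\sum_{t=1}^{T}y_{t-1}^{*}\,sgn(u_t^{*})\Big)+2\sum_{t=1}^{T}\int_0^{vT^{-1}y_{t-1}^{*}}\big[I(u_t^{*}\le s)-I(u_t^{*}\le 0)\big]\,ds,
\]
with $u_t^{*}=\sigma_t\varepsilon_t^{*}$, which is convex in $v$ and minimized at $v^{*}$. The core step is the bootstrap analogue of Lemma \ref{lemma 2}:
\[
\tfrac1T\sum_{t=1}^{T}y_{t-1}^{*}\,sgn(u_t^{*})\overset{d^{*}}{\rightarrow}\int_0^1 B_\sigma(r)\,dB_2(r)+\Gamma,
\]
\[
2\sum_{t=1}^{T}\int_0^{vT^{-1}y_{t-1}^{*}}\big[I(u_t^{*}\le s)-I(u_t^{*}\le 0)\big]\,ds\overset{d^{*}}{\rightarrow} v^{2}f(0)\int_0^1\sigma^{-1}(r)B_\sigma^{2}(r)\,dr.
\]
Both are obtained by running the Lemma \ref{lemma 2} argument verbatim under $P^{*}$: Lemma \ref{lemma 3} supplies $T^{-1/2}\sum_{t\le\lfloor Tr\rfloor}(u_t^{*},sgn(u_t^{*}))\overset{d^{*}}{\rightarrow}(B_\sigma,B_2)$ and in particular $T^{-1/2}y_{\lfloor Tr\rfloor}^{*}\overset{d^{*}}{\rightarrow}B_\sigma$, and a continuous-mapping / stochastic-integral argument then produces $\int_0^1 B_\sigma\,dB_2$ plus a bias equal to $\Gamma$; for the occupation-type second term one uses that, conditional on the data, $\varepsilon_t^{*}$ is drawn from the sign-symmetrized residual empirical law, whose local behaviour at $0$ is $E^{*}[I(\varepsilon_t^{*}\le x)-I(\varepsilon_t^{*}\le 0)]=f(0)x+o(x)$ uniformly over $x\to 0$ (by $\hat{\varepsilon}_t\approx\varepsilon_t$ and the uniform-integrability part of Assumption \ref{assum 1}(ii)), which reduces the term to $f(0)v^{2}\,\tfrac1T\sum_{t}\sigma^{-1}(t/T)(T^{-1/2}y_{t-1}^{*})^{2}$ plus a conditionally mean-zero remainder that is $o_{p^{*}}(1)$ by a conditional-variance bound; a Riemann-sum argument delivers the stated limit.

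Combining the two displays gives $H_T^{*}(v)\overset{d^{*}}{\rightarrow}H(v)=-v(\int_0^1 B_\sigma\,dB_2+\Gamma)+v^{2}f(0)\int_0^1\sigma^{-1}B_\sigma^{2}$, whose unique minimizer is $\mathcal{D}(\sigma)$; applying the convexity lemma of \cite{Pollard1991} along a.s.-weakly-convergent subsequences of the conditional laws, together with the arguments of \cite{Knight1989}, yields $L_T^{*}=v^{*}\overset{d^{*}}{\rightarrow}\mathcal{D}(\sigma)$. For the $t$-ratio statistic I would write $t_T^{*}=2\widehat{f(0)}^{*}\big(T^{-2}Y_{-1}^{*\prime}P_C Y_{-1}^{*}\big)^{1/2}L_T^{*}$, establish $\widehat{f(0)}^{*}\overset{p^{*}}{\rightarrow}f(0)$ (consistency of the bootstrap density estimator, inherited from consistency of $\widehat{f(0)}$ and the resampling), note that $T^{-2}Y_{-1}^{*\prime}P_C Y_{-1}^{*}\overset{d^{*}}{\rightarrow}\int_0^1 B_\sigma^{2}(r)\,dr-\big(\int_0^1 B_\sigma(r)\,dr\big)^{2}$ by Lemma \ref{lemma 3} and the continuous mapping theorem, and then combine these three pieces via joint convergence and the continuous mapping theorem to get $t_T^{*}\overset{d^{*}}{\rightarrow}\mathcal{B}(0,\sigma)$. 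The main obstacle I anticipate is the bootstrap version of Lemma \ref{lemma 2}: showing that block resampling from the sign-augmented pool $\mathcal{E}$ reproduces \emph{exactly} the one-sided long-run covariance $\Gamma$ (here it helps that flipping the sign of an entire block leaves within-block products $u_i\,sgn(u_j)$ invariant, while cross-block pairs are independent and contribute $o_{p^{*}}(1)$) and reproduces the constant $f(0)$ in the occupation-type term, and controlling the associated conditional remainders uniformly; everything else is a conditional re-run of the steps already used for Theorems \ref{theorem 1} and \ref{theorem 2}.
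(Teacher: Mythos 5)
Your proposal is correct and follows essentially the same route as the paper: the paper's own proof of Theorem \ref{theorem 3} is simply the observation that, since $\{y_t^*\}$ is an exact unit root process under the bootstrap law, one reruns the argument of Theorem \ref{theorem 2} (i.e.\ the Knight's-identity/convexity argument of Lemma \ref{lemma 2} and Theorem \ref{theorem 1}) conditionally on the data, with the bootstrap invariance principle of Lemma \ref{lemma 3} playing the role of Lemma \ref{lemma 1}. The bootstrap analogue of Lemma \ref{lemma 2} — including the reproduction of $\Gamma$ and $f(0)$ and the continuous-mapping step for $t_T^*$ — that you spell out is exactly the content the paper leaves implicit, so there is no gap.
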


		The bootstrap  test statistics $L_{T}^{*}$ and $t_T^*$ converge in distribution to the  the same limits as $L_{T}$ and $t_T$ under the null hypothesis, respectively. 
		We highlight that the proposed ABB procedure can be easily modified to approximate the null distribution of other unit root test statistics, such as any other LAD-based test statistics constructed from $\hat{\gamma}_{LAD}-1$ and the classic DF test, see a similar argument in \cite{Paparoditis2003}.
		
		\subsection{Feasible adaptive block bootstrap}
		However, the bootstrap test statistics $L_{T}^{*}$ and $t_T^*$ are unobtainable since the true sequence of volatility $\left\{  \sigma_{t}\right\}  _{t=1}^{T}$ is  unknown. To produce feasible testing procedures, $\sigma_{t}$ is required to be estimated in advance. Here we propose to estimate unconditional volatility nonparametrically and then implement the feasible bootstrap algorithm by replacing $\sigma_{t}$ with its nonparametric estimate. 
		
		Specifically, let  $\hat{u}_{t}=y_{t}-\hat{\gamma}_{LAD}y_{t-1}$ be the LAD residual,  {and} denote 
		\begin{equation}
			\hat{\sigma}_{t}=\sum_{s=1}^{T}w_{t,s}\left\vert \hat{u}_{s}\right\vert,
			\label{npe}%
		\end{equation}
		where $w_{t,s}=\left(  \sum_{s=1}^{T}k_{t,s}\right)  ^{-1}k_{t,s}$, $k_{t,s}=k\left(  {\left( t-s\right) }/{Th}\right)  $ is  {the} kernel function and $h$ is  {the} bandwidth parameter shrinking to zero at an appropriate rate. 
		The role of $\hat{\sigma}_{t}$ is to deputize for $\sigma_{t}$, which is the local mean of $\{\left\vert u_{t}\right\vert \}_{t=1}^T$ based on the following assumption. 
		
		\begin{assum}
			\label{assum 4} \textit{$E \left\vert \varepsilon_{t}\right\vert=1$.}
		\end{assum}
		
		Assumption \ref{assum 4} is imposed for the identification of ${\sigma}_{t}$, and it is substituted by  $E (\varepsilon_{t}^{2})  =1$ or $E\left(\varepsilon_{t}^{2}|\mathcal{F}_{t-1}\right) =1$ in the LS framework, see \cite{Cavaliere2005} and \cite{Cavaliere2007}. Note that Assumption \ref{assum 4} can be  relaxed to $E|\varepsilon_t| = c$, where $c$ is an arbitrary positive value.  The following two assumptions are required to ensure that the nonparametric estimator $\hat{\sigma}_{t}$ is asymptotically equivalent to $\sigma_{t}$.
		
		\begin{assum}
			\label{assum 5}
			\textit{(i) $k(\cdot)$ is a bounded continuous nonnegative function defined on the real line with
				$\int_{-\infty}^{\infty
				}k(v)dv=1$; (ii) $h + 1/(Th^{9/2})\rightarrow 0$  as $T\rightarrow \infty$.}
		\end{assum}
		
		\begin{assum}
			\label{assum 6} \textit{There exists $ p > 4$   such that $ E|\varepsilon_t|^{p}  <\infty $ and the mixing coefficients $\{\alpha_m\}$  satisfy   $\sum_{m=1}^{\infty}\alpha_m^{1/\beta - 1/p}<\infty $ for some $4 < \beta < p$.}
		\end{assum}

		Assumption \ref{assum 5}(i) is a standard assumption on kernel functions in \textcolor{black}{the} nonparametric literature, and holds for \textcolor{black}{many} commonly used kernels such as Epanechnikov, Gaussian and uniform ones. Assumption 5(ii) requires that the bandwidth parameter $h$ shrinks to zero at a slower rate than $T^{-2/9}$, ensuring that the feasible bootstrap test statistics have the same limiting distributions as the infeasible ones. It can be relaxed if  $\{\varepsilon_t \}$ is an m.d.s. or i.i.d. random variable. Assumption \ref{assum 6} imposes a stronger moment condition on  $\{\varepsilon_t \}$ than Assumption \ref{assum 2} due to  {the} nonparametric estimation of unconditional volatility,  {some} similar treatments can be found in \cite{Harvey2018} and \cite{Zhu2019}.
		
		The steps of feasible bootstrap procedure are totally the same as the infeasible ones outlined in Subsection \ref{Sec:infeasible ABB} except that $\sigma_{t}$ is replaced by $\hat{\sigma}_{t}$. Now denote $\check{\varepsilon}_{t}=\hat{u}_{t}/\hat{\sigma}_{t}$ and let $\{\check{y}_{t}^{\ast}\}_{t=1}^{T}$ be the pseudo series generated by  $\check{y}_t^* = \check{y}_{t-1}^* + \check{u}_t^*$ where $\check{u}_t^* = \hat{\sigma}_t\check{\varepsilon}_t^*$ and $\check{\varepsilon}_t^*$ is the bootstrap error term drawn from the feasible ABB procedure. The LAD-based bootstrap test statistics $\check{L}_{T}^{*}$ and $\check{t}_T^{*}$ are computed by using the sequence $\{\check{y}_{t}^{\ast}\}_{t=1}^{T}$. 
		The following Lemma \ref{lemma 4} and Theorem \ref{theorem 4} are the feasible versions  corresponding to Lemma \ref{lemma 3} and Theorem \ref{theorem 3}, respectively.
		
		\begin{lemma}
			\label{lemma 4}
			Suppose Assumptions \ref{assum 1},\ref{assum 3}-\ref{assum 6} hold true, let $y_{t}$ be determined by (\ref{AR}) and (\ref{HUF}) with $\gamma_{0}$ satisfying (\ref{ALT}), $1/b_{T}+b_{T}/{T}\rightarrow0$ as $T\rightarrow\infty$, we have
			\begin{align*}
				\frac{1}{\sqrt{T}}\sum_{t=1}^{\lfloor Tr \rfloor}\left(  \check{\varepsilon}_{t}^{*
				},sgn(\check{\varepsilon}_{t}^{*})\right)  ^{\prime}\overset{d^{*}%
				}{\rightarrow}\left(  B_{1}(r),B_{2}(r)\right)  ^{\prime};
				\\ \frac{1}{\sqrt{T}}\sum_{t=1}^{\lfloor Tr \rfloor}\left(  \hat{\sigma}_{t}\check{\varepsilon}_{t}^{*},sgn(\check{\varepsilon}_{t}^{*})\right)  ^{\prime}\overset{d^{*}}{\rightarrow}\left(B_{\sigma}(r),B_{2}(r)\right)
				^{\prime}.
			\end{align*} 
		\end{lemma}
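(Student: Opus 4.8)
The plan is to derive Lemma \ref{lemma 4} from Lemma \ref{lemma 3} by showing that, conditionally on the data and with probability tending to one, the feasible bootstrap partial sums differ from the infeasible ones only by terms that are $o_{p^{*}}(1)$ uniformly in $r\in[0,1]$. Two reductions come first. Because the feasible procedure draws exactly the same random block indices $i_{0},\dots,i_{k}$ as the infeasible one and differs only in that $\hat\sigma_{s}$ replaces $\sigma_{s}$, and $\hat\sigma_{s}>0$, we have $\check\varepsilon_{t}^{*}=(\sigma_{\iota}/\hat\sigma_{\iota})\varepsilon_{t}^{*}$, where $\iota=\iota(t)$ is the (signed) index resampled at position $t$; hence $sgn(\check\varepsilon_{t}^{*})=sgn(\varepsilon_{t}^{*})$ for every $t$, so the \emph{second} coordinates of the two vector partial sums are identical and already converge to $B_{2}(\cdot)$ by Lemma \ref{lemma 3}. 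It therefore suffices to prove
\[
\frac{1}{\sqrt{T}}\sum_{t=1}^{\lfloor Tr\rfloor}\bigl(\check\varepsilon_{t}^{*}-\varepsilon_{t}^{*}\bigr)=o_{p^{*}}(1),\qquad \frac{1}{\sqrt{T}}\sum_{t=1}^{\lfloor Tr\rfloor}\bigl(\hat\sigma_{t}\check\varepsilon_{t}^{*}-\sigma_{t}\varepsilon_{t}^{*}\bigr)=o_{p^{*}}(1)
\]
uniformly in $r$, in probability; the conclusion then follows from Lemma \ref{lemma 3} by the bootstrap analogue of Slutsky's theorem (note that Assumption \ref{assum 6} implies Assumption \ref{assum 2}, so Lemma \ref{lemma 3} applies under the present hypotheses).

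Next, a uniform-consistency step for $\hat\sigma_{t}$. Put $e_{s}:=\hat\sigma_{s}-\sigma_{s}$. Since $\hat u_{s}=u_{s}-(\hat\gamma_{LAD}-\gamma_{0})y_{s-1}$ with $|\hat\gamma_{LAD}-\gamma_{0}|=O_{p}(T^{-1})$ (Theorems \ref{theorem 1}--\ref{theorem 2}) and $\max_{s}|y_{s-1}|=O_{p}(T^{1/2})$, one has $\max_{s}\bigl|\,|\hat u_{s}|-|u_{s}|\,\bigr|=O_{p}(T^{-1/2})$, so $\hat\sigma_{t}$ is asymptotically the same as $\sum_{s}w_{t,s}|u_{s}|=\sum_{s}w_{t,s}\sigma_{s}|\varepsilon_{s}|$. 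Under Assumption \ref{assum 4}, $\sigma_{t}$ is the local mean of $\{|u_{s}|\}$; decomposing $\hat\sigma_{t}-\sigma_{t}$ into a kernel-smoothing bias of order $h^{2}$ (Assumptions \ref{assum 3} and \ref{assum 5}(i)) plus a stochastic term $\sum_{s}w_{t,s}\sigma_{s}(|\varepsilon_{s}|-1)$, and controlling the latter uniformly in $t$ by a maximal inequality for $\alpha$-mixing arrays (using $p>4$ from Assumption \ref{assum 6}), yields $\max_{1\le t\le T}|e_{t}|=O_{p}\bigl(h^{2}+(\log T/(Th))^{1/2}\bigr)=o_{p}(1)$. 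Similarly $\overline{\check\varepsilon^{2}}:=(T-b_{T})^{-1}\sum_{j}\check\varepsilon_{j}^{2}=O_{p}(1)$, since $\check\varepsilon_{j}=\varepsilon_{j}(1+o_{p}(1))+o_{p}(1)$ uniformly and $E\varepsilon_{t}^{2}<\infty$; and, along the same lines, the total variation of $\{\hat\sigma_{t}\}_{t=1}^{T}$ over $[1,T]$ is $O_{p}(1)$, the estimation-noise contribution being negligible thanks to the bandwidth restriction in Assumption \ref{assum 5}(ii).

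The main step is an algebraic decomposition followed by a bootstrap second-moment bound. A short computation gives $\check\varepsilon_{t}^{*}-\varepsilon_{t}^{*}=-(e_{\iota}/\sigma_{\iota})\check\varepsilon_{t}^{*}$ and $\hat\sigma_{t}\check\varepsilon_{t}^{*}-\sigma_{t}\varepsilon_{t}^{*}=\bigl(e_{t}-(\sigma_{t}/\sigma_{\iota})e_{\iota}\bigr)\check\varepsilon_{t}^{*}=e_{t}\check\varepsilon_{t}^{*}+\sigma_{t}(\check\varepsilon_{t}^{*}-\varepsilon_{t}^{*})$. Because the resampling pool was symmetrised (it contains $-\check\varepsilon_{j}$ alongside $\check\varepsilon_{j}$), the bootstrap mean of $\check\varepsilon_{t}^{*}$, and hence of each of these terms, vanishes; so only the bootstrap variances of the partial sums matter. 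Using independence across the $k=\lfloor(T-1)/b_{T}\rfloor$ blocks, each variance is a sum of per-block variances. Within a block the resampled indices are consecutive, so a per-block contribution is $T^{-1/2}$ times a sum such as $\sum_{s}h_{j_{0}+s}\check\varepsilon_{j_{0}+s}$ with slowly varying weights $h$ (either $e_{j}/\sigma_{j}$ or the deterministic block-values of $e_{t}$); summing by parts within the block turns this into $h_{j_{0}+L-1}\sum_{s}\check\varepsilon_{j_{0}+s}$ minus a term bounded by the block's total variation of $h$ times the running maximum of the partial sums of the $\check\varepsilon_{j}$'s. Taking bootstrap expectations, using $\max_{s}|e_{s}|=o_{p}(1)$, the $O(b_{T})$ bound for $E^{*}$ of squared block sums/maxima of $\{\check\varepsilon_{j}\}$ (Assumptions \ref{assum 1} and \ref{assum 6}), and the fact that the total variations over blocks sum to the global $O_{p}(1)$ total variation, one finds that the sum over the $k$ blocks is of order $\max_{s}|e_{s}|^{2}+b_{T}/T=o_{p}(1)$. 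A conditional Chebyshev inequality then gives the pointwise $o_{p^{*}}(1)$ statements, and the same variance bound applied to increments, together with a standard moment/tightness criterion (exactly as in the proof of Lemma \ref{lemma 3}), upgrades this to uniformity in $r$. Lemma \ref{lemma 4} follows.

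The step I expect to be the main obstacle is the per-block variance bound. A naive Cauchy--Schwarz split of a weighted block sum $\sum_{s}h_{j_{0}+s}\check\varepsilon_{j_{0}+s}$ loses a factor $b_{T}$ and only yields total variances of order $b_{T}\max_{s}|e_{s}|^{2}$, which need not vanish under $1/b_{T}+b_{T}/T\to0$ alone; it is essential to exploit the near-constancy of $\hat\sigma_{s}$ on windows of length $b_{T}$ via summation by parts and the $O_{p}(1)$ total-variation bound, and it is precisely in controlling the estimation-noise part of that total variation --- together with the $E^{*}$-moments of block sums of $\{\check\varepsilon_{j}\}$ --- that the bandwidth condition $Th^{9/2}\to\infty$ and the moment condition $p>4$ are consumed. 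The remaining ingredients, namely the uniform consistency of $\hat\sigma_{t}$ and the inheritance of the conditional invariance principle from Lemma \ref{lemma 3}, are routine.
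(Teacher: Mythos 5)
Several ingredients of your plan are sound: the coupling of the feasible and infeasible draws, the observation that $sgn(\check{\varepsilon}_{t}^{*})=sgn(\varepsilon_{t}^{*})$ exactly (so the second coordinate needs no new argument), the algebraic identities $\check{\varepsilon}_{t}^{*}-\varepsilon_{t}^{*}=-(e_{\iota}/\sigma_{\iota})\check{\varepsilon}_{t}^{*}$ and $\hat{\sigma}_{t}\check{\varepsilon}_{t}^{*}-\sigma_{t}\varepsilon_{t}^{*}=e_{t}\check{\varepsilon}_{t}^{*}+\sigma_{t}(\check{\varepsilon}_{t}^{*}-\varepsilon_{t}^{*})$, and the use of bootstrap mean-zero-ness plus independence across blocks to reduce everything to conditional second moments. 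However, your key quantitative input, $\max_{1\le t\le T}\vert\hat{\sigma}_{t}-\sigma_{t}\vert=O_{p}\bigl(h^{2}+(\log T/(Th))^{1/2}\bigr)=o_{p}(1)$, is false under Assumption \ref{assum 3}: $\sigma(\cdot)$ is allowed to have jumps, and on the $O(Th)$ time points whose kernel window straddles a jump the smoothing bias is of the order of the jump size, so the sup-norm error is only $O_{p}(1)$ (this is exactly what the paper's technical Lemma \ref{lemma new other}(viii) records, and why its proofs of Lemmas \ref{Lemma other}--\ref{Lemma other new2} split the index set into near-jump and away-from-jump points). Since your per-block summation-by-parts bound delivers a total conditional variance of order $\max_{s}e_{s}^{2}+b_{T}/T$, the argument breaks down precisely at the jumps; and the obvious repairs do not close it: the crude Cauchy--Schwarz treatment of the bad windows contributes terms of order $b_{T}h$, which need not vanish under $1/b_{T}+b_{T}/T\rightarrow 0$ and $h+1/(Th^{9/2})\rightarrow 0$ alone.

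More fundamentally, pushing the correlation between $e_{j}$ and $\check{\varepsilon}_{j}$ into the conditional world does not make it disappear: the conditional variance you must control is (up to the Bartlett-type window) a data-dependent quadratic form in $g_{j}=e_{j}\check{\varepsilon}_{j}/\sigma_{j}$, and showing that this is $o_{p}(1)$ near the jumps requires unconditional moment/covariance computations for weighted double sums of $\varepsilon_{t}$ and the smoothed $(\vert\varepsilon_{i}\vert-1)$ terms --- which is exactly the content of the paper's Lemmas \ref{Lemma other} and \ref{Lemma other new2} (Davydov-type covariance bounds with an auxiliary blocking scale $l_{T}=h^{-1/2}$), and it is there, not in the total-variation control of the estimation noise, that the conditions $Th^{9/2}\rightarrow\infty$ of Assumption \ref{assum 5}(ii) and $p>4$ of Assumption \ref{assum 6} are actually consumed. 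By contrast, the paper's own route expands $\check{\varepsilon}_{t}^{*}$ around $\varepsilon_{t}^{*}$ blockwise and disposes of the first- and second-order volatility-estimation errors via those unconditional lemmas, reserving the conditional-variance argument (which only needs $\max_{t}\vert e_{t}\vert=O_{p}(1)$ together with $T^{-1}\sum_{t}e_{t}^{2}=o_{p}(1)$ and $T^{-1}\sum_{t}\vert e_{t}\vert=o_{p}(1)$, all of which survive jumps because the bad set has measure $O(h)$) for the term $T^{-1/2}\sum_{t}\varepsilon_{t}^{*}(\hat{\sigma}_{t}-\sigma_{t})$. So your proposal correctly identifies the per-block variance bound as the crux, but the resolution you offer fails at the discontinuity points of $\sigma(\cdot)$ and, once repaired, collapses back into the paper's mixing-moment analysis rather than bypassing it.
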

		
		\begin{theorem}
			\label{theorem 4} 
			Suppose Assumptions  \ref{assum 1},\ref{assum 3}-\ref{assum 6} hold true, 
			let $y_{t}$ be determined by (\ref{AR})	and (\ref{HUF}) with $\gamma_{0}$ satisfying (\ref{ALT}), ${1}/{b_{T}}+b_{T}/{T}\rightarrow0$ as
			$T\rightarrow\infty$, we have $
			\check{L}_{T}^{*}\overset{d^*}{\rightarrow} \mathcal{D}(\sigma)$ and $ \check{t}_T^{*}  \overset{d^*}{\rightarrow} \mathcal{B}(0,\sigma).$
		\end{theorem}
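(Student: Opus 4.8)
\textbf{Proof strategy for Theorem \ref{theorem 4}.} The plan is to run the argument behind Theorem \ref{theorem 3} line by line, with Lemma \ref{lemma 4} taking the place of Lemma \ref{lemma 3} and the nonparametric estimates $\hat\sigma_t$ replacing the true volatilities $\sigma_t$. By construction $\check y_t^*=\sum_{j=1}^t\hat\sigma_j\check\varepsilon_j^*$, so the feasible pseudo series is a bootstrap unit root process with errors $\check u_t^*=\hat\sigma_t\check\varepsilon_t^*$ --- this holds whatever the value of $c$ in (\ref{ALT}), because the resampling scheme imposes $\gamma=1$, which is why the bootstrap limits are the \emph{null} limits $\mathcal D(\sigma)$ and $\mathcal B(0,\sigma)$. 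Setting $v^*=T(\check\gamma_{LAD}^*-1)$ and invoking Knight's identity as in (\ref{HTV}) produces the bootstrap objective $H_T^*(v)=-v\,T^{-1}\sum_{t=1}^T\check y_{t-1}^*\,sgn(\check u_t^*)+2\sum_{t=1}^T\int_0^{vT^{-1}\check y_{t-1}^*}\!\left[I(\check u_t^*\le s)-I(\check u_t^*\le 0)\right]ds$, which is convex and minimised at $v^*$. It thus suffices to establish the bootstrap analogue of Lemma \ref{lemma 2}, after which the convexity lemma of \cite{Pollard1991} together with the arguments of \cite{Knight1989}, applied in the bootstrap probability space, deliver $\check L_T^*=v^*\overset{d^{*}}{\rightarrow}\mathcal D(\sigma)$.

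For the linear term, the second display of Lemma \ref{lemma 4} gives $T^{-1/2}\check y_{\lfloor Tr\rfloor}^*\overset{d^{*}}{\rightarrow}B_\sigma(r)$ jointly with $T^{-1/2}\sum_{t\le\lfloor Tr\rfloor}sgn(\check u_t^*)\overset{d^{*}}{\rightarrow}B_2(r)$; a bootstrap continuous mapping argument, plus the observation that the within-block dependence of the resampled residuals reproduces the drift $\Gamma$ exactly as in the proof of Lemma \ref{lemma 2} (blocks being independent, no cross-block contribution arises), yields $T^{-1}\sum_{t=1}^T\check y_{t-1}^*\,sgn(\check u_t^*)\overset{d^{*}}{\rightarrow}\int_0^1 B_\sigma(r)\,dB_2(r)+\Gamma$. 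For the quadratic term, conditioning on the sampled blocks and exploiting that the empirical density at $0$ of the standardised residuals $\check\varepsilon_t=\hat u_t/\hat\sigma_t$ converges to $f(0)$, one shows its bootstrap probability limit is $v^2 f(0)\int_0^1\hat\sigma_{\lfloor Tr\rfloor}^{-1}\big(T^{-1/2}\check y_{\lfloor Tr\rfloor}^*\big)^2dr$; the uniform consistency $\max_{1\le t\le T}|\hat\sigma_t/\sigma(t/T)-1|=o_p(1)$ furnished by Assumptions \ref{assum 5}--\ref{assum 6} then allows $\hat\sigma_{\lfloor Tr\rfloor}$ and $T^{-1/2}\check y_{\lfloor Tr\rfloor}^*$ to be replaced by $\sigma(r)$ and $B_\sigma(r)$, giving $v^2 f(0)\int_0^1\sigma^{-1}(r)B_\sigma^2(r)\,dr$. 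Hence $H_T^*(v)\overset{d^{*}}{\rightarrow}H(v)$ pointwise in $v$, and the convexity argument closes the proof for $\check L_T^*$.

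For the $t$-ratio, $\check t_T^*=2\widehat{f(0)}^*\big(\check Y_{-1}^{*\prime}P_C\check Y_{-1}^*\big)^{1/2}(\check\gamma_{LAD}^*-1)$. Lemma \ref{lemma 4} and the continuous mapping theorem give $T^{-2}\check Y_{-1}^{*\prime}P_C\check Y_{-1}^*\overset{d^{*}}{\rightarrow}\int_0^1 B_\sigma^2(r)\,dr-\big(\int_0^1 B_\sigma(r)\,dr\big)^2$, and the estimator of $f(0)$ computed from the pseudo sample is consistent under the bootstrap law (the bootstrap LAD residuals approximate $\check u_t^*$, whose conditional density at $0$ equals $f(0)$ modulo the $\hat\sigma_t$ rescaling that cancels in the ratio-type estimator). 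Combining these with $T(\check\gamma_{LAD}^*-1)\overset{d^{*}}{\rightarrow}\mathcal D(\sigma)$ through Slutsky's theorem in the bootstrap law yields $\check t_T^*\overset{d^{*}}{\rightarrow}\mathcal B(0,\sigma)$, since $\mathcal B(0,\sigma)=2f(0)\big(\int_0^1 B_\sigma^2-(\int_0^1 B_\sigma)^2\big)^{1/2}\mathcal D(\sigma)$.

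The main obstacle is the passage from the estimated volatilities $\hat\sigma_t$ back to $\sigma(t/T)$, both inside the curvature term of $H_T^*$ and inside the partial sums $T^{-1/2}\check y_{\lfloor T\cdot\rfloor}^*$: one must show that the nonparametric estimate is uniformly relatively consistent and that the induced perturbation of the resampled errors is negligible at the $\sqrt{T}$ scale within the partial sums, which is precisely where the rate condition $1/(Th^{9/2})\to0$ in Assumption \ref{assum 5} and the strengthened moment and mixing restrictions ($p>4$) in Assumption \ref{assum 6} enter --- this is also the content that upgrades Lemma \ref{lemma 3} to Lemma \ref{lemma 4}. Once Lemma \ref{lemma 4} is in hand, everything above is a routine adaptation of the proofs of Theorems \ref{theorem 1} and \ref{theorem 3}.
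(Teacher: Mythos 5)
Your proposal is correct and follows essentially the same route as the paper: the paper's own proof of Theorem \ref{theorem 4} is precisely the one-line reduction "follow the proof of Theorem \ref{theorem 2} (Knight's identity, the two-term limit as in Lemma \ref{lemma 2}, then Pollard's convexity lemma), using Lemma \ref{lemma 4} in place of the infeasible invariance principle." Your write-up simply makes explicit the bootstrap analogues (drift term $\Gamma$ from within-block dependence, empirical-CDF treatment of the curvature term, consistency of $\widehat{f(0)}^*$) that the paper leaves implicit.
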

		
		Theorem \ref{theorem 4} implies that replacing $\sigma_{t}$ with $\hat{\sigma}_{t}$ has negligible effects on the asymptotic distribution. Therefore, by comparing  $L_T$ ($t_T$) with the empirical quantile of $\{ \check{L}_{T,j}^{*}\}_{j=1}^{B}$ ($\{ \check{t}_{T,j}^{*}\}_{j=1}^{B}$ ), we can judge whether the null hypothesis is rejected or not.

		\subsection{Choices of tuning parameters}
		\label{Sec:choice of para}
		Note that our asymptotic results for the ABB procedure hold true for any bandwidth $h$ satisfying $h + 1/(Th^{9/2})\rightarrow 0$ and  any block length $b_{T}$ satisfying $1/b_{T}+b_{T}/{T}\rightarrow0$, which, however, does not give clues on how to choose $h$ and $b_T$ in practice. Here we apply the leave-one-out cross-validation (CV) \citep[see,][]{Wong1983}
		to select $h$, and the  chosen value is given by
		\begin{equation}
			h_{cv}=\underset{h}{\arg\min}\sum
			_{t=1}^{T}\left(  \left\vert \hat{u}_{t}\right\vert -\hat{\sigma}_{-t}(h)\right)^{2}, \label{CV}%
		\end{equation}
		where $\hat{\sigma}_{-t}(h)$ is the leave-one-out estimate,  calculated in the same way as $\hat{\sigma}_{t}$ except that the summation in (\ref{npe}) is taken for $s\neq t$, and the    magnitude orders of candidates are specified as $T^{-1/5}$, which is the optimal order that minimizes the  asymptotic mean integrated squared error.
		
		For the choice of the block length, we employ the data-driven method suggested by \cite{Hall1995} to determine $b_{T}$.  Let $\mathscr{S}$ denote the set of all $T-m$ runs of length $m$ obtained from the standardized residuals $\{\hat{\varepsilon}_t\}_{t=1}^T$. 
		Let $\hat{\psi}_{m,i}(b)$ ($1 \leq i \leq T-m$) denote the bootstrap estimates of $\psi=\operatorname{var}\left(T^{-1 / 2} \sum_{t=1}^T \varepsilon_t\right)$  computed from  the $T-m$  {elements} in $\mathscr{S}$,  {using block size $b$}. Let  $\hat{\psi}_T( {b_\ast})$  be the estimate of $\psi$ computed from the entire data set $\{\hat{\varepsilon}_t\}_{t=1}^T$, using a plausible pilot block length  {$b_\ast$}. Then the optimal block length\footnote{The block bootstrap technique applied here is tailored to restore the asymptotic behavior of the partial sum of ${\varepsilon_t}$, with focus on the variance. It is well documented that the optimal block length (in terms of minimizing the mean squared error) is proportional to $T^{1/3}$ when handling bias or variance, see  \cite{Lahiri2003} for a review}.  $b_{opt}$ is given by $(T/m)^{1/3}b_m$, where $b_m$ is the optimal solution that minimizes $\sum_{i}\left(\hat{\psi}_{m,i}(b) -\hat{\psi}_T( {b_\ast})\right)^2$. This procedure may be iterated if desired, replacing the original pilot  {$b_\ast$ by  $b_{opt}$}. 
		
		\section{Extensions}
		\label{section extension}
		The autoregressive model without intercept is  too restrictive in empirical application.  Therefore, in this section we show how to extend our testing procedure to further  include deterministic components.   Specifically, suppose that the observed data  $\{x_{t}\}_{t=1}^{T}$ are generated by 
		\begin{align}
			x_t = \mu^\prime d_t + y_t,
			\label{AR with mean}
		\end{align}
		where $y_t$ is generated according to \eqref{AR}, and $d_t$ is a vector of deterministic functions of time $t$, with $\mu$ being a conformable parameter vector.  Two leading cases for  $d_t$ in the unit root literature are  $d_t = 1$ (constant mean $\mu$) and $d_t = (1,t)^\prime$ (linear trend $\mu^\prime d_t = \mu_1 + \mu_2 t$). Following \cite{Elliott1996}, we consider the following regression
		\begin{align} 
			\Delta x_t +  {\frac{\bar{c}}{T}} x_{t-1} &= \mu^\prime \left(\Delta d_t +  {\frac{\bar{c}}{T}}d_{t-1}\right)+u_t, t = 2,\cdots,T,
		\end{align}
		where $\bar{c}$ is a preset constant. Let $\varrho_t=\Delta d_t + \frac{\bar{c}}{T} d_{t-1}$, then the OLS estimator  for  $\mu$  is given by  
		\begin{align}
			\label{olsmu}
			\hat{\mu}(\bar{c})  = \left(\sum_{t=1}^{T}\varrho_t\varrho_t^{\prime}\right)^{-1} 
			\sum_{t=1}^{T}\varrho_t\left(\Delta x_t + \frac{\bar{c}}{T}x_{t-1} \right)^\prime,
		\end{align}
		and  the  demeaned/detrended series of  $\{x_{t}\}_{t=1}^{T}$ is $\{y_t^d \}_{t=1}^T$ where $y_t^d = x_t - \hat{\mu}(\bar{c})^\prime d_t$.
		
		\begin{remark}
			In the  demeaning/detrending procedure,  the OLS estimator $\hat{\mu}(\bar{c}) $ can be substituted by a weighted LS estimator or an LAD-based estimator but without much benefit  only at cost of more tedious proofs.  Our aim is to  study how to extend the proposed  ABB procedure to include deterministic components and demonstrate its validity in  the extended model. Hence, using the OLS estimator $\hat{\mu}(\bar{c})$ is sufficient to  achieve this goal.   In addition, the value of  $\bar{c}$  is required to be pre-specified in advance according to the forms of $d_t$. \cite{Elliott1996} suggested  $\bar{c}=7$ for a constant mean  and  $\bar{c}=13.5$ for a linear trend, which  {have been demonstrated}  to perform well in empirical  {studies},  see e.g., \cite{Ng2001} and \cite{Boswijk2018}.
		\end{remark}
		
		Based on $\{y_t^d\}_{t=1}^{T}$,  the LAD estimator can be obtained by solving the following minimization problem:
		\begin{equation}
			\label{LAD obj mean}
			\hat{\gamma}_{LAD}^d =\arg\min_{\gamma}\sum_{t=1}^{T}\left\vert
			y_{t}^d-\gamma y_{t-1}^d\right\vert .
		\end{equation}
		
		Let $L_T^d=T(\hat{\gamma}_{LAD}^d-1) $ and $
		t^d_T = 2\widehat{f(0)}\left(Y^{d\prime}_{-1} P_C Y^d_{-1} \right)^{1/2}\left(\hat{\gamma}_{LAD}^d - 1 \right)$, where $Y^d_{-1}$ is the vector of lagged dependent variables $\{y_t^d\}$.  The following Theorem \ref{theorem 5} provides  the limiting distributions of both $L_T^d$ and $t_T^d$ under local-to-unity  alternative hypothesis.

		\begin{theorem}
			\label{theorem 5} Suppose
			Assumptions \ref{assum 1}-\ref{assum 3} hold true, let $x_{t}$  be generated according to (\ref{AR with mean}) with $y_t$ determined by (\ref{AR}) and (\ref{HUF}) and  $\gamma_{0}$ satisfying (\ref{ALT}), we have
			\begin{itemize}
				\item[(a)] if $d_t =1$, then 
				$L_T^d\overset{d}{\rightarrow} -c+\mathcal{D}(c,\sigma)$ and $t_T^d \overset{d}{\rightarrow} \mathcal{B}(c,\sigma)$;
				\item[(b)] if $d_t = (1,t)^\prime$, then  $L_T^d\overset{d}{\rightarrow}-c+\mathcal{D}^d(\bar{c},c,\sigma)$ and $t_T^d \overset{d}{\rightarrow} \mathcal{B}^d(\bar{c},c,\sigma) := 2f(0)\mathcal{G}^{1/2}(\bar{c},c,\sigma)(-c+\mathcal{D}^d(\bar{c},c,\sigma))$,
				where 	{\small\begin{align*}
						\mathcal{D}^d(\bar{c},c,\sigma) &:= \left[{2f(0)\left(  \int_{0}^{1}\sigma^{-1}\left(r\right) \mathcal{J}^{2}_{c,\sigma}\left(r\right)dr + \mathcal{M}^2(\bar{c},c,\sigma) \int_{0}^{1}\sigma^{-1}(r) r^2dr - 2\mathcal{M}(\bar{c},c,\sigma)\int_{0}^{1}\sigma^{-1}(r)r\mathcal{J}_{c,\sigma}(r)dr \right)}\right]^{-1}
						\\ & \times
						\left[\int_{0}^{1}\mathcal{J}_{c,\sigma}(r)dB_2(r) -\mathcal{M}(\bar{c},c,\sigma)\int_{0}^{1}rdB_2(r) +\Gamma  - 2f(0)\mathcal{M}(\bar{c},c,\sigma)\int_0^1\sigma^{-1}(r)(1+cr)\mathcal{J}_{c,\sigma}(r)dr
						\right. 
						\\& \left.		+2f(0)\mathcal{M}^2(\bar{c},c,\sigma)\int_{0}^{1}\sigma^{-1}(r)r(1+cr)dr\right],
				\end{align*}  }
				$ \mathcal{M}(\bar{c},c,\sigma) = \left( \int_{0}^{1}\left({ 1+\bar{c}r }\right)^2dr  \right)^{-1} \int_{0}^{1}{\left( 1+\bar{c}r\right) }\left( d\mathcal{J}_{c,\sigma}(r) + \bar{c}\mathcal{J}_{c,\sigma}(r)dr \right)$, and
				$$
				\mathcal{G}(\bar{c},c,\sigma) = \int_0^1\mathcal{J}^2_{c,\sigma}(r)dr + \frac{1}{3}\mathcal{M}^2(\bar{c},c,\sigma )- 2\mathcal{M}(\bar{c},c,\sigma)\int_0^1\mathcal{J}_{c,\sigma}(r)rdr - \left( \int_0^1 \mathcal{J}_{c,\sigma}(r)dr - \frac{1}{2}\mathcal{M}(\bar{c},c,\sigma)\right)^2.
				$$
			\end{itemize}
			
		\end{theorem}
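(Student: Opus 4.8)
The plan is to follow the route of Theorems~\ref{theorem 1} and \ref{theorem 2}, the only genuinely new ingredient being the effect of the GLS demeaning/detrending on the LAD objective. Write $\gamma_{0}=1-c/T+o(1/T)$, so that under (\ref{ALT}) one has $y_{t}=\gamma_{0}y_{t-1}+u_{t}$, and let $\tilde{y}_{t}$ be the GLS-transformed residual, $\tilde{y}_{1}=y_{1}=u_{1}$ and $\tilde{y}_{t}=\Delta y_{t}+(\bar{c}/T)y_{t-1}=u_{t}+\tfrac{\bar{c}-c}{T}y_{t-1}$ for $t\geq 2$, so that $\hat{\mu}(\bar{c})-\mu=(\sum_{t=1}^{T}\varrho_{t}\varrho_{t}^{\prime})^{-1}\sum_{t=1}^{T}\varrho_{t}\tilde{y}_{t}$ with the $t=1$ term built from the level $\varrho_{1}=d_{1}$. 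The first step is to pin down the order of $\hat{\mu}(\bar{c})-\mu$ using $T^{-1/2}y_{\lfloor Tr\rfloor}\overset{d}{\rightarrow}\mathcal{J}_{c,\sigma}(r)$ (cf. (\ref{AOU})), $T^{-1/2}\sum_{t=1}^{\lfloor Tr\rfloor}u_{t}\overset{d}{\rightarrow}B_{\sigma}(r)$ (Lemma~\ref{lemma 1}(ii)) and summation by parts. When $d_{t}=1$ one gets $\sum_{t}\varrho_{t}^{2}\to 1$ and $\sum_{t}\varrho_{t}\tilde{y}_{t}=u_{1}+o_{p}(1)$, so $\hat{\mu}(\bar{c})-\mu=O_{p}(1)$; when $d_{t}=(1,t)^{\prime}$ one gets $\hat{\mu}_{1}(\bar{c})-\mu_{1}=O_{p}(1)$ while $\sqrt{T}\,(\hat{\mu}_{2}(\bar{c})-\mu_{2})\overset{d}{\rightarrow}\mathcal{M}(\bar{c},c,\sigma)$, the limit being read off after inserting the identity $d\mathcal{J}_{c,\sigma}(r)+\bar{c}\mathcal{J}_{c,\sigma}(r)dr=(\bar{c}-c)\mathcal{J}_{c,\sigma}(r)dr+dB_{\sigma}(r)$. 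Consequently $T^{-1/2}y^{d}_{\lfloor Tr\rfloor}\overset{d}{\rightarrow}\mathcal{J}^{d}_{c,\sigma}(r)$, jointly with the limits of Lemma~\ref{lemma 1}(ii) and the weighted partial sums below, where $\mathcal{J}^{d}_{c,\sigma}(r):=\mathcal{J}_{c,\sigma}(r)$ when $d_{t}=1$ (the $O_{p}(1)$ intercept shift being negligible at scale $\sqrt{T}$) and $\mathcal{J}^{d}_{c,\sigma}(r):=\mathcal{J}_{c,\sigma}(r)-r\,\mathcal{M}(\bar{c},c,\sigma)$ when $d_{t}=(1,t)^{\prime}$.

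The second step is to expand the LAD objective on the detrended data. Set $v=T(\gamma-\gamma_{0})$ and $H^{d}_{T}(v)=\sum_{t}(|y^{d}_{t}-(\gamma_{0}+v/T)y^{d}_{t-1}|-|y^{d}_{t}-\gamma_{0}y^{d}_{t-1}|)$; writing $e^{d}_{t}:=y^{d}_{t}-\gamma_{0}y^{d}_{t-1}=u_{t}-\eta_{t}$ with $\eta_{t}:=(\hat{\mu}(\bar{c})-\mu)^{\prime}(\Delta d_{t}+(c/T)d_{t-1})$, Knight's identity gives
\begin{equation*}
H^{d}_{T}(v)=-v\Big(\tfrac{1}{T}\sum_{t}y^{d}_{t-1}\,sgn(e^{d}_{t})\Big)+2\sum_{t}\int_{0}^{vT^{-1}y^{d}_{t-1}}\big[I(e^{d}_{t}\leq s)-I(e^{d}_{t}\leq 0)\big]\,ds .
\end{equation*}
Since $\eta_{t}\to 0$, the density of $e^{d}_{t}$ at $0$ converges (uniformly enough by Assumption~\ref{assum 1}(ii)) to that of $u_{t}$, namely $f(0)/\sigma_{t}$, so the second term is handled exactly as in Lemma~\ref{lemma 2} with $y^{d}_{t-1}$ in place of $y_{t-1}$ and converges to $v^{2}f(0)\int_{0}^{1}\sigma^{-1}(r)(\mathcal{J}^{d}_{c,\sigma}(r))^{2}dr$. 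For the first term, use $sgn(u_{t}-\eta_{t})=sgn(u_{t})-2[I(0<u_{t}\leq\eta_{t})-I(\eta_{t}<u_{t}\leq 0)]$ together with $y^{d}_{t-1}=y_{t-1}-(\hat{\mu}_{1}-\mu_{1})-(\hat{\mu}_{2}-\mu_{2})(t-1)$: by the local-to-unity version of Lemma~\ref{lemma 2}, the order bounds of the first step, and $T^{-1/2}\sum_{t\leq\lfloor Tr\rfloor}sgn(u_{t})\overset{d}{\rightarrow}B_{2}(r)$, the $sgn(u_{t})$ piece contributes $\int_{0}^{1}\mathcal{J}_{c,\sigma}(r)dB_{2}(r)+\Gamma$ when $d_{t}=1$ and $\int_{0}^{1}\mathcal{J}_{c,\sigma}(r)dB_{2}(r)-\mathcal{M}\int_{0}^{1}r\,dB_{2}(r)+\Gamma$ when $d_{t}=(1,t)^{\prime}$; the indicator-difference piece is $o_{p}(1)$ when $d_{t}=1$ (there $\eta_{t}=O_{p}(1/T)$), while when $d_{t}=(1,t)^{\prime}$, replacing the indicator increments by their $\mathcal{F}_{t-1}$-conditional means ($\approx f_{t-1}(0)\eta_{t}/\sigma_{t}$) and using $\eta_{t}\approx\mathcal{M}T^{-1/2}(1+c(t-1)/T)$ shows it converges to $-2f(0)\mathcal{M}\int_{0}^{1}\sigma^{-1}(r)(1+cr)\mathcal{J}^{d}_{c,\sigma}(r)dr$. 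Collecting the pieces and using $\int_{0}^{1}\sigma^{-1}(r)(\mathcal{J}^{d}_{c,\sigma}(r))^{2}dr=\int_{0}^{1}\sigma^{-1}(r)\mathcal{J}_{c,\sigma}^{2}(r)dr-2\mathcal{M}\int_{0}^{1}\sigma^{-1}(r)r\mathcal{J}_{c,\sigma}(r)dr+\mathcal{M}^{2}\int_{0}^{1}\sigma^{-1}(r)r^{2}dr$, one identifies $H^{d}_{T}(v)\overset{d}{\rightarrow}H^{d}(v)$, a quadratic in $v$ whose minimizer is $\mathcal{D}(c,\sigma)$ when $d_{t}=1$ and $\mathcal{D}^{d}(\bar{c},c,\sigma)$ when $d_{t}=(1,t)^{\prime}$.

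To conclude, convexity of $v\mapsto H^{d}_{T}(v)$ and the convexity lemma of \cite{Pollard1991} give that $T(\hat{\gamma}^{d}_{LAD}-\gamma_{0})$ converges to the minimizer of $H^{d}$; adding $T(\gamma_{0}-1)=-c+o(1)$ yields the stated limits for $L^{d}_{T}$. For the $t$-ratio, $T^{-2}Y^{d\prime}_{-1}P_{C}Y^{d}_{-1}\overset{d}{\rightarrow}\int_{0}^{1}(\mathcal{J}^{d}_{c,\sigma}(r))^{2}dr-\big(\int_{0}^{1}\mathcal{J}^{d}_{c,\sigma}(r)dr\big)^{2}$ by the continuous mapping theorem and the first step; expanding $\mathcal{J}^{d}_{c,\sigma}=\mathcal{J}_{c,\sigma}-r\mathcal{M}$ shows this limit equals $\int_{0}^{1}\mathcal{J}_{c,\sigma}^{2}(r)dr-\big(\int_{0}^{1}\mathcal{J}_{c,\sigma}(r)dr\big)^{2}$ when $d_{t}=1$ and $\mathcal{G}(\bar{c},c,\sigma)$ when $d_{t}=(1,t)^{\prime}$; together with $\widehat{f(0)}\overset{p}{\rightarrow}f(0)$ and $t^{d}_{T}=2\widehat{f(0)}(T^{-2}Y^{d\prime}_{-1}P_{C}Y^{d}_{-1})^{1/2}L^{d}_{T}$ this gives $\mathcal{B}(c,\sigma)$ resp. $\mathcal{B}^{d}(\bar{c},c,\sigma)$. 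The finitely many discontinuities of $\sigma(\cdot)$ play no role here, since every functional involved is continuous on the relevant path space.

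I expect the main obstacle to be the treatment, in the linear-trend case, of $sgn(u_{t}-\eta_{t})$ with the random $O_{p}(T^{-1/2})$ perturbation $\eta_{t}$: justifying the replacement of the indicator increments $I(0<u_{t}\leq\eta_{t})-I(\eta_{t}<u_{t}\leq 0)$ by their conditional means requires a stochastic-equicontinuity (maximal-inequality) argument over a compact range of values of $\sqrt{T}(\hat{\mu}_{2}(\bar{c})-\mu_{2})$, and it is precisely there that the uniform integrability of $f_{t-1}(x_{T})$ in Assumption~\ref{assum 1}(ii) together with the mixing and moment conditions of Assumption~\ref{assum 2} are essential; everything else reduces to the continuous-mapping and convexity machinery already used for Theorems~\ref{theorem 1} and \ref{theorem 2}.
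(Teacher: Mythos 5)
Your proposal is correct, reproduces exactly the stated limits, and follows the paper's overall architecture: first pin down $\hat{\mu}(\bar{c})-\mu$ (with $\hat{\mu}_{1}(\bar{c})-\mu_{1}=y_{1}+o_{p}(1)$ and $\sqrt{T}(\hat{\mu}_{2}(\bar{c})-\mu_{2})\overset{d}{\rightarrow}\mathcal{M}(\bar{c},c,\sigma)$), then expand the LAD criterion via Knight's identity and invoke the convexity argument, and finally obtain the $t$-ratio from $T^{-2}Y^{d\prime}_{-1}P_{C}Y^{d}_{-1}$ together with consistency of $\widehat{f(0)}$. The one genuine difference is the decomposition in the trend case: the paper centers Knight's identity at $u_{t}$, so the detrending perturbation $\eta_{t}\approx(\hat{\mu}_{2}(\bar{c})-\mu_{2})\bigl(1+c(t-1)/T\bigr)$ sits inside the upper limit of the indicator integral, and the extra terms of $\mathcal{D}^{d}(\bar{c},c,\sigma)$ emerge from the linear-in-$v$ cross term of that integral (the paper's $S_{4}(v)$), with a $v$-free remainder ($S_{5}$) simply discarded; you instead center the identity at $e^{d}_{t}=u_{t}-\eta_{t}$ and recover the same extra terms from the correction $sgn(u_{t}-\eta_{t})=sgn(u_{t})-2[I(0<u_{t}\leq\eta_{t})-I(\eta_{t}<u_{t}\leq0)]$, which, after the conditional-mean replacement and the expansion $\mathcal{J}^{d}_{c,\sigma}(r)=\mathcal{J}_{c,\sigma}(r)-r\,\mathcal{M}(\bar{c},c,\sigma)$, yields precisely the paper's numerator and denominator (your $-2f(0)\mathcal{M}\int_{0}^{1}\sigma^{-1}(r)(1+cr)\mathcal{J}^{d}_{c,\sigma}(r)dr$ expands to the two extra terms in $\mathcal{D}^{d}$). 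The two decompositions are algebraically equivalent; yours is marginally cleaner in that no $v$-free leftover has to be tracked, while the paper's avoids the extra sign-function expansion because the perturbation never leaves the integral. The stochastic-equicontinuity caveat you flag — the displacement involves the full-sample estimator $\hat{\mu}_{2}(\bar{c})$, so the conditional-mean replacement needs uniformity over a compact range of $\sqrt{T}(\hat{\mu}_{2}(\bar{c})-\mu_{2})$, resting on Assumption \ref{assum 1}(ii) — applies equally to the paper's own argument with its random integration limit $X_{t-1}$, which passes over this point silently, so your route is no less rigorous there.
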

		The forms of the limiting distributions   of the $L_T^d$ and $t_T^d$ statsitics depend on the specific $d_t$. When  $d_t = 1$, the asymptotic results of $L_T^d$ and $t_T^d$ are not affected by  the replacement of the unknown $\mu$ with  its estimator $\hat{\mu}(\bar{c})$.
		However, the situation is quite different when $d_t$ includes a linear trend.	 In this case, the limiting distribution  is apparently  affected by the  estimator  $\hat{\mu}(\bar{c})$, and $\mathcal{D}^d(\bar{c},c,\sigma)$  involves $\bar{c}$  in a highly nonlinear form.  Fortunately, we do not have to  concern  with the explicit form of the limits of  $L_T^d$ and $t_T^d$ due to the readily available bootstrap algorithm.

		Next, we apply the bootstrap  algorithm  for  the extended models. The adaptive  {block} bootstrap procedure  is defined almost identical to the one discussed in the previous sections.  The only difference here is that we employ the \textcolor{black}{demeaned/}detrended series $\{y_t^d\}_{t=1}^{T}$  to  replace the  observed series $\{y_{t}\}_{t=1}^{T}$ \textcolor{black}{throughout the entire} procedure. Of course, the unknown $\sigma_t$  is still estimated by using the nonparametric kernel method in   (\ref{npe}), but instead based on the \textcolor{black}{demeaned/}detrended LAD residual $\hat{u}_{t}^d=y_t^d-\hat{\gamma}_{LAD}^d y_{t-1}^d$.
		Denote the feasible  bootstrap test statistics calculated using the bootstrap sequence as $\check{L}_{T}^{d*}$ and $\check{t}_T^{d*}$.  The following  theorem presents their asymptotic properties.

		\begin{theorem}
			\label{theorem 6} 
			Suppose	Assumptions \ref{assum 1},\ref{assum 3}-\ref{assum 6} hold true, let $x_t$ be generated from (\ref{AR with mean}) with	$y_{t}$ determined by (\ref{AR}) and	(\ref{HUF})  and $\gamma_{0}$ satisfying (\ref{ALT}),  $1/b_{T}+b_{T}/{T}\rightarrow 0$ as $T\rightarrow\infty$,
			\begin{itemize}
				\item[(a)] if $d_t =1$, then $\check{L}_{T}^{d\ast}\overset{d^*}{\rightarrow} \mathcal{D}(\sigma)$ and $\check{t}_{T}^{d\ast} \overset{d^*}{\rightarrow}  \mathcal{B}(0,\sigma)$;
				
				\item[(b)] if $d_t = (1,t)^\prime$, then $\check{L}_{T}^{d\ast}\overset{d^*}{\rightarrow} \mathcal{D}^d(\bar{c},0,\sigma)$ and $\check{t}_{T}^{d\ast} \overset{d^*}{\rightarrow}  \mathcal{B}^d(\bar{c},0,\sigma) $. 
			\end{itemize}
		\end{theorem}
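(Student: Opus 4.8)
The proof is the deterministic-component extension of Theorem \ref{theorem 4}, obtained by grafting onto it the detrending analysis already used for Theorem \ref{theorem 5}. The organizing observation is that the bootstrap pseudo series $\{\check y_t^{d*}\}$ is, by its construction as a partial sum of resampled residuals, an \emph{exact} unit root process --- it corresponds to noncentrality $c^{*}=0$ --- so the bootstrap limits are simply the Theorem \ref{theorem 5} limits evaluated at $c=0$, which is why the additive $-c$ shift disappears, leaving $\mathcal{D}(0,\sigma)=\mathcal{D}(\sigma)$ and $-0+\mathcal{D}^d(\bar c,0,\sigma)=\mathcal{D}^d(\bar c,0,\sigma)$ (and likewise $\mathcal{B}(0,\sigma)$, $\mathcal{B}^d(\bar c,0,\sigma)$ for the $t$-statistics). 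As a first step I would show that the demeaned/detrended LAD residuals behave asymptotically like the oracle errors $u_t$: writing $y_t^d=y_t-(\hat\mu(\bar c)-\mu)'d_t$, the detrending analysis from the proof of Theorem \ref{theorem 5}, together with $T(\hat\gamma_{LAD}^d-\gamma_0)=O_p(1)$, shows that $\hat u_t^d=y_t^d-\hat\gamma_{LAD}^d y_{t-1}^d$ equals $u_t$ up to terms that vanish after the relevant $\sqrt T$-scaling, so that the kernel estimate $\hat\sigma_t$ formed from $\{|\hat u_t^d|\}$ remains uniformly consistent for $\sigma_t$ exactly as in the proof of Theorem \ref{theorem 4} (Assumptions \ref{assum 5}(ii) and \ref{assum 6} provide the rates that kill the nonparametric error), and the standardized detrended residuals $\check\varepsilon_t^d=\hat u_t^d/\hat\sigma_t$ admit the same partial-sum and sign-partial-sum expansions as $\varepsilon_t$.

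Given that, the second step is the feasible bootstrap invariance principle in the extended setting: the block-resampled errors $\{\check\varepsilon_t^{d*}\}$ drawn from $\{\check\varepsilon_t^d\}$ satisfy, under the bootstrap law and in probability, $T^{-1/2}\sum_{t=1}^{\lfloor Tr\rfloor}(\hat\sigma_t\check\varepsilon_t^{d*},sgn(\check\varepsilon_t^{d*}))'\overset{d^{*}}{\rightarrow}(B_\sigma(r),B_2(r))'$. This is Lemma \ref{lemma 4} with detrended residuals replacing the original ones, and its proof carries over unchanged because it relies only on the moving-block central limit theorem for partial sums of a mixing array, the sign-symmetrization device that enforces zero bootstrap mean and median, and the uniform consistency of $\hat\sigma_t$ from the first step. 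Hence the bootstrap series obeys $T^{-1/2}\check y_{\lfloor Tr\rfloor}^{d*}\overset{d^{*}}{\rightarrow}B_\sigma(r)$, the bootstrap analogue of \eqref{AOU} at $c=0$.

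The third step treats $\{\check y_t^{d*}\}$ as the observed series, applies the same GLS detrending with the preset $\bar c$, and runs the LAD regression to obtain $\check\gamma_{LAD}^{d*}$. A conditional (bootstrap) continuous-mapping argument, verbatim from the proof of Theorem \ref{theorem 5} but specialized to $c=0$, shows that the bootstrap detrending coefficient error reproduces the $\bar c$-dependent functionals of $(B_\sigma,B_2)$ entering $\mathcal{D}^d$ and $\mathcal{G}$ --- in particular $\mathcal{M}(\bar c,0,\sigma)$ --- and that the detrended bootstrap partial sum converges to the corresponding limit process. Then Knight's identity together with Pollard's convexity lemma, exactly as in the proofs of Theorems \ref{theorem 1}, \ref{theorem 2} and \ref{theorem 5} but now under $\overset{d^{*}}{\rightarrow}$, gives $T(\check\gamma_{LAD}^{d*}-1)\overset{d^{*}}{\rightarrow}\mathcal{D}(\sigma)$ when $d_t=1$ and $T(\check\gamma_{LAD}^{d*}-1)\overset{d^{*}}{\rightarrow}\mathcal{D}^d(\bar c,0,\sigma)$ when $d_t=(1,t)'$. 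For the $t$-statistics one adds $\widehat{f(0)}^{*}\overset{p^{*}}{\rightarrow}f(0)$ in probability (the bootstrap errors inherit the density $f$ near zero, so the bootstrap density estimate is consistent) and applies the continuous mapping theorem to the quadratic form $Y^{d*\prime}_{-1}P_C Y^{d*}_{-1}$, delivering $\mathcal{B}(0,\sigma)$ and $\mathcal{B}^d(\bar c,0,\sigma)$ respectively.

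The main obstacle is case (b): one must control simultaneously the re-detrending with fixed $\bar c$, the nonparametric volatility estimation error, and the block-resampling randomness, and verify that all the cross-product and integral terms building $\mathcal{D}^d(\bar c,0,\sigma)$ and $\mathcal{G}(\bar c,0,\sigma)$ converge \emph{jointly} under the bootstrap law --- a careful conditional version of the continuous-mapping/Knight-identity computation already performed for Theorem \ref{theorem 5} at $c=0$, fed by the joint convergence of the second step. Case (a) is considerably easier, since GLS-demeaning is asymptotically negligible under a unit root and the argument essentially collapses to that of Theorem \ref{theorem 4}.
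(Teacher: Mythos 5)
Your proposal is correct and follows essentially the same route as the paper: it establishes the bootstrap bivariate invariance principle for the block-resampled, volatility-rescaled detrended residuals (the paper's analogue of Lemma \ref{lemma 4}, handling the extra $O_p(1)$, $O_p(T^{-1/2})$ detrending terms exactly as you indicate), then re-applies the GLS demeaning/detrending with the preset $\bar{c}$ to the bootstrap series, shows $\sqrt{T}\hat{\mu}_2^*(\bar{c})$ reproduces $\mathcal{M}(\bar{c},0,\sigma)$, and concludes via Knight's identity and the convexity argument under the bootstrap law, with the limits being the Theorem \ref{theorem 5} limits at $c=0$ because the pseudo series is an exact unit root.
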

		
		Once again,  Theorem \ref{theorem 6} implies  that  the  bootstrap test statistics $\check{L}_{T}^{d*}$ and $\check{t}_{T}^{d*}$ have the same asymptotic null distributions with $L_T^d$ and $t_T^d$ in both cases, which also shows the validity of the ABB procedure.

		\section{Monte Carlo simulations}
		\label{section simu}
		This section investigates the finite-sample performance of the LAD-based bootstrap unit root tests. Specifically, we consider the following data generating process (DGP): 
		\begin{align}
			y_{t}=\exp\left(  -\frac{c}{T}\right)  y_{t-1}+\sigma_{t}\varepsilon_{t},\text{
			}\varepsilon_{t}=\theta\varepsilon_{t-1}+\phi\eta_{t-1}+\eta_{t},
			\label{DGPs}
		\end{align}
		where  $\varepsilon_t$ is generated by an ARMA(1,1) process, and $\sigma_{t} =\sigma\left(  t/T\right)$ is assumed to take three different forms of structural changes:
		
		\begin{itemize}
			\item[]Case (1)---One shift in volatility:%
			$$\sigma\left(  \tau\right)  =\sigma_{1}+\left(  \sigma_{1}-\sigma_{0}\right)   {I}(\tau>0.5),$$
			\item[]Case (2)---Two shifts in volatility:
			$$\sigma(\tau)=\sigma_{0}+(\sigma_{1}-\sigma_{0}) {I}(0.3<\tau<0.7),$$
			\item[]Case (3)---Smooth changing volatility:
			$$\sigma(\tau)=\sigma_{0}+(\sigma_{1}-\sigma_{0})G\left(  \tau\right),G\left(\tau\right)  =1-\exp\left[  -15\left(  \tau-0.5\right)  ^{2}\right]  .$$
		\end{itemize}

		The volatility dynamics in Case (1) correspond to one abrupt change in volatility from $\sigma_{0}$ to $\sigma_{1}$ at the middle time $\lfloor  0.5T\rfloor  $. In Case (2), the volatility jumps from the level of $\sigma_{0}$ to $\sigma_{1}$ at time $\lfloor  0.3T\rfloor  $ and stays for a period before jumping back at time $\lfloor  0.7T\rfloor  $. In Case (3), the volatility changes smoothly by exhibiting a symmetric `$U$' shape.  Without loss of generality, we specify $\sigma_{0}=1$.  We are interested in evaluating the performance of the proposed tests for heavy-tailed data, while also reporting results under Gaussian conditions as benchmarks, thus considering three types of innovations $\left\{  \eta_{t}\right\}$: (1) The standard normal distribution, $N(0,1)$; (2) The Student-t distribution\ with 3\ degrees of freedom, $t(3)$; (3) The standard double exponential distribution, $DE\left(  0,1\right)  $. It is worth noting that although  $\varepsilon_t$ generated by (\ref{DGPs})  may not satisfy $E \left\vert \varepsilon_{t}\right\vert  =1$  for a majority of $\theta$-$\phi$ combinations, our testing procedure is unaffected by noting that  {${u}_t=\sigma_{t}^{\dagger}\varepsilon_{t}^{\dagger}$ with $\sigma_{t}^{\dagger}=\sigma_{t}E\left\vert \varepsilon_{t}\right\vert$ and $\varepsilon_{t}^{\dagger}=\varepsilon_{t}/E\left\vert \varepsilon_{t}\right\vert$}.
		
		\subsection{Size}
		In this subsection, we first study the empirical size performance of the proposed LAD-based tests.  We denote the LAD-based coefficient-based test statistic and the t-ratio test statistic  as $L^B_T$ and $t^B_T$, respectively, using the proposed ABB procedure to compute corresponding p-values. 
		The results of quantile t-ratio test (denoted as $t_T^{Q}$) developed by \cite{Koenker2004} at median,  and the LAD-based test (denoted as $L_T^M$) established by \cite{Moreno2000}, are also reported for comparison. Neither $t_T^{Q}$ nor $L_T^M$ accounts for unconditional heteroskedasticity. The $t_T^Q$ test can handle high-order autocorrelation, while the $L_T^M$ test is only applicable to AR(1) models with i.i.d. errors. 
		
		\begin{figure}[!htb]
			\centering	
			\includegraphics[width=16cm, trim={0cm 0 0 0}]{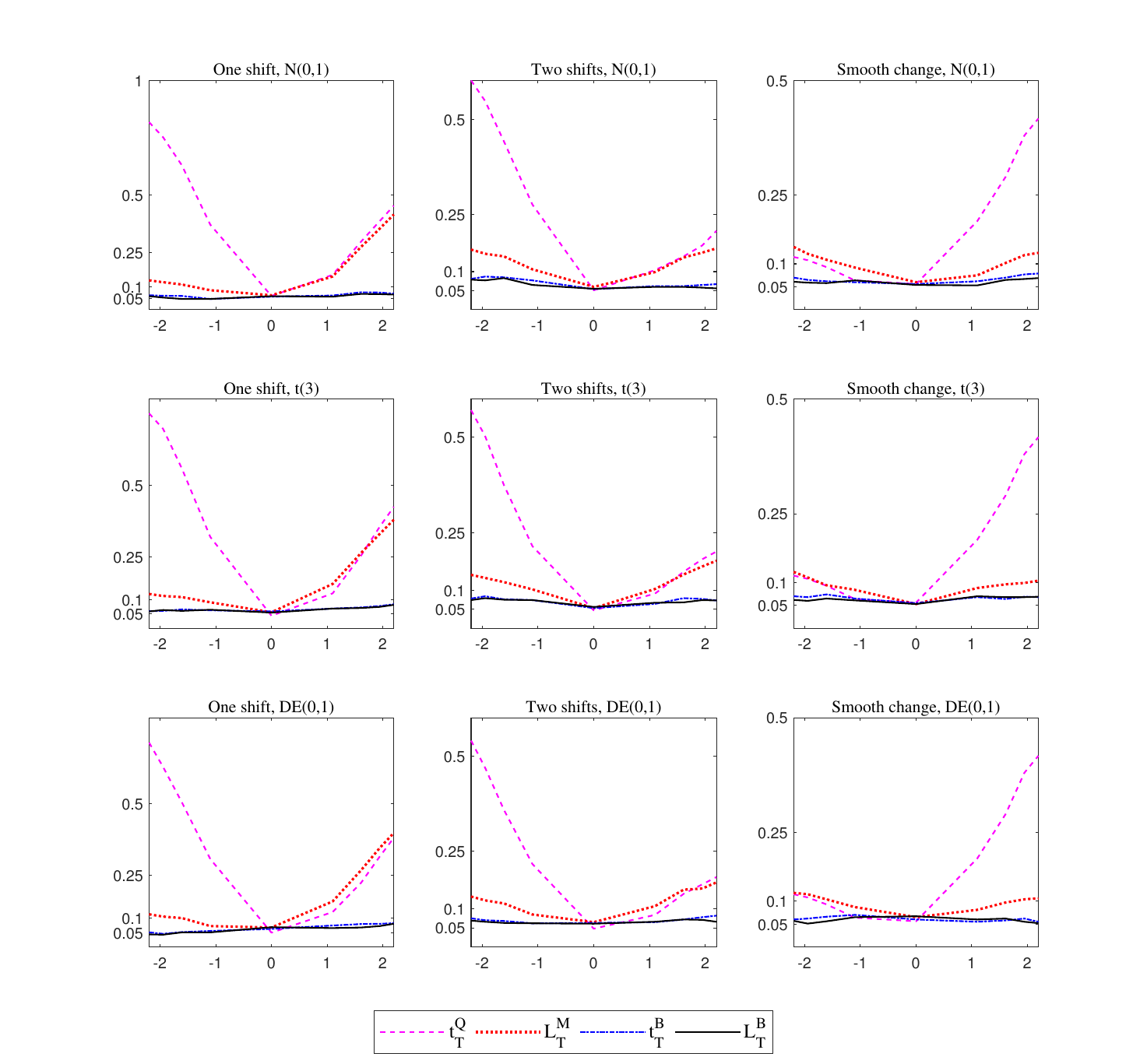}	
			\caption{Empirical sizes (y-axis) of the $t_T^Q$, $L_T^M$, $t_T^B$, and $L_T^B$ tests for $\sigma_1$ between $1/9$ and 9 (x-axis, log scale) when the error term follows an i.i.d process. Nominal size is 0.05 and sample size $T = 100$. }\label{Figure size plot iid}
		\end{figure}
		
		For the error $\varepsilon_t$ in (\ref{DGPs}),  we  consider the following cases: (1) an i.i.d sequence, (2) an MA(1) process with $\theta=0.5$ and $\phi=0$, and (3) an AR(1) process with $\theta=0$ and $\phi=0.5$.  The results for the cases of i.i.d errors are mainly reported to evaluate the impact of heteroskedasticity uncontaminated by serial dependence. Accordingly,  we set the block length  $b_T = 1$ for the ABB procedure and let the lag order  $p = 0$ for the $t_T^Q$ test to eliminate any potential impacts caused by incorrect  tuning parameters choices. For the cases of dependent errors, the block length $b_{T}$ is selected by using the data-driven method proposed in Subsection \ref{Sec:choice of para} and the lag length $p$  is chosen by re-scaled MAIC rule  \citep[see,][]{Cavaliere2015}
		with $p\leq \lbrack12(T/100)^{0.25}]$. 
		The t-ratio tests $t_T^B$ and $t_T^Q$ require estimation of the density function, which are calculated  by using  the classic kernel  method with  a Gaussian kernel and Silverman's rule of thumb bandwidth \citep{Zhao2014}.
		We generate 1000 datasets of random sample $\left\{  y_{t}\right\}_{t=1}^{T}$ for $T=100$ and $250$, and employ $B=499$ iterations for those  bootstrap-based tests. All testing results are run at  the 5\% significance level.  
		
		\begin{figure}[!htb]
			\centering	
			\includegraphics[width=16cm, trim={0cm 0 0 0}]{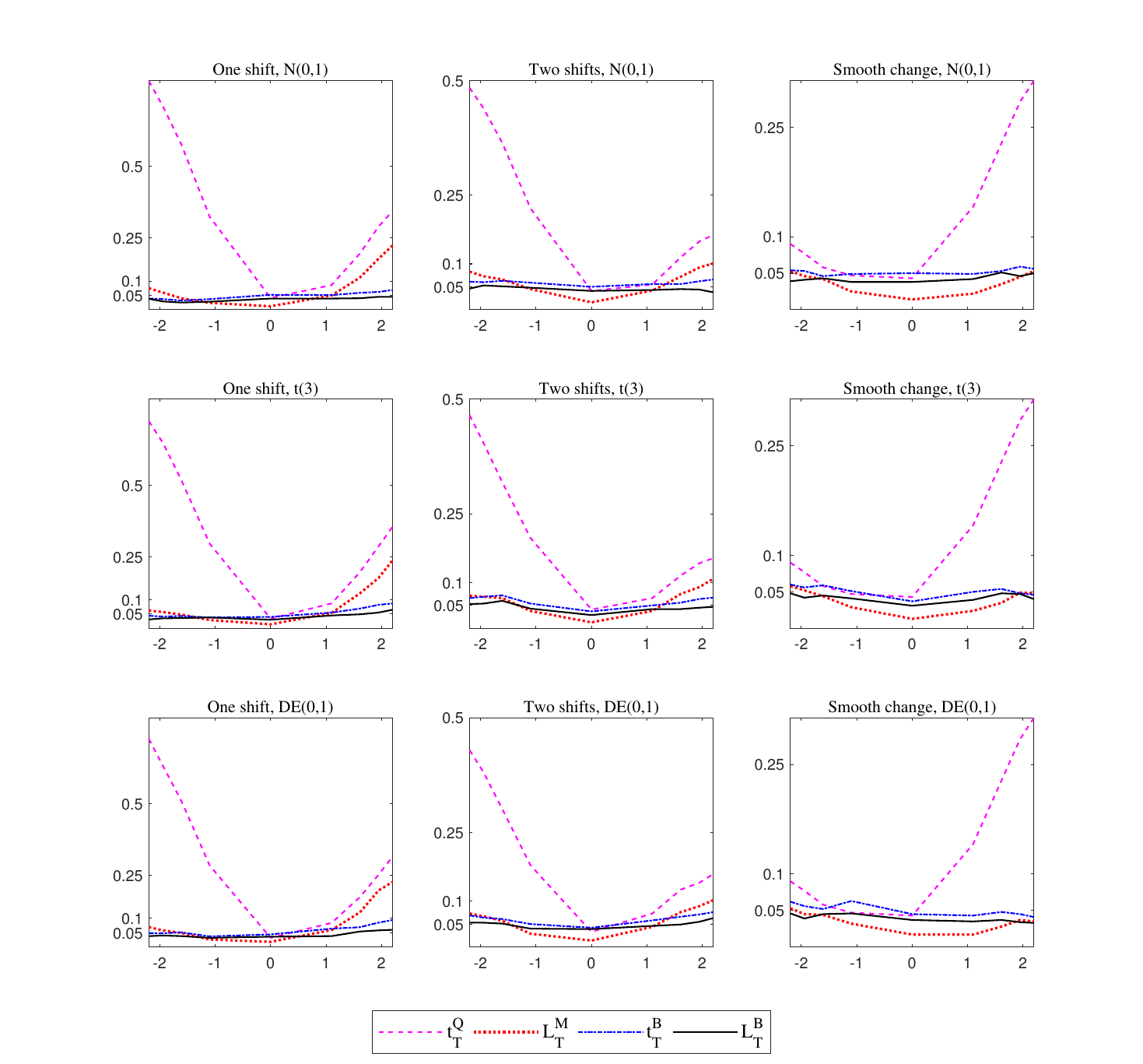}	
			\caption{Empirical sizes (y-axis) of the $t_T^Q$, $L_T^M$, $t_T^B$, and $L_T^B$ tests for $\sigma_1$ between $1/9$ and 9 (x-axis, log scale) when the error term follows an MA(1) process. Nominal size is 0.05 and sample size $T = 100$. }\label{Figure size plot ma}
		\end{figure}

		Figures \ref{Figure size plot iid}-\ref{Figure size plot ar} plot  the empirical sizes of the four considered LAD-based tests with respect  to  $\sigma_1 \in \{1/9,1/7,1/5,1/3,1,3,5,7,9 \}$  for the cases where the error follows an i.i.d. sequence, an MA process, and an AR process, respectively, with $T = 100$. The corresponding figures for $T = 250$ are provided in the supplementary material. We first focus on Figure \ref{Figure size plot iid}. It can be observed that the $t_T^Q$ and $L^M_T$ tests are severely oversized under heteroskedasticity. For example, in the cases of a single shift in volatility, the sizes of the $t_T^Q$ test exceeds 0.5  when $ \sigma_1 = 1/9$. Compared to the results for $T = 250$ in the supplementary material,  increasing  the sample size does not  appear to mitigate the size distortion caused by heteroskedasticity. In contrast,  it is evident that the proposed ABB procedure effectively corrects the size distortion in $t_T^Q$ and $L^M_T$ tests in the presence of time-varying volatility. The sizes of the proposed bootstrap $t_T^B$ and $L_T^B$ tests are largely satisfactory all the time, though they also exhibit  slightly oversized distortions. And the size performance improves to some extent as the sample size $T$ increases from 100 to 250.  In addition, we find that the $L_T^B$ test  generally performs slightly better than the $t_T^B$ test, which may be attributed to the fact that the $L_T^B$ statistic does not require  estimating the density function $f(0)$.
		
		Now we turn to Figures \ref{Figure size plot ma} and \ref{Figure size plot ar}. First, it is evident that the size properties of the  $t_T^B$ and $L_T^B$ tests are satisfactory across different $\sigma_1$ values, error distributions, and  volatility types,  remaining close to the nominal level 0.05, which indicates that the proposed LAD-based bootstrap tests can control size well in the presence of both heteroskedasticity and serial dependence. Then, we can further observe that the $t_T^Q$ test still exhibits severe upward size distortion, as indicated by the similarity of its size curve patterns to those shown in Figure \ref{Figure size plot iid}. We point out that the $t_T^Q$ test allows for high-order correlation in the data, consequently the size distortion is only attributed to time-varying volatility. Next, for the $L_T^M$ test, the dependence in the error results in a downward bias in its size. This is evident from the observation that the empirical sizes of the $L_T^M$ test are largely underestimated, with all values around 1\% under homoskedasticity ($\sigma_1 = 1$). In the presence of both heteroskedasticity and serial dependence, the distortion in the size of the $L_T^M$ test is ambiguous and varies depending on specific scenarios. For example, the $L_T^M$ test  exhibits undersized distortion in most cases when the error follows an AR(1) process, but shows upward distortion when the error follows an MA(1) process with one large upward shift in volatility.

		\begin{figure}[!htb]
			\centering	
			\includegraphics[width=16cm, trim={0cm 0 0 0}]{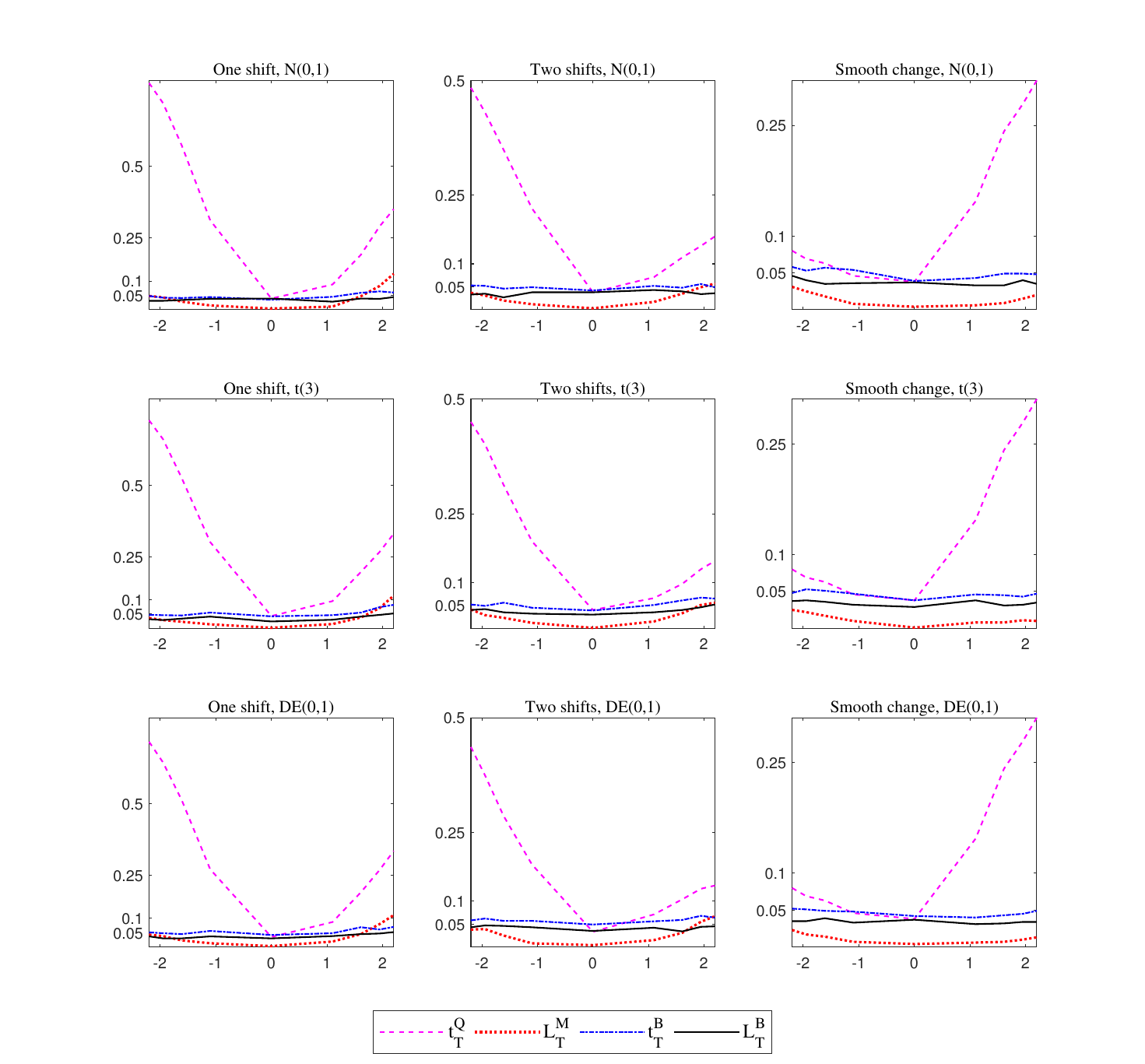}	
			\caption{Empirical sizes (y-axis) of the $t_T^Q$, $L_T^M$, $t_T^B$, and $L_T^B$ tests for $\sigma_1$ between $1/9$ and 9 (x-axis, log scale) when the error term follows an AR(1) process. Nominal size is 0.05 and sample size $T = 100$. }\label{Figure size plot ar}
		\end{figure}

		\subsection{Power} 
		In this subsection, we further investigate the testing power of the proposed LAD-based tests in comparison with popular existing approaches that theoretically accommodate both time-varying variance and serial correlation. Specifically, we consider the following benchmarks:
		
		(1) $MZ^{T}$: The time-transformed $MZ_\alpha$ test proposed by  \cite{Cavaliere2008b}.
		
		(2) $MZ^{B}$: The $MZ_\alpha$ test using the critical values from wild bootstrap procedure  { \citep{Cavaliere2008a,Cavaliere2009}}.
		
		(3) $MZ^{S}$: The $MZ_\alpha$ test using simulated critical values based on the variance profile estimator, see \cite{Cavaliere2007}.
		
		(4) $MZ^{M}$: The $MZ_\alpha$ test calculated by purging heteroskedasticity, see \cite{Beare2018}.
		
		(5) $ALR$: The adaptive likelihood ratio test using wild bootstrap critical values proposed by \cite{Boswijk2018}.

		The first four tests  take the form of the classical M-statistics developed by \cite{Perron1996} and \cite{Ng2001}\footnote{The M statistics generally comprise three types of test statistics, which are defined as		
			$$	MZ_{\alpha}:=\frac{T^{-1}y_{T}^{2}-s_{ar}^{2}(p)}{2T^{-2}\sum_{t=1}%
				^{T}y_{t-1}^{2}},\ MSB := \left(  T^{-2}\sum_{t=1}	^{T}y_{t-1}^{2}/s_{ar}^{2}(p)\right)  ^{1/2},\ MZ_{t}:= MZ_{\alpha}\times MSB, $$
			where $s_{ar}^{2}(p)=\hat{\sigma}^{2}/(1-\sum_{i=1}^{p}\hat{\beta}_{i})$,$\left\{  \hat{\beta}_{i}\right\}  _{i=1}^{p}$ and $\hat{\sigma}^{2}$ are the OLS slope and variance estimators from the regression $\Delta y_{t}=\beta_{0}y_{t-1}+\sum_{i=1}^{p}\beta_{i}\Delta y_{t-i}+e_{t}$. Because the performance of three M tests are quite similar, for brevity we only report the results of $MZ_\alpha$-type tests here.}.   The lag orders in $MZ$-type tests are  selected by the re-scaled MAIC rule.  For the kernel bandwidth used in computing the test statistic $MZ^{M}$ of \cite{Beare2018}, we set it to 0.1,  as this value yields the best size and power  according to his simulation results. The $ALR$ test is a powerful unit root test which can reach the asymptotic power envelope under Gaussianity.  We use the exponential kernel
		$k(x) = exp(-5|x|)$ with the window width selected by their cross-validation procedure. In this subsection, we do not compare the results of the $t_T^Q$ and $L_T^M$ tests since they do not exhibit adequate size control and are not expected to provide reliable inference under heteroskedasticity.

		\begin{table}[!htp]
			\centering
			\caption{Empirical powers under heteroskedasticity with i.i.d. errors, $\sigma_1 = 5$, and $c = 10$.}
			\label{Table power hetero}%
			\begin{tabular}{ccccccccc}
				\toprule 
				$\sigma_t$           & T                    & $MZ^T$ & $MZ^B$ & $MZ^S$ & $MZ^M$ & $ALR$ & $t_T^{B}$ & $L_T^B$ \\ \hline
				\multicolumn{1}{l}{} & \multicolumn{1}{l}{} & \multicolumn{7}{c}{N(0,1)}                                     \\
				One shift            & 100                  & 0.299 & 0.463 & 0.293 & 0.564   & 0.605  & 0.554     & 0.578   \\
				& 250                  & 0.371 & 0.430 & 0.386 & 0.531   & 0.615  & 0.511     & 0.554   \\
				Two shifts           & 100                  & 0.296 & 0.483 & 0.328 & 0.497   & 0.811  & 0.805     & 0.763   \\
				& 250                  & 0.348 & 0.504 & 0.444 & 0.488   & 0.826  & 0.798     & 0.765   \\
				Smooth change        & 100                  & 0.483 & 0.668 & 0.561 & 0.498   & 0.752  & 0.641     & 0.686   \\
				& 250                  & 0.449 & 0.635 & 0.601 & 0.544   & 0.766  & 0.620     & 0.670   \\ \hline
				\multicolumn{1}{l}{} & \multicolumn{1}{l}{} & \multicolumn{7}{c}{t(3)}                                       \\
				One shift            & 100                  & 0.333 & 0.494 & 0.247 & 0.538   & 0.717  & 0.782     & 0.809   \\
				& 250                  & 0.361 & 0.464 & 0.327 & 0.609   & 0.704  & 0.832     & 0.839   \\
				Two shifts           & 100                  & 0.337 & 0.524 & 0.286 & 0.418   & 0.879  & 0.947     & 0.938   \\
				& 250                  & 0.280 & 0.550 & 0.393 & 0.544   & 0.907  & 0.959     & 0.964   \\
				Smooth change        & 100                  & 0.456 & 0.692 & 0.445 & 0.547   & 0.846  & 0.884     & 0.904   \\
				& 250                  & 0.456 & 0.692 & 0.556 & 0.577   & 0.831  & 0.904     & 0.928   \\ \hline
				\multicolumn{1}{l}{} & \multicolumn{1}{l}{} & \multicolumn{7}{c}{DE(0,1)}                                    \\
				One shift            & 100                  & 0.301 & 0.429 & 0.233 & 0.474   & 0.667  & 0.863     & 0.867   \\
				& 250                  & 0.347 & 0.437 & 0.357 & 0.613   & 0.654  & 0.880     & 0.883   \\
				Two shifts           & 100                  & 0.409 & 0.509 & 0.304 & 0.405   & 0.873  & 0.972     & 0.972   \\
				& 250                  & 0.336 & 0.524 & 0.423 & 0.537   & 0.880  & 0.986     & 0.980   \\
				Smooth change        & 100                  & 0.497 & 0.663 & 0.490 & 0.502   & 0.814  & 0.922     & 0.942   \\
				& 250                  & 0.495 & 0.656 & 0.587 & 0.633   & 0.810  & 0.944     & 0.961  
				\\ \bottomrule
			\end{tabular}
		\end{table}

		For the alternative specification, we set $c  = 10$ in (\ref{DGPs}) and fix $\sigma_1 = 5$. Qualitatively similar conclusions can be drawn for other values of $c$ and $\sigma_1$\footnote{In our supplementary material, we also report the testing results under homoskedasticity.}.
		The empirical sizes of the considered tests are tabulated in the supplementary material, indicating that all tests demonstrate good size control ability in most cases. Tables \ref{Table power hetero}-\ref{Table power ARH} report the empirical powers of the considered tests at the 5\% significance level under local to unity alternative.   
		
		Similarly, let us first focus on the cases of i.i.d errors. Under normality, the $ALR$ test exhibits the highest power in all cases. Note that when volatility has two shifts, the powers of the  proposed $t_T^B$ and $L_T^B$ tests are  close to that of the $ALR$ test. When the error follows heavy-tailed distributions, our two LAD-based bootstrap tests exhibit significant advantages over the other competing methods. For example, the powers of the $t_T^B$ and $L_T^B$ tests are 0.863 and 0.867, respectively, in the  $DE(0,1)$-distributed cases with one shift in volatility and $T = 100$, while the highest power among other competitors is only 0.654. 
		It is worth noting that the powers of all the tests vary to some extent depending on the forms of heteroskedasticity,  consistent with the finding that the asymptotic distributions of the LAD-based test statistics depend on the form of $\sigma(\cdot)$.

		\begin{table}[!htp]
			\centering
			\caption{Empirical powers under heteroskedasticity with  MA(1) errors, $\sigma_1 = 5$, and $c = 10$.}
			\label{Table power MAH}
			\begin{tabular}{ccccccccc}
				\toprule 
				$\sigma_t$           & T                    & $MZ^T$ & $MZ^B$ & $MZ^S$ & $MZ^M$ & $ALR$ & $t_T^{B}$ & $L_T^B$ \\ \hline
				\multicolumn{1}{l}{} & \multicolumn{1}{l}{} & \multicolumn{7}{c}{N(0,1)}                                     \\
				One shift            & 100                  & 0.106 & 0.127 & 0.265 & 0.282   & 0.259  & 0.428     & 0.436   \\
				& 250                  & 0.154 & 0.212 & 0.331 & 0.386   & 0.402  & 0.434     & 0.463   \\
				Two shifts           & 100                  & 0.234 & 0.238 & 0.229 & 0.342   & 0.645  & 0.755     & 0.680   \\
				& 250                  & 0.230 & 0.287 & 0.334 & 0.353   & 0.726  & 0.735     & 0.718   \\
				Smooth change        & 100                  & 0.193 & 0.228 & 0.320 & 0.235   & 0.491  & 0.521     & 0.530   \\
				& 250                  & 0.334 & 0.389 & 0.452 & 0.431   & 0.611  & 0.524     & 0.550   \\ \hline
				\multicolumn{1}{l}{} & \multicolumn{1}{l}{} & \multicolumn{7}{c}{t(3)}                                       \\
				One shift            & 100                  & 0.042 & 0.163 & 0.218 & 0.248   & 0.344  & 0.622     & 0.617   \\
				& 250                  & 0.142 & 0.236 & 0.290 & 0.427   & 0.491  & 0.700     & 0.700   \\
				Two shifts           & 100                  & 0.232 & 0.271 & 0.182 & 0.375   & 0.731  & 0.895     & 0.859   \\
				& 250                  & 0.262 & 0.365 & 0.318 & 0.430   & 0.816  & 0.911     & 0.907   \\
				Smooth change        & 100                  & 0.167 & 0.286 & 0.269 & 0.304   & 0.596  & 0.761     & 0.765   \\
				& 250                  & 0.342 & 0.443 & 0.441 & 0.379   & 0.687  & 0.810     & 0.830   \\ \hline
				\multicolumn{1}{l}{} & \multicolumn{1}{l}{} & \multicolumn{7}{c}{DE(0,1)}                                    \\
				One shift            & 100                  & 0.074 & 0.148 & 0.223 & 0.240   & 0.267  & 0.591     & 0.582   \\
				& 250                  & 0.198 & 0.231 & 0.340 & 0.453   & 0.423  & 0.649     & 0.674   \\
				Two shifts           & 100                  & 0.286 & 0.249 & 0.194 & 0.323   & 0.696  & 0.876     & 0.846   \\
				& 250                  & 0.294 & 0.329 & 0.346 & 0.432   & 0.764  & 0.913     & 0.903   \\
				Smooth change        & 100                  & 0.214 & 0.245 & 0.309 & 0.271   & 0.561  & 0.750     & 0.745   \\
				& 250                  & 0.434 & 0.416 & 0.459 & 0.476   & 0.641  & 0.791     & 0.805  
				\\ \bottomrule
			\end{tabular}
		\end{table}

		Next, we study the empirical powers  in the presence of autocorrelated errors. From Tables \ref{Table power MAH} and \ref{Table power ARH}, it can be observed that the proposed two tests show  pronounced  superiority over  other benchmarks in all heavy-tailed cases. Moreover,  our tests also achieve the highest power in some normal cases. For example, with one shift in volatility and $T = 100$, the $t_T^B$ and $L_T^B$ tests achieve powers of 0.428 and 0.436, respectively, when the error follows an MA(1) process,  outperforming the other competing tests. We note that the poor performance of the $ALR$ and $MZ$ type tests under normality with serial correlation  may result from the inaccurate selection of lag length, given that these tests perform better than our tests in i.i.d. cases where the lag order is correctly specified.

		In the supplementary material, the testing results under homoskedasticity are reported for all considered tests, including the $t_T^Q$ and $L_T^M$ tests. Several interesting findings are summarized as follows: (1)  There is little difference between the powers of the tests $L_T^B$ and $L_T^M$ all the time, which indicates that the heteroskedasticity-accommodating modification does not appear to bring about any significant power loss in $L_T^B$ under homoskedasticity. (2) The t-ratio  $t_T^B$ test exhibits much higher power than the $t_T^{Q}$ test, which can be attributed to the fact that the ABB procedure  fully utilizes the  information at the median, while the sieve bootstrap algorithm for the $t_T^{Q}$ test  does not. (3) The $L_T^B$ test exhibits significantly higher power than $L_T^M$ in the presence of serial dependence, as it benefits from the block bootstrap algorithm specifically designed to account for mixing dependence.

		To sum up, the proposed  LAD-based bootstrap tests exhibit reasonable sizes  in the presence of various unconditional heteroskedasticity and weak dependence, and also  achieve higher  powers than other popular unit root tests when the error term is heavy-tailed distributed.

		\begin{table}[!htp]
			\centering
			\caption{Empirical powers under heteroskedasticity with  AR(1) errors, $\sigma_1 = 5$, and $c = 10$.}
			\label{Table power ARH}
			\begin{tabular}{ccccccccc}
				\toprule 
				$\sigma_t$           & T                    & $MZ^T$ & $MZ^B$ & $MZ^S$ & $MZ^M$ & $ALR$ & $t_T^{B}$ & $L_T^B$ \\ \hline
				\multicolumn{1}{l}{} & \multicolumn{1}{l}{} & \multicolumn{7}{c}{N(0,1)}                                     \\
				One shift            & 100                  & 0.090 & 0.140 & 0.264 & 0.228   & 0.230  & 0.334     & 0.320   \\
				& 250                  & 0.141 & 0.229 & 0.320 & 0.402   & 0.391  & 0.339     & 0.362   \\
				Two shifts           & 100                  & 0.210 & 0.262 & 0.227 & 0.321   & 0.612  & 0.730     & 0.664   \\
				& 250                  & 0.256 & 0.301 & 0.305 & 0.372   & 0.698  & 0.724     & 0.702   \\
				Smooth change        & 100                  & 0.184 & 0.251 & 0.342 & 0.264   & 0.453  & 0.468     & 0.449   \\
				& 250                  & 0.347 & 0.391 & 0.448 & 0.421   & 0.599  & 0.456     & 0.466   \\ \hline
				\multicolumn{1}{l}{} & \multicolumn{1}{l}{} & \multicolumn{7}{c}{t(3)}                                       \\
				One shift            & 100                  & 0.060 & 0.170 & 0.215 & 0.234   & 0.339  & 0.480     & 0.478   \\
				& 250                  & 0.205 & 0.265 & 0.305 & 0.415   & 0.499  & 0.572     & 0.578   \\
				Two shifts           & 100                  & 0.238 & 0.296 & 0.183 & 0.336   & 0.736  & 0.827     & 0.802   \\
				& 250                  & 0.270 & 0.362 & 0.303 & 0.443   & 0.808  & 0.845     & 0.838   \\
				Smooth change        & 100                  & 0.168 & 0.319 & 0.302 & 0.280   & 0.595  & 0.637     & 0.629   \\
				& 250                  & 0.351 & 0.444 & 0.439 & 0.390   & 0.684  & 0.701     & 0.717   \\ \hline
				\multicolumn{1}{l}{} & \multicolumn{1}{l}{} & \multicolumn{7}{c}{DE(0,1)}                                    \\
				One shift            & 100                  & 0.061 & 0.170 & 0.228 & 0.226   & 0.259  & 0.465     & 0.450   \\
				& 250                  & 0.214 & 0.254 & 0.330 & 0.477   & 0.429  & 0.495     & 0.510   \\
				Two shifts           & 100                  & 0.283 & 0.267 & 0.192 & 0.331   & 0.674  & 0.811     & 0.777   \\
				& 250                  & 0.283 & 0.320 & 0.317 & 0.406   & 0.752  & 0.819     & 0.826   \\
				Smooth change        & 100                  & 0.253 & 0.274 & 0.309 & 0.273   & 0.540  & 0.591     & 0.593   \\
				& 250                  & 0.410 & 0.440 & 0.480 & 0.488   & 0.657  & 0.637     & 0.643
				\\ \bottomrule
			\end{tabular}
		\end{table}

		\section{Empirical application}
		\label{section empirical}
		In this section, we apply the proposed   {LAD-based bootstrap} unit root tests to study the validity of the purchasing power parity (PPP) hypothesis in 16 EU countries. The PPP hypothesis is an important cornerstone for many open-economy macroeconomic models. It demonstrates that foreign currencies should have the same purchasing power in the long run in a well-functioning world market, which also indicates that the real exchange rate should be stationary. Therefore, the rejection of the unit root is often used as evidence to support the PPP hypothesis, see \cite{Taylor2001}, \cite{Bahmani2007} and \cite{Boswijk2018}. However, there is conflicting empirical evidence. The studies by \cite{OC1998}, \cite{Galvao2009} and \cite{Bahmani2017} among others  {all} suggest that the issue has not been completely solved yet. Hence we revisit this problem by using the new proposed unit root tests.  {A total of} 240 observations of monthly real effective exchange rates (REERS) for each country were collected over the period from 01/2000 to 12/2019, sourced from the website of the Bank for International Settlements\footnote{http://www.bis.org/statistics/eer.htm} (BIS).
		\begin{figure}[htbp!]
			\centering
			\includegraphics[width=1.0\textwidth, trim=3.2cm 0cm 0cm 0cm]
			{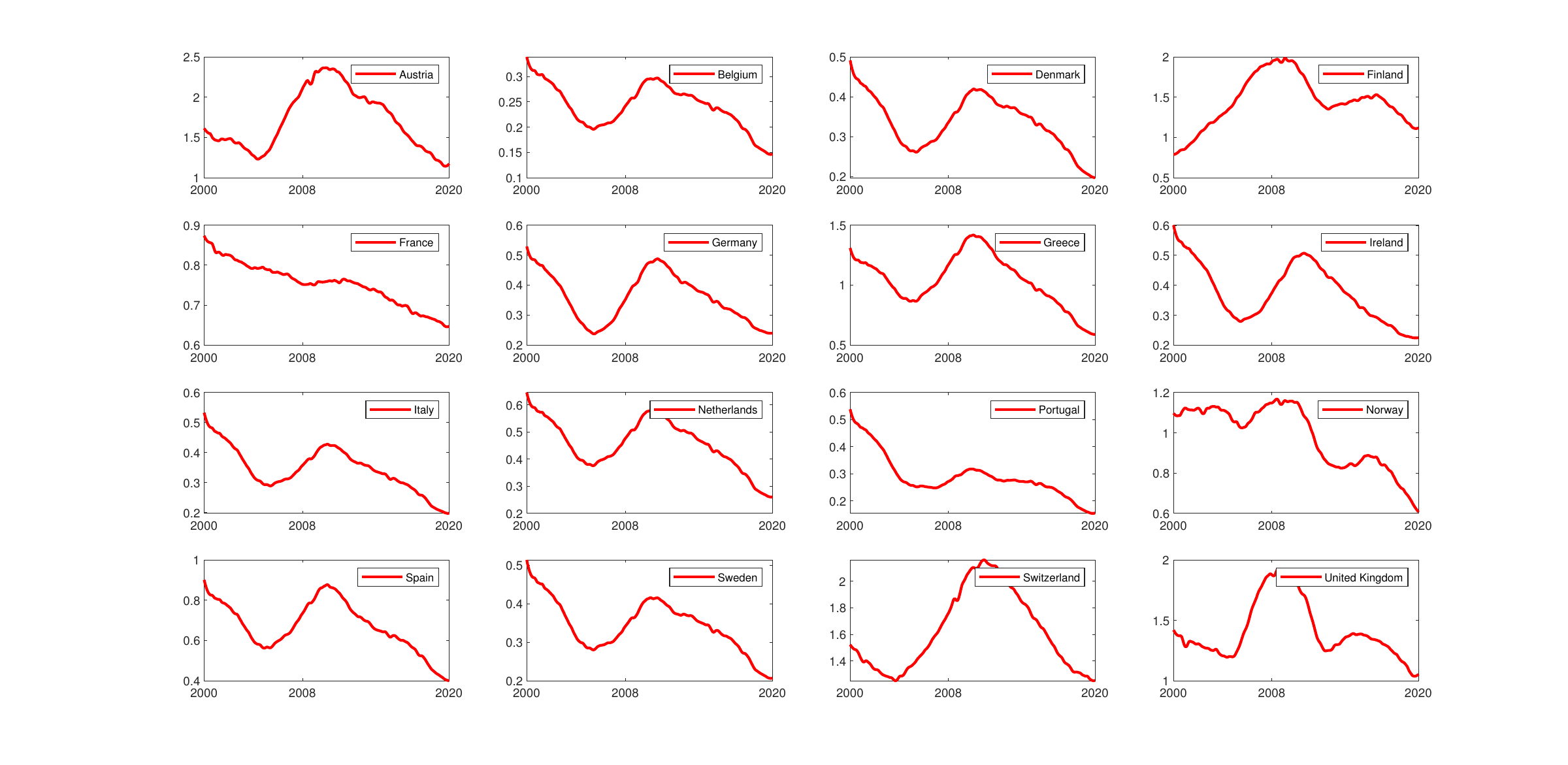}%
			\caption{\small{Non-parametric volatility estimate for 16 EU countries from 01/2000 to 12/2019.}}
			\label{plot volatlity}
		\end{figure}
		
		\begin{figure}[htbp!]
			\centering
			\includegraphics[width=1.0\textwidth, trim=3.2cm 0cm 0cm 0cm]
			{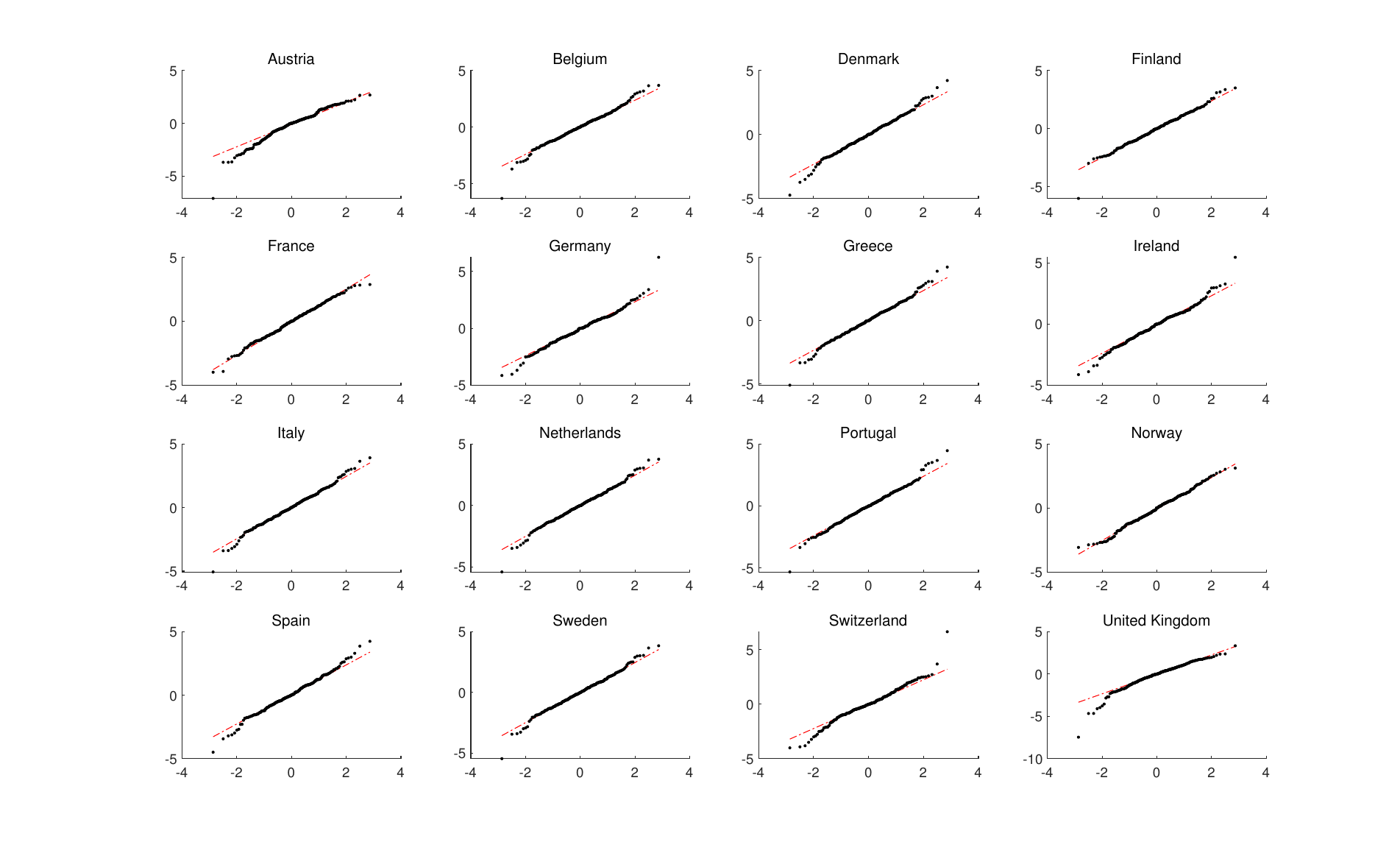}%
			\caption{\small{Q-Q plots of errors, filtered by estimated volatility, for 16 EU countries.}}
			\label{plot-QQplots}
		\end{figure}
		
		Following \cite{Papell1997} and \cite{OC1998}, we adopt the tests of unit roots with an unknown mean, and demean the original data based on (\ref{olsmu}) with $\bar{c} = 7$. The other setups for all the tests are the same as those specified in the simulation section. As the by-products of our tests, nonparametric estimates of the volatilities are also plotted in Figure \ref{plot volatlity}, which shows that the volatility estimates are obviously non-constant, exhibiting inverted "N"-shape patterns for most countries. We also note that the two highly volatile periods observed from the volatility paths coincide with the well-known Dot-com bubble around 2000 and the financial crisis around 2008. This suggests the homoskedastic assumption imposed on traditional unit root tests might be violated. In addition, we also provide the Q-Q plots of the standardized errors (filtered by the estimated volatility) against standard normal distribution in Figure \ref{plot-QQplots}. It is seen that most tail behaviors of the errors  {show clear deviations} from normality. The presence of heteroskedasticity \textcolor{black}{along} with heavy tails,  {as shown} in Figures \ref{plot volatlity}-\ref{plot-QQplots}, suggests  {that it would be} more suitable to employ the newly proposed tests in examining the PPP hypothesis.
		
		\begin{table}[htbp!]
			\centering
			\caption{Unit root testing results for monthly real effective exchange rates of 16 EU countries.}
			\label{Table empirical results}
			\begin{threeparttable}
				\begin{tabular}{cccccccccc}
					\noalign{\global\arrayrulewidth1pt}\hline \noalign{\global\arrayrulewidth0.3pt} & $MZ^T$ & $MZ^B$ & $MZ^S$ & $MZ^{M}$ & $t_T^Q$ & $L_T^M$ & $ALR$ & $t_T^{B}$ & $L_T^B$ \\ \hline
					Austria        & 0.270 & 0.259 & 0.563 & 0.322   & 0.413 & 0.619  & 0.447  & 0.501          & 0.501          \\
					Belgium        & 0.327 & 0.307 & 0.621 & 0.427  & 0.281 & 0.559   &   0.307 & 0.335          & 0.279          \\
					Denmark        & 0.459 & 0.413 & 0.756 & 0.566  & \textbf{0.084} & 0.729 & 
					0.477 &0.812          & 0.870          \\
					Finland        & 0.337 & 0.293 & 0.577 & 0.395   & 0.174 & 0.373 &  0.375& 0.228          & 0.220          \\
					France         & 0.554 & 0.525 & 0.818 & 0.635   & \textbf{0.044} & 0.639 &  0.517 & 0.557          & 0.557          \\
					Germany        & 0.359 & 0.317 & 0.679 & 0.406  & 0.214 & 0.331 &  0.417& 0.206          & 0.186          \\
					Greece         & 0.342 & 0.315 & 0.635 & 0.404  & 0.325 & 0.420 & 0.400 & 0.319          & 0.279          \\
					Ireland        & 0.245 & 0.156 & 0.511 & 0.269  & 0.224 & 0.130 &  \textbf{0.052} &\textbf{0.050} & \textbf{0.066} \\
					Italy          & 0.395 & 0.381 & 0.681 & 0.439   & 0.166 & 0.353 &  0.303& 0.307          & 0.315          \\
					Netherlands    & 0.315 & 0.365 & 0.649 & 0.388  & 0.188 &  0.307 &  0.365& 0.238          & 0.182          \\
					Portugal       &0.453 & 0.465 & 0.707 & 0.561  & 0.309 &  0.928 &  0.499&0.808          & 0.888          \\
					Norway         &0.862  & 0.844 & 0.968 & 0.881   & 0.930 &  0.976  & 0.904 & 1.000          & 0.996          \\
					Spain          & \textbf{0.061} & \textbf{0.094} & 0.152 & 0.200  & 0.116 & \textbf{0.092} &  \textbf{0.032} & \textbf{0.046} & \textbf{0.026} \\
					Sweden         & 0.817  & 0.878 & 0.972 & 0.874  & 0.343 &  0.505  &  0.818 & 0.717          & 0.711          \\
					Switzerland    & 0.694  & 0.756 & 0.916 & 0.666 & 0.281 & 0.333 & 0.160 &0.192          & 0.232          \\
					United Kingdom & 0.279 & 0.317 & 0.673 & 0.271 & 0.731 & 0.641 & 0.235 & 0.491          & 0.493        \\
					\noalign{\global\arrayrulewidth1pt}\hline \noalign{\global\arrayrulewidth0.3pt}  
				\end{tabular}
			\end{threeparttable}
		\end{table}
		
		The p-values for all \textcolor{black}{the} tests are reported in Table \ref{Table empirical results}. It is shown  that in most cases, all \textcolor{black}{the} tests \textcolor{black}{yield} consistent testing results, \textcolor{black}{failing to} reject the unit root hypothesis. Hence the PPP hypothesis does not hold true for most EU countries. \textcolor{black}{The} rejection results are observed for Denmark, France, Ireland and Spain. For Denmark and France, the test $t_T^Q$ rejects the null at the 10\% significance level. However,  {it should be noted} that  {these} rejections  {may} be spurious, given the seriously oversized distortion of $t_T^Q$  under heteroskedasticity demonstrated in the simulation part, meanwhile our LAD-based heteroskedasticity-robust tools do not lead to rejections\footnote{The rejections of the LAD-based bootstrap tests are crucial to conclude a false identification of the $t_T^Q$ test. This is because the $t_T^Q$ test exhibits higher power in heavy-tailed cases than other methods, which may also  lead to a pattern where the $t_T^Q$ rejects the null while others do not.}. For Ireland, only our two tests $t_B^{T}$ and $L_B^T$, along with the $ALR$ test, consistently reject the null with the p-values less than 7\%. For Spain, our two tests  reject the null at the 5\% significance level, which is also supported by the tests $MZ^T$, $MZ^B$, and $ALR$ with the p-values  of 6.1\%, 9.4\% and 3.2\% respectively. Consequently, we can conclude that  the PPP hypothesis holds true in  Ireland and Spain since 2000, and we do not have strong evidence supporting the validity of PPP conditions for other EU countries. 
		
		\section{Conclusion}
		\label{section conclusion}
		This paper investigates LAD-based unit root tests in the presence of unconditional heteroskedasticity of unknown form. We  start by studying the unit root tests  in a basic model, where the data are generated by an AR(1) process without an intercept, and then extend our testing procedure to allow for deterministic components. We establish the asymptotic properties of the LAD estimator  and the corresponding test statistics under the (nearly) unit root hypothesis, showing that the unconditional volatility affects the limiting distributions in a highly nonlinear manner.  
		To develop feasible LAD-based unit root tests in the presence of time-varying variance,  we further propose an adaptive block bootstrap procedure to approximate the null distribution of the LAD-based test statistics.  The asymptotic validity of the bootstrap procedure is studied, and the bootstrap pseudo sequence effectively preserves the unconditional heteroskedasticity and mixing dependence without requiring any parametric assumptions.
		Simulation results indicate that the LAD-based bootstrap tests  exhibit reasonable empirical sizes in the presence of time-varying volatility. Furthermore, they show higher power than the other benchmarks considered when dealing with heavy-tailed data.
		Lastly, we investigate the PPP hypothesis in 16 EU countries since 2000 using real effective exchange rates. The results show that our newly developed tests can serve as  effective supplements to existing tests when the data are characterized by both unconditional heteroskedasticity and non-Gaussian distributions. 
		
		Several extensions of this paper are conceivable. 
		One possible extension involves introducing  the multiplicative volatility model to simultaneously capture both conditional and unconditional heteroskedasticity in errors. 
		Another extension is to generalize our proposed tests to include covariates to improve testing power. We leave these extensions for future work.

\newpage

	\setcounter{page}{1}
\counterwithin{figure}{section}
\counterwithin{table}{section}
\setcounter{lemma}{0}
\counterwithin{lemma}{section}

{\Large \bf 
	\begin{center}
		Supplementary Material to ``Adaptive LAD-Based Bootstrap Unit Root Tests under Unconditional Heteroskedasticity''
	\end{center}
}

\begin{center}
	\large
	{Jilin Wu}$^{*}$, { Ruike Wu}$^{\dag}$, { and Zhijie Xiao}$^{\ddag}$
\end{center}

\begin{center}
	\textit{Department of Finance, School of
		Economics, Gregory and Paula Chow Institute for studies in {Economics}, and Wang
		Yanan Institute for Studies in Economics (WISE), Xiamen University$^{*}$}\\
	\textit{Department of Finance, School of Economics, Xiamen University$^{\dag}$} \\ 
	\textit{Department of Economics, Boston College$^{\ddag}$} 
\end{center}

In this supplementary material, Section \ref{app:tables} provides additional simulation results.  Sections \ref{section: proof 2}, \ref{section: proof 3}, and \ref{section: proof 4} collect the proofs for Lemmas 1-4 and Theorems 1-6, corresponding to Sections 2, 3, and 4 of the main paper, respectively.
Section \ref{section other lemma} presents three technical lemmas that serve as essential prerequisites for the proofs in Sections \ref{section: proof 2}, \ref{section: proof 3}, and \ref{section: proof 4}.

In what follows, the notation $C$ is a generic constant which may take different values at its different occurrences; $x^{\prime}$ is the transpose of $x$, $||x||_k = \left( E|x|^k \right)^{1/k}$ is the $L^{k}$-norm of random variable $x$, \textcolor{black}{and $\left\vert\left\vert x \right\vert\right\vert$ applied to matrix or vector $x$ denotes its norm $||x|| = \sqrt{trace(A^\prime A)}$ where $trace(\cdot)$ is the trace operator}; $\overset{p}{\rightarrow}$ and  $\overset{d}{\rightarrow}$ denote the convergence in probability and in distribution, respectively; The notations $E^*(\cdot)$, $\overset{d^*}{\rightarrow}$, $\overset{p^*}{\rightarrow}$, $O_{p^*}(\cdot)$ and $o_{p^*}(\cdot)$ are the corresponding operators under the bootstrap probability measure.

\begin{appendices}
	
	\section{ Additional Tables and Figures} \label{app:tables}
	
	In this section, we present the empirical sizes of the $MZ^T$, $MZ^B$, $MZ^S$, $MZ^M$, $ALR$, $t_T^B$, and $L_T^B$ tests for the cases where $\sigma_1 = 5$ under heteroskedasticity. The results for the i.i.d. errors, MA(1) errors, and AR(1) errors are reported in Tables \ref{Table size iid hetero} to \ref{Table size AR hetero}, respectively. Furthermore, we present the empirical sizes and powers of all considered tests, including $t_T^Q$ and $L_T^M$, under homoskedastic volatility. The corresponding results are reported in Table \ref{Table size homo} and \ref{Table power homo}, respectively. 
	Moreover, Figures \ref{Figure size plot iid}-\ref{Figure size plot AR} provide the empirical sizes of the $t_T^Q$, $L_T^M$, $t_T^B$ and $L_T^B$ tests with respect  to  $\sigma_1 \in \{1/9,1/7,1/5,1/3,1,3,5,7,9 \}$  for the cases where the error term follows an i.i.d. sequence, an MA process, and an AR process, respectively, with $T = 250$.

	\begin{table}[!htp]
		\centering
		\caption{Empirical sizes under heteroskedasticity with i.i.d. errors and $\sigma_1 = 5$.}
		\label{Table size iid hetero}
		\begin{tabular}{ccccccccc}
			\toprule 
			$\sigma_t$           & T                    & $MZ^T$ & $MZ^B$ & $MZ^S$ & $MZ^M$ & $ALR$ & $t_T^{B}$ & $L_T^B$ \\ \hline
			\multicolumn{1}{l}{} & \multicolumn{1}{l}{} & \multicolumn{7}{c}{N(0,1)}                                     \\
			One shift            & 100                  & 0.047 & 0.052 & 0.032 & 0.048   & 0.055  & 0.078     & 0.068   \\
			& 250                  & 0.038 & 0.047 & 0.040 & 0.056   & 0.048  & 0.070     & 0.064   \\
			Two shifts           & 100                  & 0.033 & 0.060 & 0.029 & 0.042   & 0.040  & 0.058     & 0.053   \\
			& 250                  & 0.032 & 0.052 & 0.041 & 0.060   & 0.052  & 0.061     & 0.061   \\
			Smooth change        & 100                  & 0.049 & 0.065 & 0.042 & 0.053   & 0.053  & 0.068     & 0.066   \\
			& 250                  & 0.050 & 0.068 & 0.053 & 0.054   & 0.053  & 0.054     & 0.047   \\ \hline
			\multicolumn{1}{l}{} & \multicolumn{1}{l}{} & \multicolumn{7}{c}{t(3)}                                       \\
			One shift            & 100                  & 0.031 & 0.061 & 0.023 & 0.052   & 0.064  & 0.077     & 0.077   \\
			& 250                  & 0.022 & 0.051 & 0.030 & 0.046   & 0.055  & 0.066     & 0.054   \\
			Two shifts           & 100                  & 0.025 & 0.059 & 0.027 & 0.055   & 0.056  & 0.070     & 0.070   \\
			& 250                  & 0.037 & 0.045 & 0.033 & 0.042   & 0.047  & 0.062     & 0.050   \\
			Smooth change        & 100                  & 0.037 & 0.061 & 0.030 & 0.047   & 0.059  & 0.057     & 0.064   \\
			& 250                  & 0.029 & 0.053 & 0.037 & 0.046   & 0.056  & 0.054     & 0.043   \\ \hline
			\multicolumn{1}{l}{} & \multicolumn{1}{l}{} & \multicolumn{7}{c}{DE(0,1)}                                    \\
			One shift            & 100                  & 0.038 & 0.062 & 0.034 & 0.057   & 0.055  & 0.079     & 0.073   \\
			& 250                  & 0.035 & 0.055 & 0.045 & 0.044   & 0.046  & 0.062     & 0.049   \\
			Two shifts           & 100                  & 0.023 & 0.058 & 0.030 & 0.059   & 0.059  & 0.070     & 0.069   \\
			& 250                  & 0.030 & 0.046 & 0.036 & 0.045   & 0.038  & 0.053     & 0.044   \\
			Smooth change        & 100                  & 0.041 & 0.062 & 0.040 & 0.046   & 0.053  & 0.056     & 0.067   \\
			& 250                  & 0.041 & 0.052 & 0.040 & 0.041   & 0.049  & 0.049     & 0.051  
			\\ \bottomrule
		\end{tabular}
	\end{table}

	\begin{table}[!htp]
		\centering
		\caption{Empirical sizes under heteroskedasticity with MA(1) errors and $\sigma_1 = 5$.}
		\label{Table size MA hetero}
		\begin{tabular}{ccccccccc}
			\toprule 
			$\sigma_t$           & T                    & $MZ^T$ & $MZ^B$ & $MZ^S$ & $MZ^M$ & $ALR$ & $t_T^{B}$ & $L_T^B$ \\ \hline
			\multicolumn{1}{l}{} & \multicolumn{1}{l}{} & \multicolumn{7}{c}{N(0,1)}                                     \\
			One shift            & 100                  & 0.116 & 0.038 & 0.098 & 0.056   & 0.035  & 0.057     & 0.037   \\
			& 250                  & 0.091 & 0.046 & 0.082 & 0.055   & 0.035  & 0.059     & 0.055   \\
			Two shifts           & 100                  & 0.042 & 0.025 & 0.027 & 0.035   & 0.033  & 0.057     & 0.042   \\
			& 250                  & 0.046 & 0.033 & 0.045 & 0.057   & 0.049  & 0.060     & 0.049   \\
			Smooth change        & 100                  & 0.076 & 0.036 & 0.063 & 0.065   & 0.050  & 0.055     & 0.045   \\
			& 250                  & 0.054 & 0.038 & 0.052 & 0.053   & 0.046  & 0.056     & 0.056   \\ \hline
			\multicolumn{1}{l}{} & \multicolumn{1}{l}{} & \multicolumn{7}{c}{t(3)}                                       \\
			One shift            & 100                  & 0.126 & 0.045 & 0.102 & 0.057   & 0.037  & 0.063     & 0.049   \\
			& 250                  & 0.072 & 0.047 & 0.066 & 0.052   & 0.043  & 0.057     & 0.038   \\
			Two shifts           & 100                  & 0.039 & 0.033 & 0.018 & 0.032   & 0.036  & 0.053     & 0.041   \\
			& 250                  & 0.044 & 0.038 & 0.038 & 0.039   & 0.035  & 0.046     & 0.037   \\
			Smooth change        & 100                  & 0.083 & 0.040 & 0.055 & 0.052   & 0.048  & 0.052     & 0.046   \\
			& 250                  & 0.050 & 0.045 & 0.043 & 0.059   & 0.046  & 0.042     & 0.035   \\ \hline
			\multicolumn{1}{l}{} & \multicolumn{1}{l}{} & \multicolumn{7}{c}{DE(0,1)}                                    \\
			One shift            & 100                  & 0.110 & 0.052 & 0.104 & 0.058   & 0.030  & 0.070     & 0.049   \\
			& 250                  & 0.075 & 0.037 & 0.064 & 0.045   & 0.029  & 0.049     & 0.041   \\
			Two shifts           & 100                  & 0.028 & 0.022 & 0.017 & 0.048   & 0.033  & 0.060     & 0.043   \\
			& 250                  & 0.040 & 0.038 & 0.042 & 0.043   & 0.041  & 0.044     & 0.038   \\
			Smooth change        & 100                  & 0.060 & 0.032 & 0.051 & 0.059   & 0.030  & 0.040     & 0.036   \\
			& 250                  & 0.041 & 0.028 & 0.040 & 0.046   & 0.046  & 0.041     & 0.036  
			\\ \bottomrule
		\end{tabular}
	\end{table}

	\begin{table}[!htp]
		\centering
		\caption{Empirical sizes under heteroskedasticity with AR(1) errors and $\sigma_1 = 5$.}
		\label{Table size AR hetero}
		\begin{tabular}{ccccccccc}
			\toprule 
			$\sigma_t$           & T                    & $MZ^T$ & $MZ^B$ & $MZ^S$ & $MZ^M$ & $ALR$ & $t_T^{B}$ & $L_T^B$ \\ \hline
			\multicolumn{1}{l}{} & \multicolumn{1}{l}{} & \multicolumn{7}{c}{N(0,1)}                                     \\
			One shift            & 100                  & 0.119 & 0.042 & 0.104 & 0.060   & 0.030  & 0.052     & 0.036   \\
			& 250                  & 0.097 & 0.054 & 0.089 & 0.057   & 0.032  & 0.049     & 0.041   \\
			Two shifts           & 100                  & 0.049 & 0.031 & 0.026 & 0.042   & 0.033  & 0.047     & 0.036   \\
			& 250                  & 0.038 & 0.041 & 0.034 & 0.054   & 0.050  & 0.058     & 0.049   \\
			Smooth change        & 100                  & 0.095 & 0.040 & 0.073 & 0.069   & 0.052  & 0.047     & 0.033   \\
			& 250                  & 0.056 & 0.047 & 0.060 & 0.053   & 0.049  & 0.058     & 0.047   \\ \hline
			\multicolumn{1}{l}{} & \multicolumn{1}{l}{} & \multicolumn{7}{c}{t(3)}                                       \\
			One shift            & 100                  & 0.124 & 0.041 & 0.094 & 0.061   & 0.035  & 0.059     & 0.037   \\
			& 250                  & 0.080 & 0.041 & 0.063 & 0.053   & 0.040  & 0.048     & 0.038   \\
			Two shifts           & 100                  & 0.044 & 0.042 & 0.019 & 0.035   & 0.040  & 0.059     & 0.039   \\
			& 250                  & 0.043 & 0.028 & 0.029 & 0.038   & 0.033  & 0.041     & 0.026   \\
			Smooth change        & 100                  & 0.088 & 0.044 & 0.071 & 0.059   & 0.051  & 0.051     & 0.031   \\
			& 250                  & 0.053 & 0.047 & 0.049 & 0.056   & 0.043  & 0.038     & 0.029   \\ \hline
			\multicolumn{1}{l}{} & \multicolumn{1}{l}{} & \multicolumn{7}{c}{DE(0,1)}                                    \\
			One shift            & 100                  & 0.134 & 0.065 & 0.111 & 0.065   & 0.035  & 0.058     & 0.040   \\
			& 250                  & 0.079 & 0.041 & 0.062 & 0.043   & 0.028  & 0.054     & 0.036   \\
			Two shifts           & 100                  & 0.039 & 0.033 & 0.019 & 0.043   & 0.033  & 0.060     & 0.035   \\
			& 250                  & 0.042 & 0.036 & 0.034 & 0.037   & 0.035  & 0.050     & 0.032   \\
			Smooth change        & 100                  & 0.070 & 0.038 & 0.060 & 0.062   & 0.026  & 0.040     & 0.033   \\
			& 250                  & 0.051 & 0.036 & 0.041 & 0.045   & 0.049  & 0.042     & 0.038  
			\\ \bottomrule
		\end{tabular}
	\end{table}

	\begin{table}[!htp]
		\centering
		\caption{Empirical sizes under homoskedasticity.}
		\label{Table size homo}
		\begin{tabular}{ccccccccccc}
			\toprule
			\multicolumn{1}{c}{Distribution} & \multicolumn{1}{c}{T} & $MZ^T$ & $MZ^B$ & $MZ^S$ & $MZ^M$ & $t_T^Q$ & $L_T^M$ & \multicolumn{1}{l}{$ALR$} & $t_T^{B}$ & $L_T^B$ \\ \hline
			&                       & \multicolumn{9}{c}{i.i.d}                                                                             \\
			N(0,1)                           & 100                   & 0.050 & 0.056 & 0.045 & 0.050   & 0.055   & 0.061   & 0.051                     & 0.059     & 0.058   \\
			& 250                   & 0.050 & 0.053 & 0.056 & 0.052   & 0.048   & 0.066   & 0.056                     & 0.063     & 0.060   \\
			t(3)                             & 100                   & 0.031 & 0.059 & 0.024 & 0.044   & 0.048   & 0.053   & 0.048                     & 0.057     & 0.052   \\
			& 250                   & 0.027 & 0.049 & 0.034 & 0.041   & 0.050   & 0.052   & 0.045                     & 0.055     & 0.048   \\
			DE(0,1)                          & 100                   & 0.041 & 0.054 & 0.035 & 0.042   & 0.047   & 0.065   & 0.061                     & 0.061     & 0.067   \\
			& 250                   & 0.040 & 0.042 & 0.035 & 0.044   & 0.048   & 0.047   & 0.041                     & 0.041     & 0.048   \\ \hline
			&                       & \multicolumn{9}{c}{MA(1)}                                                                             \\
			N(0,1)                           & 100                   & 0.050 & 0.025 & 0.044 & 0.057   & 0.038   & 0.015   & 0.033                     & 0.048     & 0.039   \\
			& 250                   & 0.050 & 0.034 & 0.049 & 0.052   & 0.036   & 0.011   & 0.043                     & 0.050     & 0.047   \\
			t(3)                             & 100                   & 0.052 & 0.024 & 0.032 & 0.045   & 0.042   & 0.014   & 0.032                     & 0.040     & 0.029   \\
			& 250                   & 0.056 & 0.039 & 0.039 & 0.046   & 0.047   & 0.013   & 0.036                     & 0.045     & 0.032   \\
			DE(0,1)                          & 100                   & 0.062 & 0.032 & 0.043 & 0.058   & 0.031   & 0.015   & 0.036                     & 0.045     & 0.035   \\
			& 250                   & 0.043 & 0.027 & 0.039 & 0.039   & 0.049   & 0.009   & 0.040                     & 0.042     & 0.034   \\ \hline
			&                       & \multicolumn{9}{c}{AR(1)}                                                                             \\
			N(0,1)                           & 100                   & 0.054 & 0.025 & 0.051 & 0.054   & 0.037   & 0.004   & 0.033                     & 0.038     & 0.037   \\
			& 250                   & 0.054 & 0.038 & 0.049 & 0.046   & 0.036   & 0.001   & 0.037                     & 0.044     & 0.039   \\
			t(3)                             & 100                   & 0.067 & 0.035 & 0.037 & 0.047   & 0.041   & 0.002   & 0.031                     & 0.040     & 0.029   \\
			& 250                   & 0.056 & 0.044 & 0.039 & 0.050   & 0.052   & 0.002   & 0.047                     & 0.044     & 0.038   \\
			DE(0,1)                          & 100                   & 0.065 & 0.036 & 0.052 & 0.057   & 0.041   & 0.003   & 0.034                     & 0.044     & 0.032   \\
			& 250                   & 0.045 & 0.034 & 0.039 & 0.042   & 0.043   & 0.001   & 0.033                     & 0.042     & 0.031  \\ \bottomrule
		\end{tabular}
	\end{table}

	\begin{table}[!htp]
		\centering
		\caption{Empirical powers under homoskedasticity.}
		\label{Table power homo}
		\begin{tabular}{ccccccccccc}
			\toprule
			\multicolumn{1}{c}{Distribution} & \multicolumn{1}{c}{T} & $MZ^T$ & $MZ^B$ & $MZ^S$ & $MZ^M$ & $t_T^Q$ & $L_T^M$ & \multicolumn{1}{l}{$ALR$} & $t_T^{B}$ & $L_T^B$ \\ \hline
			&                       & \multicolumn{9}{c}{i.i.d}                                                                             \\
			N(0,1)                           & 100                   & 0.653 & 0.715 & 0.621 & 0.649   & 0.194   & 0.657   & 0.703                     & 0.526     & 0.637   \\
			& 250                   & 0.688 & 0.731 & 0.700 & 0.696   & 0.194   & 0.652   & 0.717                     & 0.524     & 0.624   \\
			t(3)                             & 100                   & 0.712 & 0.756 & 0.521 & 0.669   & 0.490   & 0.876   & 0.816                     & 0.825     & 0.877   \\
			& 250                   & 0.667 & 0.757 & 0.651 & 0.759   & 0.596   & 0.922   & 0.812                     & 0.872     & 0.926   \\
			DE(0,1)                          & 100                   & 0.699 & 0.718 & 0.561 & 0.676   & 0.428   & 0.935   & 0.774                     & 0.876     & 0.936   \\
			& 250                   & 0.725 & 0.747 & 0.686 & 0.778   & 0.596   & 0.963   & 0.768                     & 0.931     & 0.955   \\ \hline
			&                       & \multicolumn{9}{c}{MA(1)}                                                                             \\
			N(0,1)                           & 100                   & 0.398 & 0.272 & 0.364 & 0.341   & 0.061   & 0.272   & 0.354                     & 0.395     & 0.469   \\
			& 250                   & 0.549 & 0.449 & 0.527 & 0.543   & 0.109   & 0.268   & 0.497                     & 0.401     & 0.518   \\
			t(3)                             & 100                   & 0.339 & 0.334 & 0.322 & 0.431   & 0.251   & 0.543   & 0.504                     & 0.682     & 0.716   \\
			& 250                   & 0.427 & 0.491 & 0.492 & 0.584   & 0.389   & 0.593   & 0.627                     & 0.742     & 0.785   \\
			DE(0,1)                          & 100                   & 0.333 & 0.285 & 0.339 & 0.374   & 0.166   & 0.481   & 0.414                     & 0.650     & 0.704   \\
			& 250                   & 0.583 & 0.474 & 0.531 & 0.657   & 0.363   & 0.503   & 0.562                     & 0.710     & 0.767   \\ \hline
			&                       & \multicolumn{9}{c}{AR(1)}                                                                             \\
			N(0,1)                           & 100                   & 0.396 & 0.292 & 0.355 & 0.366   & 0.070   & 0.089   & 0.342                     & 0.329     & 0.371   \\
			& 250                   & 0.529 & 0.484 & 0.537 & 0.584   & 0.114   & 0.072   & 0.518                     & 0.356     & 0.411   \\
			t(3)                             & 100                   & 0.353 & 0.367 & 0.336 & 0.399   & 0.274   & 0.213   & 0.504                     & 0.548     & 0.588   \\
			& 250                   & 0.483 & 0.512 & 0.499 & 0.555   & 0.389   & 0.224   & 0.622                     & 0.623     & 0.685   \\
			DE(0,1)                          & 100                   & 0.341 & 0.321 & 0.335 & 0.344   & 0.186   & 0.163   & 0.399                     & 0.500     & 0.530   \\
			& 250                   & 0.580 & 0.486 & 0.535 & 0.584   & 0.358   & 0.163   & 0.557                     & 0.537     & 0.601  \\ \bottomrule
		\end{tabular}
	\end{table}

	\begin{figure}[!htb]
		\centering	
		\includegraphics[width=16cm, trim={0cm 0 0 0}]{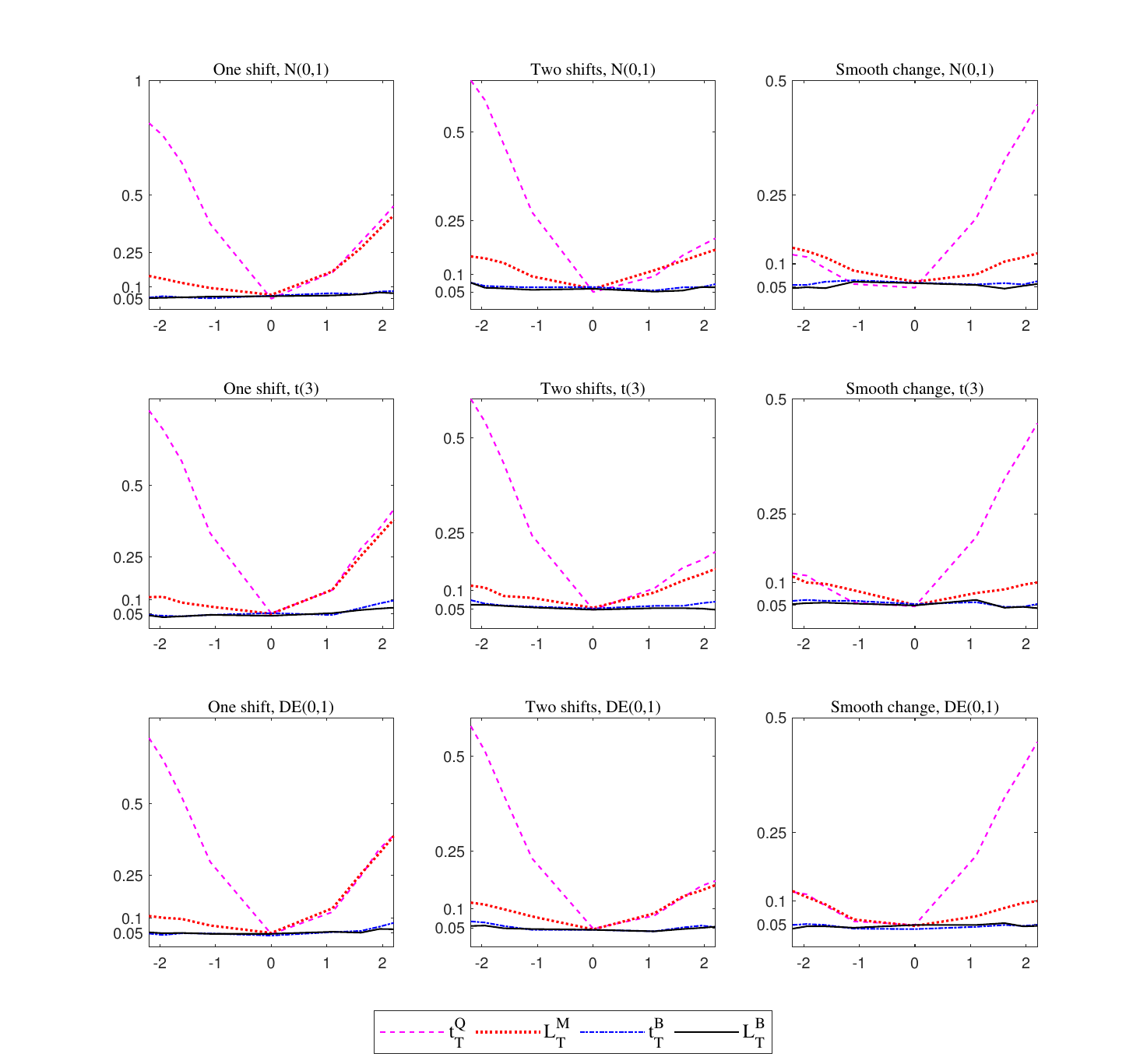}	
		\caption{Empirical sizes (y-axis) of the $t_T^Q$, $L_T^M$, $t_T^B$, and $L_T^B$ tests for $\sigma_1$ between $1/9$ and 9 (x-axis, log scale) when the error term follows an i.i.d process. Nominal size is 0.05 and sample size $T = 250$. }\label{Figure size plot iid}
	\end{figure}

	\begin{figure}[!htb]
		\centering	
		\includegraphics[width=16cm, trim={0cm 0 0 0}]{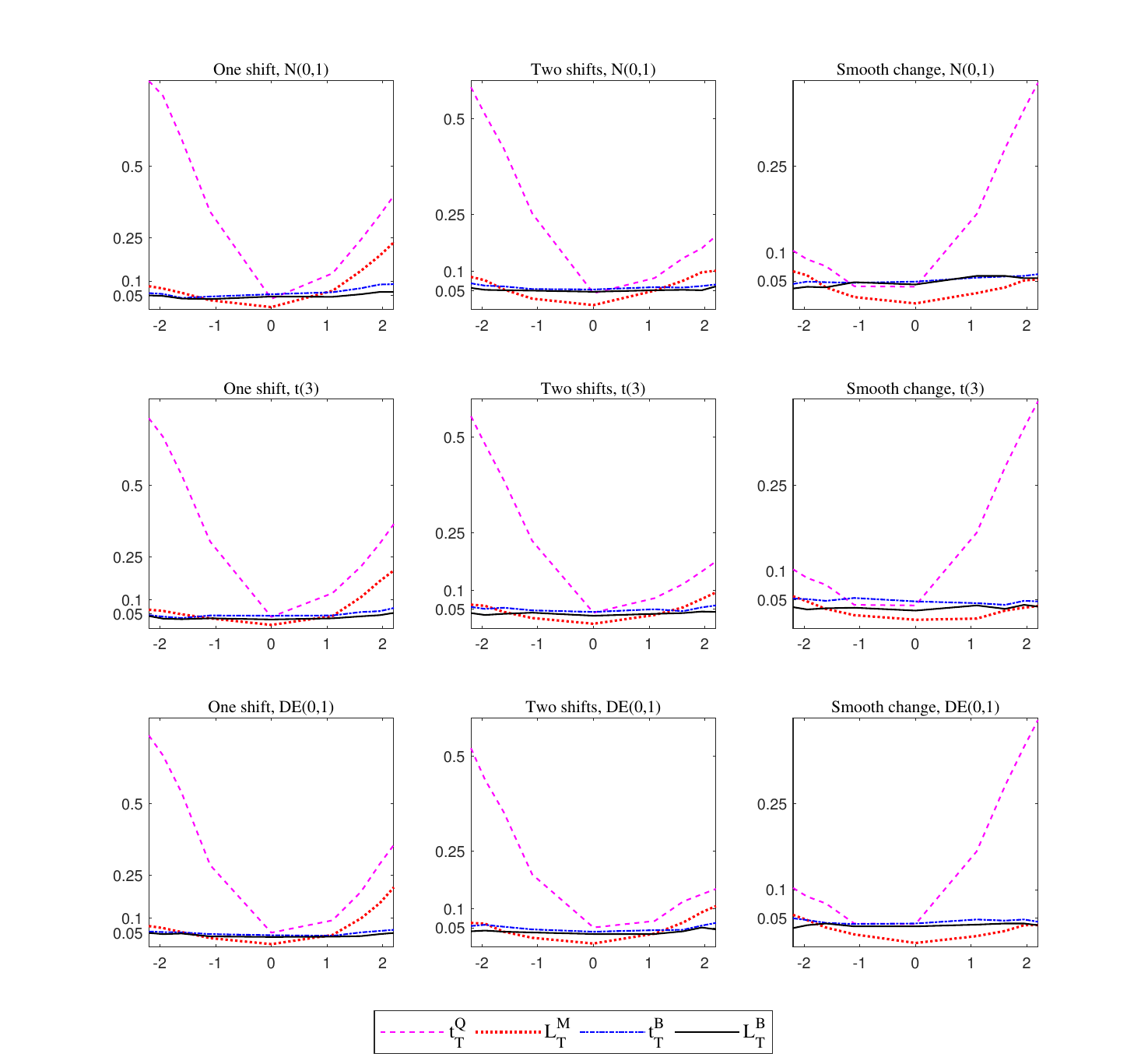}	
		\caption{Empirical sizes (y-axis) of the $t_T^Q$, $L_T^M$, $t_T^B$, and $L_T^B$ tests for $\sigma_1$ between $1/9$ and 9 (x-axis, log scale) when the error term follows an MA(1) process. Nominal size is 0.05 and sample size $T = 250$. }\label{Figure size plot MA}
	\end{figure}

	\begin{figure}[!htb]
		\centering	
		\includegraphics[width=16cm, trim={0cm 0 0 0}]{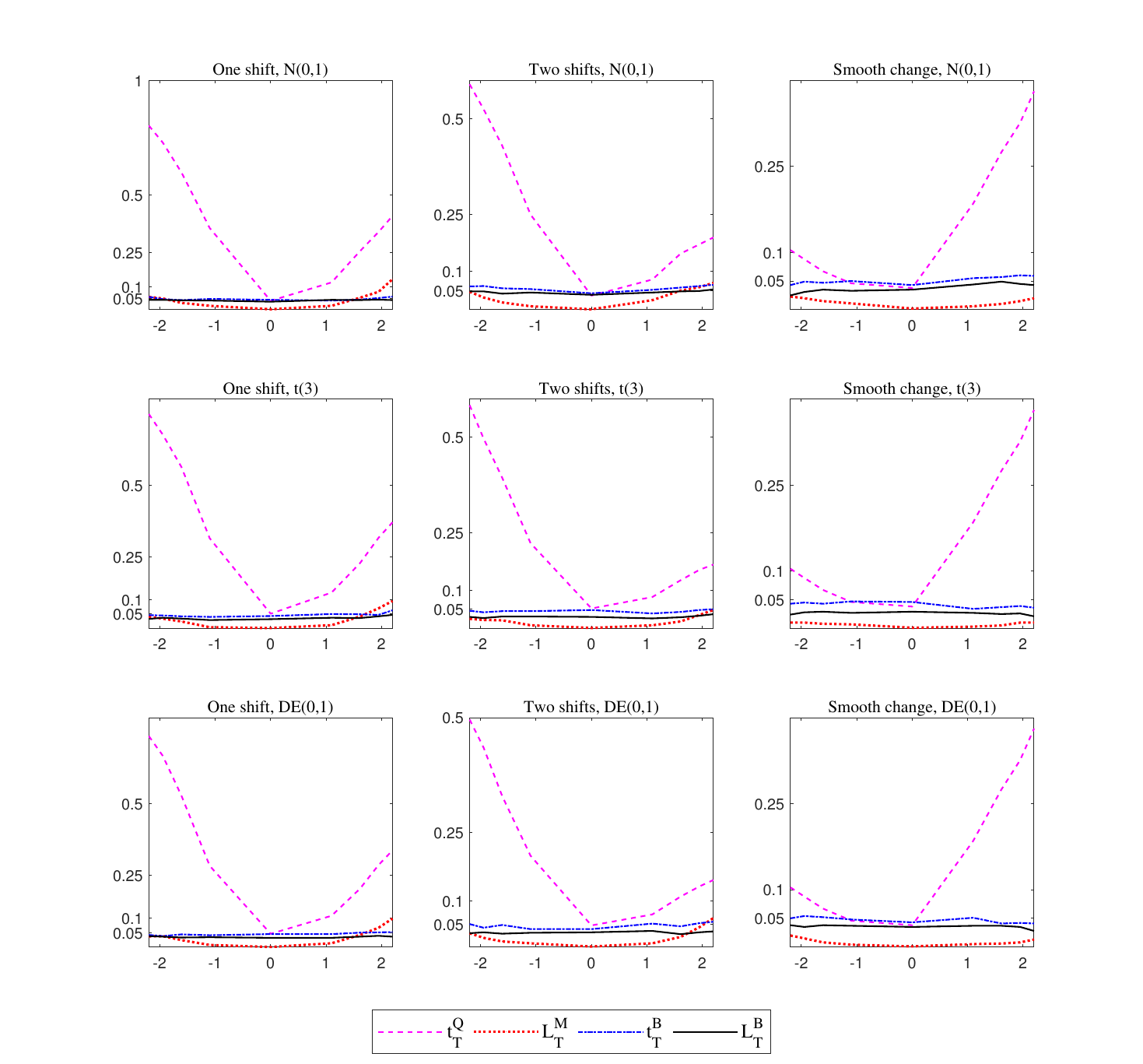}	
		\caption{Empirical sizes (y-axis) of the $t_T^Q$, $L_T^M$, $t_T^B$, and $L_T^B$ tests for $\sigma_1$ between $1/9$ and 9 (x-axis, log scale) when the error term follows an AR(1) process. Nominal size is 0.05 and sample size $T = 250$. }\label{Figure size plot AR}
	\end{figure}


	\section{Proofs of Theorems and Lemmas in Section 2}
	\label{section: proof 2}

	\subsection{Proof of Lemma 1}
	\begin{proof}[Proof of Lemma 1:] By Assumptions 1(i), 1(iii), and 2, $\{(\varepsilon_{t}, sgn(\varepsilon_{t}))\}_{t=1}^{\infty}$ is a strictly stationary mixing sequence with zero mean and the long-run variance-covariance $\Sigma$. By applying Corollary 2.2 of \cite{Phillips1986}, we immediately have equation (2.6) in the main paper.
		
		For equation (2.7), its proof is similar to that of Lemma 2 in \cite{Cavaliere2010}.  Let $U_t = diag(\sigma_t,1)$ and $\nu_{t}=(\varepsilon_{t},sgn(\varepsilon_{t}))^{\prime}$, denote $E(\cdot|\mathcal{G}_{t})$ by $E_{t}(\cdot)$, where $\mathcal{G}_{t}$ \textcolor{black}{is $\sigma$-filed generated by $\{\varepsilon_{t},\varepsilon_{t-1},\ldots \}$}, and define $\zeta_{t}=\sum_{l=0}^{\infty}\left(  E_{t}\nu_{t+l}-E_{t-1}\nu_{t+l}\right)$ and $z_{t}=\sum_{l=1}^{\infty}E_{t}\nu_{t+l}$. Following \cite{Hansen1992}, we have $\nu_{t}=\zeta_{t}+z_{t-1}-z_{t}$ and $E_{t-1}\zeta_{t}=0$, which leads to 
		\begin{equation}
			\frac{1}{\sqrt{T}}\sum_{t=1}^{\lfloor Tr \rfloor}U_{t}\nu_{t}=\frac{1}{\sqrt{T}}\sum
			_{t=1}^{\lfloor Tr \rfloor}U_{t}\zeta_{t}+\frac{1}{\sqrt{T}}\sum_{t=1}^{\lfloor Tr \rfloor}\left(
			U_{t}-U_{t-1}\right)  z_{t}-\frac{1}{\sqrt{T}}U_{\lfloor Tr \rfloor}z_{\lfloor Tr \rfloor}
			=L_{1}+L_{2}-L_{3}. 
		\end{equation}
		By Theorem 3.1 of \cite{Hansen1992}, we have (i) \textcolor{black}{$\frac{1}{\sqrt{T}}\mathop{sup}_{t\leq T}||z_t||\overset{p}{\rightarrow}0$}, (ii) $\frac{1}{{T}}\sum_{t=1}^{T}E\zeta_{t}^{2}<\infty$ and (iii) $\frac{1}{\sqrt{T}}\sum_{t=1}^{\lfloor Tr \rfloor}\zeta_{t} = \frac{1}{\sqrt{T}}\sum_{t=1}^{\lfloor Tr \rfloor}\nu_{t} + o_p(1)$. Based on the above results (i)-(iii), Theorem 2.1 of \cite{Hansen1992} and  $\frac{1}{\sqrt{T}}\sum_{t=1}^{\lfloor Tr \rfloor}\nu_{t}\overset{d}{\rightarrow}\left(  B_{1}(r),B_{2}(r)\right)^{\prime}$, we obtain
		\begin{equation}
			L_{1}\overset{d}{\rightarrow}\int_{0}^{r}\left(\begin{array}[c]{cc}\sigma(s) & 0\\0 & 1\end{array}\right)  d\left(\begin{array}[c]{c}B_{1}(s)\\B_{2}(s)\end{array}\right)  =\left(\begin{array}[c]{c}B_{\sigma}(r)\\B_{2}(r)\end{array}\right)  .
		\end{equation}
		Furthermore, it is not hard to prove that both $L_{2}$ and $L_{3}$ are $o_{p}(1)$ by using  result (i)  and the fact that $\sigma(\cdot)$ is positively bounded and nonstochastic. This completes the proof of (2.7). \end{proof}
	
	
	\subsection{Proof of Lemma 2}
	\begin{proof}[Proof of Lemma 2:]
		Let $\varsigma_{t}=(u_{t},sgn(u_{t}))^{\prime}$,  $V_{t}=\sum_{i=1}^{t}\varsigma_{i}$ and $V_{T,t}=\frac{1}{\sqrt{T}}V_{t}$. By Lemma 1, we have $V_{T,\lfloor Tr \rfloor}\overset {d}{\rightarrow}(B_{\sigma}(r),B_{2}(r))^{\prime}$. Similarly, define $\epsilon_{t}=\sum_{l=0}^{\infty}\left(  E_{t}\varsigma_{t+l}-E_{t-1}\varsigma_{t+l}\right)$ and  $m_{t}=\sum_{l=1}^{\infty}E_{t}\varsigma_{t+l}$, thus we have $\varsigma_{t}=\epsilon_{t}+m_{t-1}-m_{t}$ and $E_{t-1}\epsilon_{t}=0$, which leads to
		\begin{equation}
			\frac{1}{T}\sum_{t=1}^{\lfloor Tr \rfloor}V_{t}\varsigma_{t+1}^{\prime}=\frac{1}{T}\sum_{t=1}%
			^{\lfloor Tr \rfloor}V_{t}\epsilon_{t+1}^{\prime}+\frac{1}{T}\sum_{t=1}^{\lfloor Tr \rfloor}\varsigma_{t}m_{t}^{\prime}-\frac{1}{T}V_{\lfloor Tr \rfloor}m_{\lfloor Tr \rfloor+1}^{\prime}=\mathcal{L}%
			_{1}+\mathcal{L}_{2}-\mathcal{L}_{3}.
			\label{decompose l1l2l3} 
		\end{equation}
		Because $\varsigma_{t}$ can be rewritten as $\varsigma_{t}=U_{t}\nu_{t}$, where $U_{t}$ and $\nu_{t}$ have been defined in the proof of Lemma 1.  By Theorem 3.1 of \cite{Hansen1992}, we  also have (i') \textcolor{black}{$\frac{1}{\sqrt{T}}\mathop{sup}_{t\leq T}||m_{t}||\overset{p}{\rightarrow}0$}, (ii') $\frac{1}{T}\sum_{t=1}^{T}E\epsilon_{t}^{2}<\infty$ and (iii') $\frac{1}{\sqrt{T}}\sum_{t=1}^{\lfloor Tr \rfloor}\epsilon_{t}= V_{T,\lfloor Tr \rfloor} + o_p(1)$. Based on the results (i')-(iii'),  Theorem 2.1 of \cite{Hansen1992} and $V_{T,\lfloor Tr \rfloor}\overset{d}{\rightarrow}(B_{\sigma}(r),B_{2}(r))^{\prime}$, we obtain
		\begin{equation}
			\mathcal{L}_{1}\overset{d}{\rightarrow}\int_{0}^{r}(B_{\sigma}(r),B_{2}(r))^{\prime}d(B_{\sigma}(r),B_{2}(r)). 
			\label{L1 limit}
		\end{equation}
		For $\mathcal{L}_{2}$, according to Theorem 3.2 and Theorem 3.3 of \cite{Hansen1992}, we have 
		\begin{align}
			\frac{1}{T}\sum_{t=1}^{\lfloor Tr \rfloor}\varsigma_{t}m_{t}^{\prime}
			\overset{p}{\rightarrow}\frac{1}{T}\sum_{i=1}^{\lfloor Tr \rfloor}\sum_{j=i+1}^{\infty}E\left(
			\varsigma_{i}\varsigma_{j}^{\prime}\right) .
			\label{L2 limit}
		\end{align}
		For $\mathcal{L}_{3}$, we note that
		\textcolor{black}{		
			\begin{align}
				\mathop{sup}_{t\leq T}\frac{1}{T}\Vert V_{t}m_{t+1}^{\prime}\Vert
				\leq\mathop \frac{1}{\sqrt{T}}{sup}_{t\leq T}\left\Vert V_{t}\right\Vert
				\frac{1}{\sqrt{T}}\mathop{sup}_{t\leq T}\left\Vert m_{t}\right\Vert
				\overset{p}{\rightarrow}0 
				\label{L3 limit}
		\end{align} }
		since $\frac{1}{\sqrt{T}}{sup}_{t\leq T}\left\Vert V_{t}\right\Vert=O_{p}\left(  1\right)  $ and $\frac{1}{\sqrt{T}}\mathop{sup}_{t\leq T}\left\Vert m_{t}\right\Vert =o_{p}\left(  1\right)  $. Thus, by (\ref{decompose l1l2l3})-(\ref{L3 limit}), it follows that 
		\begin{equation}
			\frac{1}{T}\sum_{t=1}^{T}V_{t}\varsigma_{t+1}^{\prime}\overset{d}{\rightarrow}\int_{0}^{1}(B_{\sigma}(r),B_{2}(r))^{\prime}d(B_{\sigma}(r),B_{2}(r))+lim_{T\rightarrow \infty}\frac{1}{T}\sum_{i=1}^{T}\sum_{j=i+1}^{\infty}E\left(  \varsigma_{i}\varsigma_{j}^{\prime}\right),
			\label{lemma 2 equation}
		\end{equation}
		and  the first row and second column element of the above formula leads to 
		\begin{equation}
			T^{-1}\sum_{t=1}^{T}y_{t-1}sgn(u_t) \overset{d}{\rightarrow}\int_{0}^{1}B_\sigma(r)dB_2(r) + lim_{T\rightarrow \infty}\frac{1}{T}\sum_{i=1}^{T}\sum_{j=i+1}^{\infty}E\left(  u_{i}sgn(u_{j})\right).
			\label{FT}
		\end{equation}
		
		Now we define $M_{t}=\int_{0}^{vT^{-1}y_{t-1}}\left[  I\left(  u_{t}\leq s\right)  -I\left(u_{t}\leq0\right) \right]ds$ and $\mu_{t}=E\left(  M_{t}|\mathcal{F}_{t-1}\right)$, where $\mathcal{F}_{t-1}$ is $\sigma$-field generated by $\{u_{t-1}, u_{t-2}, \ldots\}$, then
		\begin{equation}
			\sum_{t=1}^{T}M_{t} = \sum_{t=1}^{T}\mu_{t}+\sum_{t=1}^{T}\left(  M_{t}-\mu_{t}\right).
		\end{equation}
		Subsequently, under Assumptions 1(ii) and 3, we can show that for each $v\in R$,
		\begin{align}
			\sum_{t=1}^{T}\mu_{t}  &  =\sum_{t=1}^{T}\int_{0}^{vT^{-1}y_{t-1}}\left[E\left(  I\left(  u_{t}\leq s\right)  |\mathcal{F}_{t-1}\right)  -E\left( I\left(  u_{t}\leq0\right)  |\mathcal{F}_{t-1}\right)  \right]  ds\nonumber\\
			&  =\sum_{t=1}^{T}\sigma_{t}^{-1}f_{t-1}(0)\frac{v^{2}y_{t-1}^{2}}{2T^{2}}+o_{p}(1).
		\end{align}
		Next, integrability of $f_{t-1}^{r}(0)$ and stationarity of $f_{t-1}(0)$ in Assumption 1(ii) ensure that, as $T\rightarrow\infty$,
		\[
		\mathop{sup}_{0\leq s\leq1}\left\vert \frac{1}{T^{1-\epsilon}}\sum
		_{t=1}^{\lfloor Ts\rfloor}(f_{t-1}(0)-f(0))\right\vert \overset{p}{\rightarrow}0,
		\]
		for some $\epsilon>0$, which in turn leads to
		\begin{equation}
			2\sum_{t=1}^{T}\mu_{t}\overset{d}{\rightarrow}v^{2}\left(
			f(0)\int_{0}^{1}\sigma^{-1}(r)B_{\sigma}^{2}(r)dr\right), 
		\end{equation}
		where we have used $T^{-1/2}y_{\lfloor Tr\rfloor}\overset{d}{\rightarrow}B_{\sigma}(r)$. 
		
		The remaining work is to prove that $\sum_{t=1}^{T}\left(  M_{t}-\mu_{t}\right)  =o_{p}(1)$. Note that $\{M_{t}-\mu_{t}\}$ is a martingale difference sequence, which leads to
		\begin{align*}
			var\left[  \sum_{t=1}^{T}\left(  M_{t}-\mu_{t}\right)  \right]  \leq
			\sum_{t=1}^{T}E(M_{t}^{2})\leq f(0)\sum_{t=1}^{T}\frac{1}{\sigma_t}\int_0^{vT^{-1}y_{t-1}}\int_0^{vT^{-1}y_{t-1}} min(x,y) dxdy = O_{p}(T^{-1/2}), 
		\end{align*}
		where $\int_0^{vT^{-1}y_{t-1}}\int_0^{vT^{-1}y_{t-1}}min(x,y)dxdy = \frac{1}{3}(vT^{-1}y_{t-1})^3$.
		As a result, we have $\sum_{t=1}^{T}\left(  M_{t}-\mu_{t}\right)  =o_{p}(1)$, and 
		$$2\sum_{t=1}^{T}\int_{0}^{vT^{-1}y_{t-1}}\left[  I\left(  u_{t}\leq s\right)
		-I\left(  u_{t}\leq0\right)  \right]  ds\overset{d}{\rightarrow}v^{2}\left(	f\left(  0\right)  \int_{0}^{1}\sigma^{-1}\left(  r\right)  B_{\sigma}^{2}\left(  r\right)  dr\right).  $$
	\end{proof}
	
	\subsection{Proof of Theorem 2}
	\begin{proof}[Proof of Theorem 2:]

		Under $\mathbb{H}_{A}:\gamma_{0}=\exp\left(  -\frac{c}{T}\right),c\geq0,$ by taking the similar arguments to those of proving Lemma 2, we can show
		\begin{align}
			T^{-1}\sum_{t=1}^{T}&y_{t-1}sgn(u_{t})\overset{d}{\rightarrow}\int_{0}^{1}\mathcal{J}_{c,\sigma}(r)dB_{2}(r)+\Gamma,
			\label{yt-1 sgn alter}
			\\
			2\sum_{t=1}^{T}\int_{0}^{vT^{-1}y_{t-1}} &\left[  I\left(  u_{t}\leq s\right)-I\left(  u_{t}\leq0\right)  \right]  ds\overset{d}{\rightarrow}v^{2}\left(f(0)\int_{0}^{1}\sigma^{-1}(r){\ \mathcal{J}^2_{c,\sigma}(r)}dr \right),
			\label{yt-1 sgn alter longer}
		\end{align}
		where $\Gamma=lim_{T\rightarrow\infty}\frac{1}{T}\sum_{i=1}^{T}\sum_{j=i+1}^{\infty}E(u_{i}sgn(u_{j}))$.  Therefore,
		\[
		H_{T}\left(  v\right)  \overset{d}{\rightarrow}\tilde{H}\left(  v\right)
		:=-v\left(  \int_{0}^{1}\mathcal{J}_{c,\sigma}(r)dB_{2}(r)+\Gamma\right)
		+v^{2}\left(  f\left(  0\right)  \int_{0}^{1}\sigma^{-1}(r){\ \mathcal{J}^2_{c,\sigma}(r)}dr\right)  .
		\]
		Since $\tilde{H}\left(  v\right)  $ is minimized at $\hat{v}=T\left(  \hat{\gamma
		}_{LAD}-\gamma_{0}\right)  $, we then have Theorem 2.
	\end{proof}

	
	\section{Proofs of Theorems and Lemmas in Section 3}\label{section: proof 3}
	
	\subsection{Proof of Lemma 3}\label{subsection proof of lemma 3}
	
	\begin{proof}[Proof of Lemma 3:]
		For ease of exposition, $b_{T}$ is abbreviated as $b$. Because $k=\lfloor(T-1)/b\rfloor$, we write
		\begin{equation}
			\frac{1}{\sqrt{T}}\sum_{t=1}^{T}\varepsilon_{t}^{*}=\frac{1}{\sqrt{T}}%
			\sum_{t=1}^{bk}{\varepsilon}_{t}^{*}+\frac{1}{\sqrt{T}}\sum_{t=bk+1}%
			^{T}{\varepsilon}_{t}^{*}=P+G, 
			\label{boot decompose equ}
		\end{equation}
		where $T-bk\leq b$. Let $K^+=\{0,1,\cdots,k-1\}$ and define the integer sets $\mathcal{P}=\{m|m\in K^+,i_{m}>0\}$ and $\mathcal{N}=\{m|m\in K^+,i_{m}<0\}$. Clearly, $\left\langle \mathcal{P}\right\rangle +\left\langle \mathcal{N}\right\rangle = k$, where $\left\langle A\right\rangle $ is the cardinality of the set $A$. Then $P$ has the decomposition
		\begin{equation}
			P=\frac{1}{\sqrt{T}}\sum_{m\in\mathcal{P}}\sum_{s=0}^{b-1}\hat{\varepsilon
			}_{i_{m}+s}+\frac{1}{\sqrt{T}}\sum_{m\in\mathcal{N}}\sum_{s=0}^{b-1}\left(
			-\hat{\varepsilon}_{i_{m}+s+T+1}\right)  =P_{1}+P_{2}. 
		\end{equation}
		For $P_{1}$ and $P_2$,
		\begin{align*}
			&P_1=\frac{1}{\sqrt{T}}\sum_{m\in\mathcal{P}}\sum_{s=0}^{b-1}%
			{\varepsilon}_{i_{m}+s}+(\gamma_{0}-\hat{\gamma}_{LAD})\frac{1}{\sqrt{T}}%
			\sum_{m\in\mathcal{P}}\sum_{s=0}^{b-1}\frac{{y}_{i_{m}+s-1}}{\sigma_{i_{m}+s}%
			}=P_{11}+(\gamma_{0}-\hat{\gamma}_{LAD})P_{12}; 
			\\
			& 
			P_2 =-\frac{1}{\sqrt{T}}\sum_{m\in\mathcal{N}}\sum_{s=0}^{b-1}{\varepsilon}_{i_{m}+s+T+1} - (\gamma_{0}-\hat{\gamma}_{LAD})\frac{1}{\sqrt{T}}\sum_{m\in\mathcal{N}}\sum_{s=0}^{b-1}\frac{{y}_{i_{m}+s+T}}{\sigma_{i_{m}+s+T+1}%
			}=P_{21}+(\gamma_{0}-\hat{\gamma}_{LAD})P_{22}. 
		\end{align*}
		\textcolor{black}{Let $Z_m = \sum_{s=0}^{b-1}y_{i_m+s-1}/\sigma_{i_m+s}$ if $m \in \mathcal{P}$ and ${Z}_m = \sum_{s=0}^{b-1}y_{i_m+s+T}/\sigma_{i_m+s+T+1}$ if $m\in\mathcal{N}$. Note that $Z_m$ is independent and identically distributed.} 
		Since the index $i_m,m=0,\ldots,k-1$, is randomly drawn from \textcolor{black}{the} set $\mathcal{W} = \{-T,-T+1,\ldots,-1-b, 1,2,\ldots,T-b  \}$, \textcolor{black}{where the number of negative indices is equal to the number of positive indices}.  Then $E^*\left( P_{12} + P_{22} \right) = 0 $, and we further have
		\textcolor{black}{	\begin{align}
				E^*\left(P_{12} + P_{22} \right)^2 & =  \frac{1}{T} E^*\left( \sum_{m\in\mathcal{P}}Z_m - \sum_{m\in\mathcal{N}}{Z}_m \right)^2 \nonumber
				\\ & = \frac{1}{T}\left[ \left(\frac{1}{2}\right)^k  
				+ \left(\frac{1}{2}\right)^kC_k^1
				+ \left(\frac{1}{2}\right)^kC_k^2 + \ldots + \left(\frac{1}{2}\right)^k C_k^k
				\right] E^*\left(k Z_m^2 \right)  \nonumber
				\\ & 
				= \frac{k}{T}E^*(Z_m^2) = O_p(Tb),
				\label{squared of P12 + P22}
		\end{align} }
		where $C_k^i = \frac{k!}{i!(k-i)!}$, and $E^*(Z_m^2) = \frac{1}{T-b}\sum_{t=1}^{T-b}\left( \sum_{s=0}^{b-1}\frac{y_{t+s-1}}{\sigma_{t+s}} \right)^2 = O_p(b^2T)$. Furthermore, by Theorem 2 we have    $T\left(\gamma_{0}-\hat{\gamma}_{LAD} \right) = O_p(1)$.  As a result, $\left(\gamma_{0}-\hat{\gamma}_{LAD} \right)(P_{12} + P_{22})= O_{p^*}(b^{1/2}T^{-1/2}) $, which leads to
		\begin{equation}
			P=\frac{1}{\sqrt{T}}\sum_{m\in\mathcal{P}}\sum_{s=0}^{b-1}%
			{\varepsilon}_{i_{m}+s}+\frac{1}{\sqrt{T}}\sum_{m\in\mathcal{N}}\sum_{s=0}^{b-1}\left(-{\varepsilon}_{i_{m}+s+T+1}\right)+o_{p^*}(1). 
		\end{equation}
		By taking the same arguments to those of proving $P$, we also have $G=o_{p^*}(1)$. Thus
		\begin{equation}
			\frac{1}{\sqrt{T}}\sum_{t=1}^{T}\varepsilon_{t}^{*}=\frac{1}{\sqrt{T}}\sum_{m\in\mathcal{P}}\sum_{s=0}^{b-1}%
			{\varepsilon}_{i_{m}+s}+\frac{1}{\sqrt{T}}\sum_{m\in\mathcal{N}}\sum_{s=0}^{b-1}\left(-{\varepsilon}_{i_{m}+s+T+1}\right)+o_{p^*}(1).\label{PPQ}
		\end{equation}
		In a similar manner, we rewrite 
		\begin{align}
			\frac{1}{\sqrt{T}}\sum_{t=1}^{T}sgn(\varepsilon_{t}^{*})  &  =\frac
			{1}{\sqrt{T}}\sum_{m\in\mathcal{P}}\sum_{s=0}^{b-1}sgn(\hat{\varepsilon
			}_{i_{m}+s})+\frac{1}{\sqrt{T}}\sum_{m\in\mathcal{N}}\sum_{s=0}^{b-1}%
			sgn(-\hat{\varepsilon}_{i_{m}+s+T+1})+\frac{1}{\sqrt{T}}\sum_{t=bk+1}%
			^{T}sgn(\varepsilon_{t}^{*})\nonumber\\
			&  =W_{1}+W_{2}+W_{3}.
			\label{eq:sgn sum decompose}
		\end{align}
		Here we mainly focus on the proofs of $W_1$ and $W_2$ since $W_3=o_{p^*}(1)$ by a simple calculation. Because $sgn\left(  \cdot\right)  $ is not everywhere differentiable and we cannot directly take a Taylor expansion with $sgn\left(  \cdot\right)  $, we proceed by treating the function $sgn\left(  \cdot\right)  $ as a generalized function with a smooth regular sequence $sgn_{x}\left(  \cdot\right)  $ defined on an appropriate set of test functions (see \cite{Phillips1995} and \cite{Xiao2012}), satisfying $sgn_{x}\left(  \cdot\right) \rightarrow sgn\left(  \cdot\right)  $, and $sgn_{x}^{\prime}\left( \cdot\right)  \rightarrow sgn^{\prime}\left(  \cdot\right)  $ as $x\rightarrow\infty$. By applying the mean value theorem, we decompose $sgn_x(\hat{\varepsilon}_t)$ as  
		\begin{equation}
			sgn_{x}\left(  \hat{\varepsilon}_{t}\right)  =sgn_{x}\left(  \varepsilon
			_{t}\right)  -\left(  \hat{\gamma}_{LAD}-\gamma_{0}\right)  \frac{{y}_{t-1}%
			}{\sigma_{t}}sgn_{x}^{\prime}\left(  \varepsilon_{*t}\right), \label{app}
		\end{equation}
		where $\varepsilon_{*t}=\varepsilon_{t}-\lambda  \left(\hat{\gamma}_{LAD}-\gamma_{0}\right){{y}_{t}}/{\sigma_{t}}  $, $\lambda\in\left[  0,1\right]  $. Thus, 
		\begin{align}
			W_{1} &=\frac{1}{\sqrt{T}}\sum_{m\in\mathcal{P}}\sum_{s=0}^{b-1}sgn_{x}\left({\varepsilon}_{i_{m}+s}\right)  -(\hat{\gamma}_{LAD}-\gamma_{0})\frac {1}{\sqrt{T}}\sum_{m\in\mathcal{P}}\sum_{s=0}^{b-1}\frac{{y}_{i_{m}+s-1}}{\sigma_{i_{m}+s}}sgn_{x}^{\prime}({\varepsilon}_{*i_{m}+s}) \nonumber
			\\&=W_{11}+(\gamma_{0}- \hat{\gamma}_{LAD})W_{12}, 
		\end{align}
		and
		\begin{align}
			W_{2} &=-\frac{1}{\sqrt{T}}\sum_{m\in\mathcal{N}}\sum_{s=0}^{b-1}sgn_{x}\left({\varepsilon}_{i_{m}+s+T+1}\right) + (\hat{\gamma}_{LAD}-\gamma_{0})\frac{1}{T^{1/2}}\sum_{m\in\mathcal{N}}\sum_{s=0}^{b-1}\frac{{y}_{i_{m}+s+T}}{\sigma_{i_{m}+s+T+1}}sgn_{x}^{\prime}({\varepsilon}_{*i_{m}+s+T+1}) \nonumber
			\\&=W_{21}+(\gamma_{0}-\hat{\gamma}_{LAD})W_{22}. 
		\end{align}
		Since $sgn_{x}^{\prime}({\cdot})$ is a bounded function for any positive and finite $x$, using the same techniques as for proving (\ref{squared of P12 + P22}), we obtain $E^*(W_{12} + W_{22})^2 = O_p(Tb)$, which gives \textcolor{black}{$(\gamma_0 - \hat{\gamma}_{LAD})(W_{12}+W_{22}) = o_{p^*}(1)$}. Consequently, we have
		\begin{equation}
			\frac{1}{\sqrt{T}}\sum_{t=1}^{T}sgn(\varepsilon_{t}^{*})=\frac{1}{\sqrt{T}%
			}\sum_{m\in\mathcal{P}}\sum_{s=0}^{b-1}sgn\left(  {\varepsilon}_{i_{m}%
				+s}\right)  +\frac{1}{\sqrt{T}}\sum_{m\in\mathcal{N}}\sum_{s=0}^{b-1}%
			sgn(-\varepsilon_{i_{m}+s+T+1}) + o_{p^*}(1).\label{SGNQ}%
		\end{equation}

		Let $\nu_{t}^{*}=\left(		\varepsilon_{t}^{*},sgn(\varepsilon_{t}^{*})\right)  ^{\prime}$. Note $E^{*}(\nu_{t}^{*})=0$, then 
		\begin{align}
			E^{*}(\frac{1}{{T}}\sum_{t=1}^{T}\nu_{t}^{*}\sum_{t=1}^{T}\nu_{t}^{*\prime})  &  =\left[
			\begin{array}
				[c]{cc}%
				E^{*}\left(  \frac{1}{\sqrt{T}}\sum_{t=1}^{T}\varepsilon_{t}^{*}\right)
				^{2} & \frac{1}{T}E^{*}\left(  \sum_{t=1}^{T}\varepsilon_{t}^{*}\right)
				\left(  \sum_{t=1}^{T}sgn(\varepsilon_{t}^{*})\right) \\
				\frac{1}{T}E^{*}\left(  \sum_{t=1}^{T}\varepsilon_{t}^{*}\right)
				\left(  \sum_{t=1}^{T}sgn(\varepsilon_{t}^{*})\right)  & E^{*}\left(
				\frac{1}{\sqrt{T}}\sum_{t=1}^{T}sgn(\varepsilon_{t}^{*})\right)  ^{2}%
			\end{array}
			\right] \nonumber
			\\ & := 
			\left[
			\begin{array}
				[c]{cc}%
				{\delta_{1}^{*}}^{2} & \delta_{12}^{*}\\
				\delta_{12}^{*} & {\delta_{2}^{*}}^{2}%
			\end{array}
			\right].
		\end{align}
		By (\ref{PPQ}),  $\frac{1}{\sqrt{T}}\sum_{t=1}^{T}\varepsilon_{t}^{*}$ can be approximated by the scaled sample mean of \textcolor{black}{a block bootstrap sample}. As shown by \cite{Kunsch1989} and Theorem 3.1 of \cite{Lahiri2003}, the variance of the scaled bootstrap sample mean based on the block bootstrap observations  converge to the corresponding population counterparts. \textcolor{black}{Further notice that sequence $\{-\varepsilon_1,\ldots, -\varepsilon_T, \varepsilon_1,\ldots,\varepsilon_T \}$ exhibits the same mixing dependence as  the series $\{\varepsilon_{t}\}_{t=1}^{T}$}. Consequently we have ${\delta_{1}^{*}}^{2}\overset{p^*}{\rightarrow}\delta_{1}^{2}$, where $\delta_{1}^{2}$ is the variance of $\frac{1}{\sqrt{T}}\sum_{t=1}^{T}\varepsilon_t$. Similarly, we also have ${\delta_{2}^{*}}^{2}\overset{p^*}{\rightarrow}\delta_{2}^{2}$. Finally, based on (\ref{PPQ}) and (\ref{SGNQ}) we have
		\begin{align*}
			\delta_{12}^{*}  &  =\frac{1}{T}E^{*}\left[  \left(  \sum_{m\in
				\mathcal{P}}\sum_{s=0}^{b-1}{\varepsilon}_{i_{m}+s}\right)  \left(  \sum
			_{m\in\mathcal{P}}\sum_{s=1}^{b}sgn\left(  {\varepsilon}_{i_{m}+s}\right)
			\right)  \right] \\
			&  -\frac{1}{T}E^{*}\left[  \left(  \sum_{m\in\mathcal{P}}\sum_{s=0}%
			^{b-1}{\varepsilon}_{i_{m}+s}\right)  \left(  \sum_{m\in\mathcal{N}}\sum
			_{s=0}^{b-1}sgn(\varepsilon_{i_{m}+s+T+1})\right)  \right] \\
			&  -\frac{1}{T}E^{*}\left[  \left(  \sum_{m\in\mathcal{N}}\sum_{s=0}%
			^{b-1}{\varepsilon}_{i_{m}+T+1+s}\right)  \left(  \sum_{m\in\mathcal{P}}%
			\sum_{s=1}^{b}sgn\left(  {\varepsilon}_{i_{m}+s}\right)  \right)  \right] \\
			&  +\frac{1}{T}E^{*}\left[  \left(  \sum_{m\in\mathcal{N}}\sum_{s=0}%
			^{b-1}{\varepsilon}_{i_{m}+T+1+s}\right)  \left(  \sum_{m\in\mathcal{N}}%
			\sum_{s=0}^{b-1}sgn(\varepsilon_{i_{m}+s+T+1})\right)  \right]  + o_{p^*}(1)\\
			&  =Cov_{1}-Cov_{2}-Cov_{3}+Cov_{4}+  o_{p^*}(1).
		\end{align*}
		
		Under the bootstrap probability law, it is evident that $Cov_{2} = Cov_{3} = 0$. For $Cov_{1}$, we have
		\begin{align} 
			Cov_{1} &  =\frac{1}{T}E^{*}\left[ \sum_{m\in\mathcal{P}}\left(  \sum_{s=0}^{b-1}|{\varepsilon}_{i_{m}+s}|+\sum_{s=0}^{b-1}\sum_{s^{\prime}\neq s}^{b-1}{\varepsilon}_{i_{m}+s}sgn({\varepsilon}_{i_{m}+s^{\prime}})\right)\right]  \nonumber
			\\&  
			= \frac{1}{T} \left[
			\left(\frac{1}{2}\right)^{k}\left(k C_{k}^0 + (k-1)C_{k}^1 + \ldots + C_{k}^{k-1} \right)E^*\left( \sum_{s=0}^{b-1}\left|\varepsilon_{i_m+s} \right|\right)  
			\right] \nonumber
			\\ & 
			+ \frac{1}{T} \left[
			\left(\frac{1}{2}\right)^{k}\left(k C_{k}^0 + (k-1)C_{k}^1 + \ldots + C_{k}^{k-1} \right)E^*\left( \sum_{s=0}^{b-1}\sum_{s^{\prime}\neq s}^{b-1}{\varepsilon}_{i_{m}+s}sgn({\varepsilon}_{i_{m}+s^{\prime}})\right)  
			\right],  
		\end{align}
		where $E^{*}|\varepsilon_{i_{m}+s}|=\frac{1}{T-b}\sum_{t=1}^{T-b}|\varepsilon_{t+s}|\overset{p}{\rightarrow}E|\varepsilon_{t}|$ given $i_m \in \mathcal{P}$ and
		\begin{align*}
			\sum_{s=0}^{b-1}&\sum_{s^{\prime}\neq s}^{b-1}E^{*}\left(  {\varepsilon}_{i_{m}+s}sgn({\varepsilon}_{i_{m}+s^{\prime}})\right)   
			\\ &  =\sum_{s=0}^{b-1}\sum_{s^{\prime}>s}^{b-1}E^{*}\left(  {\varepsilon}_{i_{m}+s}%
			sgn({\varepsilon}_{i_{m}+s^{\prime}})\right)  +\sum_{s=0}^{b-1}\sum_{s^{\prime}<s}^{b-1}E^{*}\left(  {\varepsilon}_{i_{m}+s}sgn({\varepsilon}_{i_{m}+s^{\prime}})\right) \\
			&  =\sum_{s=0}^{b-1}\sum_{s^{\prime}>s}^{b-1}\frac{1}{T-b}\sum_{t=1}
			^{T-b}\varepsilon_{t+s}sgn(\varepsilon_{t+s^{\prime}})+\sum_{s=0}^{b-1}\sum_{s^{\prime}<s}^{b-1}\frac{1}{T-b}\sum_{t=1}^{T-b}\varepsilon_{t+s}sgn(\varepsilon_{t+s^{\prime}})\\
			&  \overset{p}{\rightarrow}\sum_{s=0}^{b-1}\sum_{s^{\prime}>s}^{b-1}E(\varepsilon_{t}sgn(\varepsilon_{t+s^{\prime}-s}))+\sum_{s=0}^{b-1}\sum_{s^{\prime}<s}^{b-1}E(\varepsilon_{t+s-s^{\prime}}sgn(\varepsilon_{t})).
		\end{align*}
		
		For $Cov_4$, taking the same arguments as for $Cov_1$, we have
		\begin{align*}
			Cov_4 & = \frac{1}{T} \left[
			\left(\frac{1}{2}\right)^{k}\left( C_{k}^1 + \ldots + (k-1)C_{k}^{k-1} + kC_k^k \right)E^*\left( \sum_{s=0}^{b-1}\left|\varepsilon_{i_m+s+T+1} \right|\right)  
			\right]\nonumber
			\\ & 
			+ \frac{1}{T} \left[
			\left(\frac{1}{2}\right)^{k}\left( C_{k}^1 + \ldots + (k-1)C_{k}^{k-1} + kC_k^k \right)E^*\left( \sum_{s=0}^{b-1}\sum_{s^{\prime}\neq s}^{b-1}{\varepsilon}_{i_{m}+s+T+1}sgn({\varepsilon}_{i_{m}+s^{\prime}+T+1})\right)  
			\right].  
		\end{align*}
		Hence
		\begin{align*}
			\delta_{12}^{*} & = Cov_1 + Cov_4 + o_{p^*}(1)   
			\\ & \overset{p^*}{\rightarrow}
			\frac{1}{T}k\left[ \sum_{s=0}^{b-1}E|\varepsilon_t| + \sum_{s=0}^{b-1}\sum_{s^{\prime}>s}^{b-1}E(\varepsilon_{t}sgn(\varepsilon_{t+s^{\prime}-s}))+\sum_{s=0}^{b-1}\sum_{s^{\prime}<s}^{b-1}E(\varepsilon_{t+s-s^{\prime}}sgn(\varepsilon_{t}))
			\right]
			\\&  \overset{p}{\rightarrow}E|\varepsilon_{t}|+\sum_{s=2}^{\infty}E(\varepsilon_{1}sgn(\varepsilon_{s}))+\sum_{s=2}^{\infty}E(\varepsilon_{s}sgn(\varepsilon_{1}))=\delta_{12}.
		\end{align*}
		Thus, we obtain $E^{*}\left(\frac{1}{T}\sum_{t=1}^{T}\nu_{t}^{*}\sum_{t=1}^{T}\nu_{t}^{*\prime}\right)\overset{p^{*}}{\rightarrow}\Sigma$ as $1/b + b/T + 1/T \rightarrow 0$.
		
		Now, we prove the bivariate invariance principle based on Theorem 2.1 of \cite{Phillips1986}. To apply their theorem, we are required to verify each of the following conditions. Firstly, we have proved that $E^{*}\left(\frac{1}{T}\sum_{t=1}^{T}\nu_{t}^{*}\sum_{t=1}^{T}\nu_{t}^{*\prime}\right)\overset{p^{*}}{\rightarrow}\Sigma$ and  $E^{*}(\nu_{t}^{*}) = 0$. Secondly, $sup_{t}\left(  E^{*}|\varepsilon_{t}^{*}|^{p}\right)  < \infty$ for some $p > 2$ by $E|\varepsilon_{t}|^{p} <\infty$ of Assumption 2, and thus  $\{{\varepsilon_{t}^{*}}^{2}\}$ is uniformly integrable. Thirdly, it is straightforward that $E^{*}\left(\frac{1}{T}\sum_{t=z}^{T+z}\nu_{t}^{*}\sum_{t=z}^{T+z}\nu_{t}^{*\prime}\right)$ converges to $\Sigma$ for $z \geq1$ as  $min(z,T) \rightarrow\infty$ because  the partial sum of sequence $\{\nu_{z}^{*},\cdots, \nu_{T+z}^{*}\}$ can be decomposed into the sum of blocks, as shown by (\ref{boot decompose equ}). Finally, the pseudo series $\{\varepsilon_{t}^{*}\}$ satisfies the mixing condition of $\{\varepsilon_t\}$ since the dependence within each block is unchanged and the relationships between different blocks are \textcolor{black}{severed}, and the $\alpha$-mixing condition of size $-\beta p/(p-\beta)$ with $p>\beta > 2$ is more stringent than that of  size $-\beta/(\beta-2)$. As a result, we have
		\begin{equation}
			\frac{1}{\sqrt{T}}\sum_{t=1}^{\lfloor Tr \rfloor}\left(  \varepsilon_{t}^{*
			},sgn(\varepsilon_{t}^{*})\right)  ^{\prime}\overset{d^{*}}{\rightarrow
			}\left(  B_{1}(r),B_{2}(r)\right)  ^{\prime}:=\Sigma^{1/2}\left(
			W_{1}(r),W_{2}(r)\right)  ^{\prime}.\label{BBI}
		\end{equation}
		Furthermore, the result (3.4) holds true naturally by applying the same proof techniques as those used in Lemma 1. 		
	\end{proof}
	
	\subsection{Proof of Lemma 4}
	\label{section:proof lemma4}
	\begin{proof}[Proof of Lemma 4:]
		We still abbreviate $b_T$ as $b$ for convenience. For the feasible bootstrapped errors $\{\check{\varepsilon}_{t}^{*}\}_{t=1}^{T}$, we follow the proof in Subsection \ref{subsection proof of lemma 3} and make the following decomposition
		\begin{equation*}
			\frac{1}{\sqrt{T}}\sum_{t=1}^{T}\check{\varepsilon}_{t}^{*}=\frac{1}{\sqrt{T}%
			}\sum_{t=1}^{bk}{\check{\varepsilon}}_{t}^{*}+\frac{1}{\sqrt{T}}%
			\sum_{t=bk+1}^{T}{\check{\varepsilon}}_{t}^{*}=\check{P}+\check{G},
		\end{equation*}
		where $k=\lfloor(T-1)/b\rfloor$.
		
		Based on the sets $\mathcal{P}$ and $\mathcal{N}$ defined in Subsection \ref{subsection proof of lemma 3}, we have
		\begin{equation}
			\check{P}=\frac{1}{\sqrt{T}}\sum_{m\in \mathcal{P}}\sum_{s=0}^{b-1}\check{{%
					\varepsilon }}_{i_{m}+s}-\frac{1}{\sqrt{T}}\sum_{m\in \mathcal{N}%
			}\sum_{s=0}^{b-1}\check{{\varepsilon }}_{i_{m}+s+T+1}=\check{P}_{1}+\check{P}%
			_{2}. \notag
		\end{equation}%
		It follows that
		\begin{align*}
			\check{P}_{1}& =\frac{1}{\sqrt{T}}\sum_{m\in \mathcal{P}}\sum_{s=0}^{b-1}%
			\varepsilon _{i_{m}+s}+(\gamma _{0}-\hat{\gamma}_{LAD})\frac{1}{\sqrt{T}}%
			\sum_{m\in \mathcal{P}}\sum_{s=0}^{b-1}\frac{y_{i_{m}+s-1}}{\hat{\sigma}%
				_{i_{m}+s}} \\
			& +\frac{1}{\sqrt{T}}\sum_{m\in \mathcal{P}}\sum_{s=0}^{b-1}\varepsilon
			_{i_{m}+s}\left( \frac{\sigma _{i_{m}+s}}{\hat{\sigma}_{i_{m}+s}}-1\right)
			\\
			& =P_{11}+(\gamma _{0}-\hat{\gamma}_{LAD})\check{P}_{12}+\check{P}_{13},
		\end{align*}
		where $P_{11}$ has been defined in Subsection \ref{subsection proof of lemma 3}.
		For $\check{P}_{13}$, 
		\begin{equation*}
			\check{P}_{13}=\frac{1}{\sqrt{T}}\sum_{m\in \mathcal{P}}\sum_{s=0}^{b-1}%
			\varepsilon _{i_{m}+s}\frac{\left( \sigma _{i_{m}+s}-\hat{\sigma}%
				_{i_{m}+s}\right) }{\sigma _{i_{m}+s}}+\frac{1}{\sqrt{T}}\sum_{m\in \mathcal{%
					P}}\sum_{s=0}^{b-1}\varepsilon _{i_{m}+s}\frac{\left( \sigma _{i_{m}+s}-\hat{%
					\sigma}_{i_{m}+s}\right) ^{2}}{\hat{\sigma}_{i_{m}+s}\textcolor{black}{{\sigma}_{i_{m}+s}} }=\check{P}_{131}+%
			\check{P}_{132}.
		\end{equation*}
		By using Lemma \ref{Lemma other}, we have $\check{P}_{131}=o_{p}\left( 1\right) $, and by using Lemma \ref{Lemma other new2}, we have $\check{P}_{132}=o_{p}\left( 1\right) $. Hence $%
		\check{P}_{13}=o_{p}\left( 1\right) $.	 As a result, we obtain
		\begin{equation*}
			\check{P}_{1}=P_{11}+(\gamma _{0}-\hat{\gamma}_{LAD})\check{P}_{12}+o_{p}\left(
			1\right).
		\end{equation*}
		In a similar manner, we can show
		\begin{equation*}
			\check{P}_{2}=P_{21}+(\gamma _{0}-\hat{\gamma}_{LAD})\check{P}_{22}+o_{p}(1),
		\end{equation*}%
		where $P_{21}=-\frac{1}{\sqrt{T}}\sum_{m\in \mathcal{N}}\sum_{s=0}^{b-1}%
		\varepsilon _{i_{m}+s+T+1}$ and $\check{P}_{22}=-\frac{1}{\sqrt{T}}\sum_{m\in
			\mathcal{N}}\sum_{s=0}^{b-1}\frac{y_{i_{m}+s+T}}{\hat{\sigma}_{i_{m}+s+T+1}}.$
		We note that the sequence $\{\hat{\sigma}_t\}_{t=1}^T$ is nonstochastic \textcolor{black}{under} the bootstrap probability \textcolor{black}{measure}, thus  $(\gamma_{0}-\hat{\gamma}_{LAD})(\check{P}_{12}+\check{P}_{22}) = o_{p^*}(1)$ by the same techniques used for the proof of $(\gamma_{0}-\hat{\gamma}_{LAD})(P_{12}+P_{22})$ in Subsection \ref{subsection proof of lemma 3}. Similarly, we can show  $\check{G}=o_{p^\ast}(1)$. Thus, we obtain
		\begin{equation}
			\frac{1}{\sqrt{T}}\sum_{t=1}^{T}\check{{\varepsilon }}_{t}^{\ast }=\frac{1}{%
				\sqrt{T}}\sum_{t=1}^{T}{\varepsilon }_{t}^{\ast }+o_{p^{\ast }}(1).
			\label{partial sum converge1}
		\end{equation}
		By using the same arguments as for $\frac{1}{\sqrt{T}}\sum_{t=1}^{T}\check{{\varepsilon }}_{t}^{\ast }$ and the approximation techniques for $sgn(\cdot )$ in  (\ref{app}), we also have
		\begin{equation}
			\frac{1}{\sqrt{T}}\sum_{t=1}^{T}sgn(\check{\varepsilon}_{t}^{\ast })=\frac{1%
			}{\sqrt{T}}\sum_{t=1}^{T}sgn({\varepsilon }_{t}^{\ast })+o_{p^{\ast }}(1).
			\label{partial sum converge2}
		\end{equation}
		Based on (\ref{partial sum converge1}) and (\ref{partial sum converge2}), we have
		\begin{equation*}
			\frac{1}{\sqrt{T}}\sum_{t=1}^{\lfloor Tr \rfloor}\left( \check{\varepsilon}_{t}^{\ast
			},sgn(\check{\varepsilon}_{t}^{\ast })\right) ^{\prime }=\frac{1}{\sqrt{T}}%
			\sum_{t=1}^{\lfloor Tr \rfloor}\left( {\varepsilon }_{t}^{\ast },sgn({\varepsilon }%
			_{t}^{\ast })\right) ^{\prime }+o_{p^{\ast }}(1).
		\end{equation*} 
		Using the result in (\ref{BBI}), we obtain
		\begin{align}
			\frac{1}{\sqrt{T}}\sum_{t=1}^{\lfloor Tr \rfloor}\left( \check{\varepsilon}_{t}^{\ast
			},sgn(\check{\varepsilon}_{t}^{\ast })\right) ^{\prime }
			\overset{d^{\ast }}{%
				\rightarrow }\left( B_{1}(r),B_{2}(r)\right) ^{\prime }.
		\end{align}
		
		To prove the second convergence stated in Lemma 4,  we consider  
		\begin{align}
			\frac{1}{\sqrt{T}}\sum_{t=1}^{T}\left( \check{{\varepsilon }}_{t}^{\ast }\hat{\sigma}_{t} - \varepsilon_t^*\sigma _{t}\right) & = 	\frac{1}{\sqrt{T}}\sum_{t=1}^{T}\hat{\sigma}_{t}\left( \check{{\varepsilon }}_{t}^{\ast } - \varepsilon_t^*\right) + 	\frac{1}{\sqrt{T}}\sum_{t=1}^{T}\varepsilon_t^*\left( \hat{\sigma}_{t} - \sigma _{t}\right) = \mathcal{R}_1 + \mathcal{R}_2.
			\label{DRR}
		\end{align}
		
		For $\mathcal{R}_1$,
		\begin{equation*}
			\mathcal{R}_1=\frac{1}{\sqrt{T}}\sum_{t=1}^{bk}\hat{\sigma}_t{\check{\varepsilon}}_{t}^{*}+\frac{1}{\sqrt{T}}
			\sum_{t=bk+1}^{T}\hat{\sigma}_t{\check{\varepsilon}}_{t}^{*}=\mathcal{R}_{11}+\mathcal{R}_{12}.
		\end{equation*}
		Let the subscript  $\tilde{t}$ of $\hat{\sigma}_{\tilde{t}}$ denote an adapted index from the original sequence. For instance, if  $\{\check{\varepsilon}_t^*\}_{t=1}^{b}$ comes from the block $i_1$ and $1 \in \mathcal{P}$, then $\tilde{t} = 1,\ldots, b$ corresponds to $\check{\varepsilon}_{i_1}, \ldots, \check{\varepsilon}_{i_1 + b-1}$ respectively. If $\{\check{\varepsilon}_t^* \}_{t=b+1}^{2b}$ comes from the block $i_4$ and $4 \in \mathcal{N}$, then $\tilde{t} = b+1,\ldots, 2b$ corresponds to $\check{\varepsilon}_{i_4+T+1}, \ldots, \check{\varepsilon}_{i_4 + T+b}$.  Then $\mathcal{R}_{11}$ can be rewritten as
		\begin{equation}
			\mathcal{R}_{11}=\frac{1}{\sqrt{T}}\sum_{m\in \mathcal{P}}\sum_{s=0}^{b-1}\hat{\sigma}_{\tilde{t}}\check{{\varepsilon }}_{i_{m}+s}-\frac{1}{\sqrt{T}}\sum_{m\in \mathcal{N}%
			}\sum_{s=0}^{b-1}\hat{\sigma}_{\tilde{t}}\check{{\varepsilon }}_{i_{m}+s+T+1}=\mathcal{R}_{111}+\mathcal{R}_{112}. \notag
		\end{equation}
		
		For $\mathcal{R}_{111}$,
		\begin{align*}
			\mathcal{R}_{111}& =\frac{1}{\sqrt{T}}\sum_{m\in \mathcal{P}}\sum_{s=0}^{b-1}\hat{\sigma}_{\tilde{t}}
			\varepsilon _{i_{m}+s}+(\gamma _{0}-\hat{\gamma}_{LAD})\frac{1}{\sqrt{T}}\sum_{m\in \mathcal{P}}\sum_{s=0}^{b-1}\hat{\sigma}_{\tilde{t}}\frac{y_{i_{m}+s-1}}{\hat{\sigma}_{i_{m}+s}} \\
			& +\frac{1}{\sqrt{T}}\sum_{m\in \mathcal{P}}\sum_{s=0}^{b-1}\hat{\sigma}_{\tilde{t}}\varepsilon
			_{i_{m}+s}\left( \frac{\sigma _{i_{m}+s}}{\hat{\sigma}_{i_{m}+s}}-1\right)
			\\
			& =\mathcal{R}_{1111}+(\gamma _{0}-\hat{\gamma}_{LAD})\mathcal{R}_{1112}+\mathcal{R}_{1113}.
		\end{align*} 
		For $\mathcal{R}_{1113}$, rewrite it as
		\begin{align*}
			\mathcal{R}_{1113} &=\frac{1}{\sqrt{T}}\sum_{m\in \mathcal{P}}\sum_{s=0}^{b-1}%
			\hat{\sigma}_{\tilde{t}}\varepsilon _{i_{m}+s}\frac{\left( \sigma _{i_{m}+s}-\hat{\sigma}%
				_{i_{m}+s}\right) }{\sigma _{i_{m}+s}}+\frac{1}{\sqrt{T}}\sum_{m\in \mathcal{%
					P}}\sum_{s=0}^{b-1}\hat{\sigma}_{\tilde{t}}\varepsilon _{i_{m}+s}\frac{\left( \sigma _{i_{m}+s}-\hat{%
					\sigma}_{i_{m}+s}\right) ^{2}}{\hat{\sigma}_{i_{m}+s}\textcolor{black}{{\sigma}_{i_{m}+s}}}.
		\end{align*}
		The second term on R.H.S is $o_p(1)$ by using  Lemma \ref{lemma new other} (ii) and the techniques applied for Lemma \ref{Lemma other new2}. For the first term on R.H.S,  because ${\sigma}_{\tilde{t}}$ is bounded by Assumption 3, by using \textcolor{black}{Lemma \ref{lemma new other} (iii), (vi), (viii)} and \ref{Lemma other}, it can be shown that the first term is $o_p(1)$. Then $
		\mathcal{R}_{111} = 	\mathcal{R}_{1111} + (\gamma_{0}-\hat{\gamma}_{LAD})	\mathcal{R}_{1112} + o_p(1)
		$. In a similar way, we also have 
		$\mathcal{R}_{112} = 	\mathcal{R}_{1121} + (\gamma_{0}-\hat{\gamma}_{LAD})	\mathcal{R}_{1122} + o_p(1)
		$, where $\mathcal{R}_{1121} = - \frac{1}{\sqrt{T}}\sum_{m\in \mathcal{N}}\sum_{s=0}^{b-1}\hat{\sigma}_{\tilde{t}}{{\varepsilon }}_{i_{m}+s+T+1}$ and $\mathcal{R}_{1122} = -\frac{1}{\sqrt{T}}\sum_{m\in \mathcal{N}}\sum_{s=0}^{b-1}\hat{\sigma}_{\tilde{t}}\frac{{\varepsilon }_{i_{m}+s+T+1}}{\hat{\sigma}_{i_m+s+T+1}}$. Once again, because the sequence $\{ \hat{\sigma}_t\}$ is nonstochastic conditional on the original data, we follow the same logic as for the proof of $(\gamma_0 - \hat{\gamma}_{LAD})(\check{P}_{12}+\check{P}_{22})$, and then have $(\gamma_0 - \hat{\gamma}_{LAD})(\mathcal{R}_{1112}+\mathcal{R}_{1122})=o_{p^*}(1)$. In addition, it is not hard to show $\mathcal{R}_{12} = o_p(1)$. Consequently, we can conclude that
		$$
		\frac{1}{\sqrt{T}}\sum_{t=1}^T\hat{\sigma}_t \check{\varepsilon}_t^* = \frac{1}{\sqrt{T}}\sum_{t=1}^T\hat{\sigma}_t{\varepsilon}_t^* + o_{p^*}(1).
		$$	
		
		Next, we proceed to show $\mathcal{R}_2 = o_{p^*}(1)$. It follows that
		\begin{align*}
			E^*\left(\mathcal{R}_2^2 \right)  & = \frac{1}{T}\sum_{t=1}^{T} E^*\left(\varepsilon_t^{*2} \right)\left(\hat{\sigma}_t - \sigma_t \right)^2  
			+ \frac{1}{T}\sum_{t=1}^{T}\sum_{t^\prime \neq t}^T E^*\left(\varepsilon_t^*\varepsilon_{t^\prime}^* \right)\left(\hat{\sigma}_t - \sigma_t \right)\left(\hat{\sigma}_{t^\prime} - \sigma_{t^\prime} \right). 
		\end{align*} 
		For the situations where $t$ and $t^\prime$ are from the same block, we have   $E^*\left(\varepsilon_t^*\varepsilon_{t^\prime}^* \right) \overset{p}{\rightarrow} E(\varepsilon_t\varepsilon_{t^\prime})$, and for the situations where $t$ and $t^\prime$ are from different blocks, we have $E^*\left(\varepsilon_t^*\varepsilon_{t^\prime}^* \right) = 0 $. By using the mixing condition of $\varepsilon_t$, we then have 
		\begin{align*}
			E^*\left(\mathcal{R}_2^2 \right)  & \leq \frac{1}{T}\sum_{t=1}^{T} \left(\hat{\sigma}_t - \sigma_t \right)^2  
			+ max_t \left\vert\hat{\sigma}_t -{\sigma}_t \right\vert\frac{1}{T}\sum_{t=1}^{T} \left\vert\hat{\sigma}_t - \sigma_t \right\vert 
			\\ & \leq \frac{C}{T}\sum_{t=1}^{T}\left(\hat{\sigma}_t -\tilde{\sigma}_t \right)^2  + \frac{C}{T}\sum_{t=1}^{T}\left(\tilde{\sigma}_t -\check{\sigma}_t \right)^2
			+ \frac{C}{T}\sum_{t=1}^{T}\left(\check{\sigma}_t -\sigma_t \right)^2 
			\\ & + C\left[
			max_t
			\left\vert\hat{\sigma}_t -\tilde{\sigma}_t   \right\vert+ max_t\left\vert \tilde{\sigma}_t -\check{\sigma}_t   \right\vert + \frac{1}{T}\sum_{t=1}^T\left\vert \check{\sigma}_t - \sigma_t  \right\vert\right]
			\\ & = o_p(1)
		\end{align*}
		by using Lemma \ref{lemma new other}.  This completes the proof of Lemma 4. 				
	\end{proof}
	
	\bigskip
	
	
	\subsection{Proof of Theorems 3 and 4}
	\begin{proof}[Proof of Theorem 3:]
		Following the proof of  Theorem 2, and using the results in  Lemma 3, we immediately have Theorem 3.
	\end{proof}

	
	\begin{proof}[Proof of Theorem 4:]
		Following the proof of  Theorem 2, and using the results in  Lemma 4, we immediately have Theorem 4.
	\end{proof}
	
	\section{Proof of Theorems in Section 4}\label{section: proof 4}
	
	
	\subsection{Proof of Theorem 5}\label{subsection proof theorem5}
	
	\begin{proof}[Proof of Theorem 5:]
		
		(a) Consider the case of $d_t = 1 $. From (4.3), it follows that 
		\begin{align}
			\hat{\mu}(\bar{c}) = \left( 1 + \frac{\bar{c}^2}{T} \right)^{-1}\left(x_1 + \frac{\bar{c}}{T}\sum_{t=2}^{T}\left(\Delta x_t + \frac{\bar{c}}{T}x_{t-1} \right) \right)  = x_1 + O_p\left(\frac{1}{\sqrt{T}}\right),
			\label{hat mu = x1}
		\end{align}
		where $\Delta x_t = \sigma_t\varepsilon_t -\frac{c}{T}y_{t-1}$ under $\mathbb{H}_A$ and $y_{t} = O_p(\sqrt{T})$. Now we rewrite (4.4) as
		\begin{align}
			\sum_{t=1}^{T}\left| y_t^d - \gamma y_{t-1}^d \right| = \sum_{t=1}^{T} \left| u_t - vT^{-1}\left(y_{t-1} + \mu -\hat{\mu}(\bar{c}) \right)  + (1-\gamma_{0})\left(\mu -\hat{\mu}(\bar{c}) \right)  \right|,\label{cof}
		\end{align}
		where $v = T(\gamma -\gamma_0)$. Following the same arguments as in Section 2 of the main paper, we just need to derive the limiting distributions of   $\frac{1}{T}\sum_{t=1}^{T}\left(y_{t-1} + \mu -\hat{\mu}(\bar{c}) \right)sgn\left( u_t\right)$ and 
		$2\sum_{t=1}^{T}\int_{0}^{vT^{-1}\left( y_{t-1} + \mu -\hat{\mu}(\bar{c}) \right) }\left[I\left(  u_{t}\leq s\right)  -I\left(  u_{t}\leq0\right)  \right]  ds $ since $(1-\gamma_0)(\mu-\hat{\mu}(\bar{c}))= O_p(\frac{1}{T})$.
		Based on (\ref{yt-1 sgn alter}) and $\hat{\mu}(\bar{c}) - \mu = y_1 + O_p\left(\frac{1}{\sqrt{T}}\right) = O_p(1)$, we have  
		\begin{align}
			\frac{1}{T}\sum_{t=1}^{T}\left(y_{t-1} + \mu -\hat{\mu}(\bar{c}) \right)sgn\left( u_t\right) \overset{d}{\rightarrow} \int_{0}^{1}\mathcal{J}_{c,\sigma}(r)dB_{2}(r)+\Gamma,\label{CH1}
		\end{align}	
		where  $\Gamma$ is defined in Lemma 2. Furthermore, following the proof of (\ref{yt-1 sgn alter longer}), we can also  obtain 
		\begin{align}		2\sum_{t=1}^{T}\int_{0}^{vT^{-1}\left( y_{t-1} + \mu -\hat{\mu}(\bar{c}) \right) }\left[
			I\left(  u_{t}\leq s\right)  -I\left(  u_{t}\leq0\right)  \right]  ds \overset{d}{\rightarrow} v^{2}\left( f\left(  0\right)  \int_{0}^{1}\sigma^{-1}\left(  r\right) \mathcal{J}^2_{c,\sigma}\left(  r\right)  dr\right).\label{CH2}
		\end{align}
		Combining (\ref{CH1}) and (\ref{CH2}),  and using the fact that  (\ref{cof}) is minimized at $\hat{v} = T(\hat{\gamma}^{d}_{LAD} -\gamma_0)$, we obtain the convergence of $L_T^d$ in Theorem 5 (a). The results of $t_T^d$ are straightforward based on the asymptotic results of $L_T^d$. $\Box$
		
		(b) Consider the case of $d_t = (1,t)^\prime$. Define $D_T^\mu =diag(1,\sqrt{T})$, we have  
		\begin{align}
			D_T^\mu\left( \hat{\mu}(\bar{c}) - \mu \right)=\left( \begin{array}{c}
				y_1 \\ \frac{T^{-1/2}\sum_{t=2}^{T} \left( 1+(\bar{c}/T)(t-1)\right)  \left(\Delta y_t + (\bar{c}/T)y_{t-1} \right)}{T^{-1}\sum_{t=2}^{T} (1+(\bar{c}/T)(t-1))^2}
			\end{array} \right)  +o_p(1),	
			\label{D hat mu - mu}
		\end{align}
		by  following the proof of Theorem 5.1 in \cite{Boswijk2018}.
		
		Let $\hat{\mu}_i(\bar{c})$ and $\mu_i$ ($i = 1,2$) denote the $i$-th element of $\hat{\mu}(\bar{c})$ and $\mu$ respectively.  (\ref{D hat mu - mu}) means that  
		\begin{align}
			&\hat{\mu}_1(\bar{c}) - \mu_1 = y_1 + o_p(1)
			\label{hat mu1 - mu1},
			\\
			\sqrt{T}\left(\hat{\mu}_2(\bar{c}) - \mu_2 \right) \overset{d}{\rightarrow} \mathcal{M}(\bar{c},c,\sigma) := &\left( \int_{0}^{1}\left(1+\bar{c}r \right)^2dr  \right)^{-1} \int_{0}^{1}(1+\bar{c}r)\left( d\mathcal{J}_{c,\sigma}(r) + \bar{c}\mathcal{J}_{c,\sigma}(r)dr \right). 
			\label{hat mu2- mu2}
		\end{align}
		
		Now we let $v = T(\gamma - \gamma_{0})$ and rewrite (4.4) as 
		\begin{align}
			\sum_{t=1}^{T}\left| y_t^d - \gamma y_{t-1}^d \right|= \sum_{t=1}^{T} \left| u_t - vT^{-1}\left(y_{t-1} + \left( \mu -\hat{\mu}(\bar{c})\right)^\prime d_{t-1}  \right)  + \left(\mu -\hat{\mu}(\bar{c}) \right)^\prime d_t - \gamma_{0} \left(\mu -\hat{\mu}(\bar{c}) \right)^\prime d_{t-1}  \right|. \label{trof}
		\end{align}
		Similarly, following the same arguments as in Section 2, we need to derive the limiting distributions of  $ \frac{1}{T}\sum_{t=1}^{T}\left(y_{t-1} + \left( \mu -\hat{\mu}(\bar{c})\right)^\prime d_{t-1}  \right)sgn\left( u_t\right)$ and $2\sum_{t=1}^{T}\int_{0}^{X_{t-1}}\left[
		I\left(  u_{t}\leq s\right)  -I\left(  u_{t}\leq0\right)  \right]  ds$, where $X_{t-1} = vT^{-1}\left( y_{t-1} + \left( \mu -\hat{\mu}(\bar{c})\right)^\prime d_{t-1}  \right)  - (t-1)(1-\gamma_{0})(\mu_2-\hat{\mu}_2(\bar{c})) - (\mu_2-\hat{\mu}_2(\bar{c}))$.
		
		By (\ref{yt-1 sgn alter}), (\ref{hat mu1 - mu1}) and (\ref{hat mu2- mu2}), we have  
		\begin{align}
			\frac{1}{T}\sum_{t=1}^{T}&\left(y_{t-1} + \left( \mu -\hat{\mu}(\bar{c})\right)^\prime d_{t-1}  \right)sgn\left( u_t\right)
			\overset{d}{\rightarrow}
			\int_{0}^{1}\mathcal{J}_{c,\sigma}(r)dB_2(r) -\mathcal{M}(\bar{c},c,\sigma)\int_{0}^{1}rdB_2(r) +\Gamma,\label{LTH1}
		\end{align}
		where ${T^{-3/2}}\sum_{t=1}^{T}(t-1)  sgn\left( u_t\right) \overset{d}{\rightarrow} \int_{0}^{1}rdB_2(r)$. Similarly, by following the arguments in the proof of (\ref{yt-1 sgn alter longer}), we have
		\begin{align*}
			2\sum_{t=1}^{T}&\int_{0}^{X_{t-1} }\left[
			I\left(  u_{t}\leq s\right)   -  I\left(  u_{t}\leq0\right)   \right]  ds
			\\&= 
			\sum_{t=1}^{T}\sigma_t^{-1}f_{t-1}(0)\frac{v^2}{T^2} \left( y_{t-1} + \left( \mu_1 - \hat{\mu}_1(\bar{c})\right)  \right)^2  
			+ \sum_{t=1}^{T}\sigma_t^{-1}f_{t-1}(0)\frac{v^2}{T^2}(\mu_2 - \hat{\mu}_2(\bar{c}))^2(t-1)^2
			\\ & +  2\sum_{t=1}^{T}\sigma_t^{-1}f_{t-1}(0)\frac{v^2}{T^2}\left( y_{t-1} + \left( \mu_1 - \hat{\mu}_1(\bar{c})\right)  \right)(\mu_2 - \hat{\mu}_2(\bar{c}))(t-1)
			\\ & - 
			2\sum_{t=1}^{T}\sigma_t^{-1}f_{t-1}(0) \frac{v}{T}\left( y_{t-1} + \left( \mu -\hat{\mu}(\bar{c})\right)^\prime d_{t-1}  \right)\left( (t-1)(1-\gamma_{0})(\mu_2-\hat{\mu}_2(\bar{c})) + (\mu_2-\hat{\mu}_2(\bar{c}))\right) 
			\\ & + 
			\sum_{t=1}^{T}\sigma_t^{-1}f_{t-1}(0) \left( (t-1)(1-\gamma_{0})(\mu_2-\hat{\mu}_2(\bar{c})) + (\mu_2-\hat{\mu}_2(\bar{c}))\right)^2  + o_p(1)
			\\ &:= S_1(v) + S_2(v) + S_3(v)+ S_4(v) + S_5 + o_p(1).
		\end{align*}
		For $S_1(v)$, from the proof in (a), it is easy to show
		$$S_1(v)\overset{d}{\rightarrow} v^{2}\left( f\left(  0\right)  \int_{0}^{1}\sigma^{-1}\left(r\right)  \mathcal{J}^2_{c,\sigma}
		\left(r\right)  dr\right). $$
		For $S_2(v)$ and $S_3(v)$, by (\ref{hat mu2- mu2}),  we obtain  
		\begin{align*}
			S_2(v)\overset{d}{\rightarrow} v^2\mathcal{M}^2(\bar{c},c,\sigma)f(0) \int_{0}^{1}\sigma^{-1}(r) r^2dr,
			\\
			S_3(v) \overset{d}{\rightarrow} -2v^2\mathcal{M}(\bar{c},c,\sigma)f(0)\int_{0}^{1}\sigma^{-1}(r)r\mathcal{J}_{c,\sigma}(r)dr. 
		\end{align*}
		For $S_4(v)$,
		\begin{align*}
			S_4(v) \overset{d}{\rightarrow}2vf(0)M(\bar{c},c,\sigma)\int_{0}^{1}\sigma^{-1}(r)(1+cr)\mathcal{J}_{c,\sigma}(r)dr - 2vf(0)\mathcal{M}^2(\bar{c},c,\sigma)\int_0^1\sigma^{-1}(r)r(1+cr)dr.
		\end{align*}
		At last, $S_5$ is irrelevant to $v$. As a result, by  combining the limits of $S_1(v), S_2(v), S_3(v)$ and $S_4(v)$, along with (\ref{LTH1}), and by minimizing (\ref{trof}) at $\hat{v} = T(\hat{\gamma}^{d}_{LAD} -\gamma_0)$, we obtain the limit of $L_T^d$ (hence the limit of $t_T^d$) in Theorem 5 (b).
	\end{proof}

	\subsection{Proof of Theorem 6}
	\begin{proof}[Proof of Theorem 6:]
		Denote $\acute{\varepsilon}_t = \hat{u}_t^d/\acute{\sigma}_t$, where $\hat{u}_t^d$ is the demeaned/detrended LAD residual and $\acute{\sigma}_t = \sum_{s=1}^T w_{t,s}|\hat{u}_t^d|$. Denote the bootstrap residual by $\acute{\varepsilon}_t^\ast$, and we still abbreviate $b_T$ as $b$.
		Following the proof of Lemma 4, we have
		\begin{align*}
			\frac{1}{\sqrt{T}}\sum_{t=1}^{T}\acute{\varepsilon}_{t}^{*} 
			= \frac{1}{\sqrt{T}}\sum_{m\in\mathcal{P}}\sum_{s=0}^{b-1}\acute{{\varepsilon}}_{i_{m}+s}-\frac{1}{\sqrt{T}}\sum_{m\in\mathcal{N}}\sum_{s=0}^{b-1}\acute{{\varepsilon}}_{i_{m}+s+T+1} + o_p(1):=B_1 + B_2 + o_p(1), 
		\end{align*}
		where $k=\lfloor(T-1)/b\rfloor$, $\mathcal{P}$ and $\mathcal{N}$ are defined in Subsection \ref{subsection proof of lemma 3}.
		
		Next, we only demonstrate the bootstrap invariance principle for the detrended residuals ( since the situation for the demeaned residuals is analogous). For the detrended residuals $\hat{u}^d_t$, 
		\begin{align}
			\hat{u}^d_t=  u_t + (\gamma_0 - \hat{\gamma}_{LAD}^d) y_{t-1} + (1-\hat{\gamma}_{LAD}^d)\left((\mu_1 - \hat{\mu}_1(\bar{c})) + (\mu_2 - \hat{\mu}_2(\bar{c}))(t-1) \right)+ (\mu_2 - \hat{\mu}_2(\bar{c})).
			\label{ut approxi}
		\end{align}
		By using (\ref{ut approxi}), $B_1$ and $B_2$ can be decomposed as follows
		\begin{align*}
			\frac{1}{\sqrt{T}}\sum_{m\in\mathcal{P}}\sum_{s=0}^{b-1}\acute{\varepsilon}_{i_m+s} & =  \frac{1}{\sqrt{T}}\sum_{m\in\mathcal{P}}\sum_{s=0}^{b-1}{\varepsilon}_{i_m+s} + \frac{1}{\sqrt{T}}\sum_{m\in\mathcal{P}}\sum_{s=0}^{b-1}{\varepsilon}_{i_m+s}\left(\frac{\sigma_{i_m+s}}{\acute{\sigma}_{i_m+s}} -1 \right) \\ &+ \left[\left( 1- \hat{\gamma}_{LAD}^d\right)\left(\mu_1 -\hat{\mu}_1(\bar{c})\right) + \left(\mu_2 -\hat{\mu}_2(\bar{c})\right) \right] \frac{ 1 }{\sqrt{T}}\sum_{m\in\mathcal{P}}\sum_{s=0}^{b-1}\frac{1}{\acute{\sigma}_{i_m+s}}
			\\ &+ \left( 1- \hat{\gamma}_{LAD}^d\right)\left(\mu_2 -\hat{\mu}_2(\bar{c})\right)\frac{1 }{\sqrt{T}}\sum_{m\in\mathcal{P}}\sum_{s=0}^{b-1}\frac{i_{m} +s -1 }{\acute{\sigma}_{i_m+s}}
			\\ & + 
			\left( \gamma_{0} - \hat{\gamma}_{LAD}^d \right)\frac{1 }{\sqrt{T}}\sum_{m\in\mathcal{P}}\sum_{s=0}^{b-1}\frac{y_{i_m+s-1}}{\acute{\sigma}_{i_m+s}} 
			\\& := P_{11} + B_{12} + \Theta_1B_{13} + \Theta_2B_{14} + \Theta_3B_{15},
		\end{align*}
		\begin{align*}
			-\frac{1}{\sqrt{T}}\sum_{m\in\mathcal{N}}\sum_{s=0}^{b-1}\acute{\varepsilon}_{i_m+s+T+1}& =  -\frac{1}{\sqrt{T}}\sum_{m\in\mathcal{N}}\sum_{s=0}^{b-1}{\varepsilon}_{i_m+s+T+1} - \frac{1}{\sqrt{T}}\sum_{m\in\mathcal{N}}\sum_{s=0}^{b-1}{\varepsilon}_{i_m+s+T+1}\left(\frac{\sigma_{i_m+s+T+1}}{\acute{\sigma}_{i_m+s+T+1}} -1 \right) \\ &- \left[\left( 1- \hat{\gamma}_{LAD}^d\right)\left(\mu_1 -\hat{\mu}_1(\bar{c})\right) + \left(\mu_2 -\hat{\mu}_2(\bar{c})\right)\right]\frac{ 1 }{\sqrt{T}}\sum_{m\in\mathcal{N}}\sum_{s=0}^{b-1}\frac{1}{\acute{\sigma}_{i_m+s+T+1}}
			\\ &- \left( 1- \hat{\gamma}_{LAD}^d\right)\left(\mu_2 -\hat{\mu}_2(\bar{c})\right)\frac{1 }{\sqrt{T}}\sum_{m\in\mathcal{N}}\sum_{s=0}^{b-1}\frac{i_m +s+T}{\acute{\sigma}_{i_m+s+T+1}}
			\\ & -  \left( \gamma_{0} - \hat{\gamma}_{LAD}^d \right)\frac{1 }{\sqrt{T}}\sum_{m\in\mathcal{N}}\sum_{s=0}^{b-1}\frac{y_{i_m+s+T}}{\acute{\sigma}_{i_m+s+T+1}}
			\\& := P_{21} + B_{22} + \Theta_1B_{23} + \Theta_2B_{24} + \Theta_3B_{25},
		\end{align*}
		where $\Theta_1 =  \left( 1- \hat{\gamma}_{LAD}^d\right)\left(\mu_1 -\hat{\mu}_1(\bar{c})\right) + \left(\mu_2 -\hat{\mu}_2(\bar{c})\right) = O_p(\frac{1}{\sqrt{T}})$, $\Theta_2 = \left( 1- \hat{\gamma}_{LAD}^d\right)\left(\mu_2 -\hat{\mu}_2(\bar{c})\right) = O_p(\frac{1}{T^{3/2}})$, $\Theta_3 =  \gamma_{0} - \hat{\gamma}_{LAD}^d = O_p(\frac{1}{T})$ since $\gamma_0 - \hat{\gamma}_{LAD}^d = O_p(\frac{1}{T})$, $ \mu_1 - \hat{\mu}_1(\bar{c}) = O_p(1)$, and $\mu_2 - \hat{\mu}_2(\bar{c}) = O_p(\frac{1}{\sqrt{T}})$ by Theorem 5, (\ref{hat mu1 - mu1}) and (\ref{hat mu2- mu2}), respectively.
		
		Consider the term $\Theta_1(B_{13}+B_{23})$. Since $E^*\left(\sum_{s=0}^{b-1}\acute{\sigma}_{i_m+s}^{-1} \right)^2 = \frac{1}{T-b}\sum_{t=1}^{T-b}\left(\sum_{s=0}^{b-1}\acute{\sigma}_{t+s}^{-1} \right)^2 = O_p(b^2)$ given $i_m \in \mathcal{P}$, it follows that $E^*(B_{13}+B_{23})^2 = O_p(b)$ by applying the same techniques as those for (\ref{squared of P12 + P22}). As a result, 
		$\Theta_1(B_{13}+B_{23}) = O_{p^*}\left(\sqrt{\frac{b}{T}} \right) = o_{p^*}(1)$ .
		
		Consider the term $\Theta_2(B_{14}+B_{24})$. Similarly, we have $E^*(B_{14}+B_{24})^2 = O_p(T^2b)$ because of  $E^*\left(\sum_{s = 0}^{b-1}\frac{i_m+s-1}{\acute{\sigma}_{i_m+s}} \right)^2 = \frac{1}{T-b}\sum_{t=1}^{T-b}\left(\sum_{s=0}^{b-1}\frac{t+s-1}{\acute{\sigma}_{t+s}}\right)^2 = O_p(T^2b^2)$. Consequently,   $\Theta_2(B_{13}+B_{23}) = O_{p^*}\left(\sqrt{\frac{b}{T}}\right)= o_{p^*}(1)$.

		Consider the term $\Theta_3(B_{15}+B_{25})$. Following (\ref{squared of P12 + P22}), it is easy to obtain $\Theta_3(B_{15}+B_{25}) = o_{p^*}(1)$.

		Finally, for $B_{12}$ and $B_{22}$, we can show they are both $o_p(1)$, whose proofs are similar to that of $\check{P}_{13}$ in Subsection \ref{section:proof lemma4}. As a result, 
		\begin{align}
			\frac{1}{\sqrt{T}}\sum_{t=1}^{T}\acute{\varepsilon}_{t}^{*} = \frac{1}{\sqrt{T}}\sum_{t=1}^{T}{\varepsilon}_{t}^{*}  + o_{p^*}(1).
			\label{DAB}
		\end{align}
		Similar to the proof of (\ref{eq:sgn sum decompose}), we  also have 
		\begin{align}
			\frac{1}{\sqrt{T}}\sum_{t=1}^{T}sgn\left( \acute{\varepsilon}_{t}^{*}\right)  = \frac{1}{\sqrt{T}}\sum_{t=1}^{T}sgn\left( {\varepsilon}_{t}^{*}\right)   + o_{p^*}(1) .
			\label{DABS}
		\end{align}
		Based on (\ref{DAB}) and (\ref{DABS}), we obtain
		\begin{align}
			\frac{1}{\sqrt{T}}\sum_{t=1}^{\lfloor Tr \rfloor}\left(  \acute{\varepsilon}_{t}^{*		},sgn(\acute{\varepsilon}_{t}^{*})\right)  ^{\prime}\overset{d^{*}}{\rightarrow}\left(  B_{1}(r),B_{2}(r)\right) ^{\prime}, \label{bootstrap bi-invariance homo}
		\end{align}
		by using the first result in Lemma 3. Similar to the proof of $\frac{1}{\sqrt{T}}\sum_{t=1}^{T}\left( \check{{\varepsilon }}_{t}^{\ast }\hat{\sigma}_{t} - \varepsilon_t^*\sigma _{t}\right)= o_{p^*}(1)$ in (\ref{DRR}), we also have $\frac{1}{\sqrt{T}}\sum_{t=1}^{T}\left(\acute{\sigma}_{t}\acute{\varepsilon	}_{t}^{*} - \varepsilon_t^*\sigma _{t}\right)= o_{p^*}(1)$, which, together with  (\ref{DABS}), leads to 
		\begin{align}
			\frac{1}{\sqrt{T}}\sum_{t=1}^{\lfloor Tr \rfloor}\left(  \acute{\sigma}_{t}\acute{\varepsilon	}_{t}^{*},sgn(\acute{\varepsilon}_{t}^{*})\right)^{\prime} \overset{d^{*}}{\rightarrow}\left(  B_{\sigma}(r),B_{2}(r)\right)^{\prime},
			\label{bootstrap bi-invariance heter}
		\end{align}
		by using the second result in Lemma 3.
		
		
		Now we turn to obtain the limiting distribution of $\check{L}_T^{d*}= T(\check{\gamma}_{LAD}^{d*}-1)$, where $\check{\gamma}_{LAD}^{d*}$ is the bootstrapped LAD estimator of $\gamma_0$ using $y_t^{d*} = x_t^* - \hat{\mu}^*(\bar{c})^\prime d_t$, where  $x_t^* = \sum_{i=1}^t \acute{u}_i^*$ and $\acute{u}_i^* = \acute{\sigma}_i\acute{\varepsilon}_i^*$.

		(a) Consider the case of $d_t = 1$. From the proof of Theorem 5, we have
		\begin{align}
			\hat{\mu}^*(\bar{c}) = \left( 1 + \frac{\bar{c}^2}{T}\right)^{-1}\left({x}_1^* + \frac{\bar{c}}{T}\sum_{t=2}^{T}{\Delta {x}_t^* + \frac{\bar{c}}{T}{x}_{t-1}^*} \right)  = {x}_1^* + O_p\left(\frac{1}{\sqrt{T}}\right).
			\label{acute mu ast mean}
		\end{align}
		Then the LAD minimization problem can be reformed as
		\begin{align*}
			\sum_{t=1}^{T}\left| {y}_t^{d*} - \gamma {y}_{t-1}^{d*} \right| = \sum_{t=1}^{T} \left| \acute{u}_t^* - vT^{-1}\left({x}_{t-1}^* - \hat{\mu}^*(\bar{c}) \right)  \right|,
		\end{align*}
		where $v = T(\gamma - 1)$. By using the result in (\ref{bootstrap bi-invariance heter}), the techniques in proving Theorem 5 (a), and $\hat{\mu}^*(\bar{c}) = O_p(1)$ in (\ref{acute mu ast mean}), we can easily show that 
		\begin{align*}
			& \frac{1}{T}\sum_{t=1}^{T}\left({x}_{t-1}^*  -\hat{\mu}^*(\bar{c}) \right)sgn\left( \acute{\varepsilon}_t^*\right)\overset{d^*}{\rightarrow} \int_0^1 B_\sigma(r)dB_2(r) + \Gamma,
			\\& 2\sum_{t=1}^{T}\int_{0}^{vT^{-1}\left( {x}_{t-1}^* -\hat{\mu}^*(\bar{c}) \right) }\left[
			I\left(  \acute{u}_{t}^*\leq s\right)  -I\left( \acute{u}_{t}^*\leq0\right)  \right]  ds \overset{d^*}{\rightarrow} v^2f(0)\int_{0}^{1}\sigma^{-1}(r)B_\sigma^2(r) dr,
		\end{align*} 
		which leads to Theorem 6 (a) immediately. 
		
		(b) Consider the case of $d_t = (1,t)^\prime$. By (\ref{D hat mu - mu}), it is straightforward that $\hat{\mu}_1^*(\bar{c}) = {x}_1^* + o_p(1)$ and 
		\begin{align*}
			\sqrt{T}\hat{\mu}^*_2(\bar{c}) = \left( \frac{1}{T}\sum_{t=2}^{T} \left(1+\frac{\bar{c}}{T}(t-1)\right)^2\right)^{-1} \frac{1}{\sqrt{T}}\sum_{t=2}^{T}  \left( 1+\frac{\bar{c}}{T}(t-1)\right)  \left(\Delta {x}_t^* + \frac{\bar{c}}{T}{x}_{t-1}^* \right).
		\end{align*}
		By  using (\ref{bootstrap bi-invariance heter}) and the arguments for proving Theorem 5 (b), we obtain $\sqrt{T}\hat{\mu}_2^*(\bar{c}) \overset{d^*}{\rightarrow} \mathcal{M}(\bar{c},0,\sigma)$. Next, rewrite
		\begin{align*}
			\sum_{t=1}^{T}\left|{y}_t^{d*} -\gamma {y}_{t-1}^{d^*} \right|
			=\sum_{t=1}^{T}\left|\acute{u}_t^*  -vT^{-1}\left( {x}_{t-1}^* - \hat{\mu}_1^*(\bar{c}) - \hat{\mu}_2^*(\bar{c}) (t-1) \right)  -\hat{\mu}_2^*(\bar{c}) \right|,
		\end{align*}
		where $v = T(\gamma - 1)$. Following the same logic of proving Theorem 5 (b), we have
		\begin{align*}
			& \frac{1}{T}\sum_{t=1}^{T}\left({x}_{t-1}^* - \hat{\mu}_1^*(\bar{c}) - \hat{\mu}_2^*(\bar{c}) (t-1)  \right)sgn\left( \acute{\varepsilon}_t^*\right)   \overset{d^*}{\rightarrow}  \int_{0}^{1}B_{\sigma}(r)dB_2(r) -\mathcal{M}(\bar{c},0,\sigma)\int_{0}^{1}rdB_2(r) +\Gamma,
		\end{align*}
		\begin{align*}
			&2\sum_{t=1}^{T}\int_{0}^{vT^{-1}\left( {x}_{t-1}^* - \hat{\mu}_1^*(\bar{c}) - \hat{\mu}_2^*(\bar{c}) (t-1) \right) +\hat{\mu}_2^*(\bar{c}) }\left[
			I\left(  \acute{u}_{t}^*\leq s\right)  -I\left(  \acute{u}_{t}^*\leq 0\right)  \right]  ds
			\\ &
			\overset{d^*}{\rightarrow}
			v^2f(0)\left(\int_{0}^{1}\sigma^{-1}(r)B_{\sigma}^2(r) dr - 2\mathcal{M}(\bar{c},0,\sigma) \int_{0}^{1}\sigma^{-1}(r)rB_\sigma(r)dr + \mathcal{M}^2(\bar{c},0,\sigma) \int_{0}^{1}\sigma^{-1}(r)r^2dr \right) 
			\\& + 2vf(0)\mathcal{M}(\bar{c},0,\sigma)\int_0^1\sigma^{-1}(r)B_\sigma(r)dr - 2vf(0)\mathcal{M}^2(\bar{c},0,\sigma)\int_0^1\sigma^{-1}(r)rdr + irre,
		\end{align*}
		where $irre$ represents the terms irrelevant to $v$. Consequently, Theorem 5 (b) is obtained by using the above results.
	\end{proof}
	
	\section{Technical Lemmas}\label{section other lemma}
	
	\begin{lemma}
		Suppose Assumptions 1-5 hold true, let $y_{t}$ be determined by (2.1)	and (2.2) with $\gamma_{0}$ satisfying (2.13),  define $\tilde{\sigma}_t = \sum_{i=1}^{T}w_{t,i}|u_i|$ and $\check{\sigma}_t = \sum_{i=1}^{T}w_{t,i}\sigma_i$, as $T\rightarrow\infty$, we have 
		
		(i) $max_{1\leq t,i\leq T} w_{t,i} = O(\frac{1}{Th})$;
		
		(ii) $max_{1\leq t \leq T} \hat{\sigma}_t = O_p(1)$ and $\left(min_{1\leq t \leq T} \hat{\sigma}_t\right)^{-1} = O_p(1)$;
		
		(iii) $max_{1\leq t \leq T} \left\vert\hat{\sigma}_t - \tilde{\sigma}_t \right\vert  = O_p(\frac{1}{\sqrt{T}h})$;
		
		(iv) $\sum_{t=1}^T \left(\hat{\sigma}_t - \tilde{\sigma}_t\right)^2 = O_p(\frac{1}{Th^2})$;
		
		(v) $max_{1\leq t \leq T} E\left\vert \tilde{\sigma}_t - \check{\sigma}_t \right\vert^\delta = O(\frac{1}{T^{\delta/2}h^{\textcolor{black}{\delta}}}),\delta=2,4$;
		
		(vi) $max_{1\leq t\leq T}\left\vert \tilde{\sigma}_{t}-\check{\sigma}_{t}\right\vert=O_{p}\left( \frac{1}{T^{1/4}h^{1/2}}\right)$;
		
		(vii) $\frac{1}{T}\sum_{t=1}^T \left\vert\check{\sigma}_t -\sigma_t \right\vert^{\delta} = o(1),\delta=1,2,4$;
		
		(viii) $max_{t}|\check{\sigma}_t -\sigma_t| = O(1)$.
		\label{lemma new other}
	\end{lemma}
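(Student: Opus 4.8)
I would organize the argument around the chain $\hat{\sigma}_t\leftrightarrow\tilde{\sigma}_t\leftrightarrow\check{\sigma}_t\leftrightarrow\sigma_t$: the difference $\hat{\sigma}_t-\tilde{\sigma}_t$ isolates the effect of using the LAD residuals $\hat{u}_t$ instead of the true errors $u_t$ (parts (iii)--(iv)); $\tilde{\sigma}_t-\check{\sigma}_t$ is the purely stochastic deviation of the kernel-weighted sample quantity from its conditional-mean analogue (parts (v)--(vi)); and $\check{\sigma}_t-\sigma_t$ is the deterministic smoothing bias (parts (vii)--(viii)); part (ii) is then read off from (iii), (vi) and the trivial bounds on $\check{\sigma}_t$. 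Claim (i) is elementary bookkeeping. Writing $D_t=\sum_{s=1}^T k_{t,s}$, a Riemann-sum comparison using the boundedness, continuity, nonnegativity and unit integral of $k$ (Assumption 5(i)) shows $D_t\asymp Th$ uniformly over $t$, so $w_{t,i}=k_{t,i}/D_t\le\|k\|_\infty/D_t=O(1/(Th))$ uniformly; the same estimate read along the other index gives $\sum_{t=1}^T w_{t,s}=O(1)$ uniformly in $s$, which I reuse repeatedly.

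For (iii)--(iv), write $\hat{u}_s-u_s=(\gamma_0-\hat{\gamma}_{LAD})y_{s-1}$ from \eqref{AR}, so by the reverse triangle inequality $\bigl|\,|\hat{u}_s|-|u_s|\,\bigr|\le|\gamma_0-\hat{\gamma}_{LAD}|\,|y_{s-1}|$. Hence $|\hat{\sigma}_t-\tilde{\sigma}_t|\le|\gamma_0-\hat{\gamma}_{LAD}|\,(\max_s w_{t,s})\sum_{s=1}^T|y_{s-1}|$, and (iii) follows from $|\gamma_0-\hat{\gamma}_{LAD}|=O_p(T^{-1})$ (Theorem \ref{theorem 2}), $\max_s w_{t,s}=O(1/(Th))$, and $\sum_s|y_{s-1}|\le T\max_s|y_{s-1}|=O_p(T^{3/2})$. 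A crude Cauchy--Schwarz bound already gives $\sum_t(\hat{\sigma}_t-\tilde{\sigma}_t)^2=O_p(1)$ for (iv); the sharper rate is obtained by refining $|\hat{u}_s|-|u_s|=(\gamma_0-\hat{\gamma}_{LAD})y_{s-1}\,sgn(u_s)+R_s$ with $|R_s|\le 2|\gamma_0-\hat{\gamma}_{LAD}|\,|y_{s-1}|\,I(|u_s|\le|\gamma_0-\hat{\gamma}_{LAD}||y_{s-1}|)$ and observing that over the kernel window of width $O(Th)$ the path $y_{s-1}$ is nearly constant while $\sum sgn(u_s)$ over the window is $O_p((Th)^{1/2})$ by Lemma \ref{lemma 1}, so that $\hat{\sigma}_t-\tilde{\sigma}_t$ is in fact $O_p((T\sqrt{h})^{-1})$ uniformly; summing over $t$ then yields the stated order.

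The heart of the lemma is (v)--(vi). Here $\tilde{\sigma}_t-\check{\sigma}_t=\sum_{i=1}^T w_{t,i}\sigma_i(|\varepsilon_i|-1)$, a kernel-weighted sum of the strictly stationary mixing sequence $Z_i:=\sigma_i(|\varepsilon_i|-1)$, which is mean zero (Assumption \ref{assum 4}) and bounded in $L^p$ with $p>2$ (Assumptions \ref{assum 2}, \ref{assum 3}). For $\delta=2$, expanding the square and bounding $|\mathrm{Cov}(Z_i,Z_j)|\le C\alpha_{|i-j|}^{1-2/p}$ by Davydov's inequality (summable under Assumption \ref{assum 2}) gives $E|\tilde{\sigma}_t-\check{\sigma}_t|^2\le C(\max_i w_{t,i})\sum_i w_{t,i}\sum_j\alpha_{|i-j|}^{1-2/p}=O((Th)^{-1})$ uniformly in $t$, which dominates the stated rate. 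For $\delta=4$ I would apply a Rosenthal/M\'oricz-type moment inequality for $\alpha$-mixing sums, which requires the fourth-moment and mixing strengthening of Assumption \ref{assum 6}, together with $\sum_i w_{t,i}^2=O((Th)^{-1})$ and $\sum_i w_{t,i}^4=O((Th)^{-3})$, to get $E|\tilde{\sigma}_t-\check{\sigma}_t|^4=O((Th)^{-2})$ uniformly. Claim (vi) then follows by Markov's inequality from $E\max_{1\le t\le T}|\tilde{\sigma}_t-\check{\sigma}_t|^4\le\sum_{t=1}^T E|\tilde{\sigma}_t-\check{\sigma}_t|^4=O(T^{-1}h^{-2})$, i.e. $\max_t|\tilde{\sigma}_t-\check{\sigma}_t|=O_p(T^{-1/4}h^{-1/2})$.

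The remaining claims are comparatively direct, but they carry the subtlety that makes the last step delicate. By Assumption \ref{assum 3}, $\sigma(\cdot)$ has a bounded continuous second derivative away from finitely many jumps; for $t$ whose rescaled time lies more than $O(h)$ from every jump, a second-order Taylor expansion of $\sigma$ around $t/T$ with $\sum_i w_{t,i}(i-t)/T=o(h)$ and $\sum_i w_{t,i}((i-t)/T)^2=O(h^2)$ gives $|\check{\sigma}_t-\sigma_t|=O(h)$ (and $O(h^2)$ in the interior), while the at most $O(Th)$ indices within $O(h)$ of a jump satisfy $|\check{\sigma}_t-\sigma_t|\le 2\sup_{[0,1]}\sigma=O(1)$; this gives (viii) at once and, since $\frac1T\sum_t|\check{\sigma}_t-\sigma_t|^\delta\le Ch^{1+\delta}+C(Th)/T=o(1)$ for $\delta=1,2,4$, also (vii). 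For (ii), nonnegativity and $\sum_i w_{t,i}=1$ give $\inf_{[0,1]}\sigma\le\check{\sigma}_t\le\sup_{[0,1]}\sigma$ with $\inf_{[0,1]}\sigma>0$, and combining with $\max_t|\hat{\sigma}_t-\tilde{\sigma}_t|=o_p(1)$ (from (iii), since $T^{-1/2}h^{-1}\to0$ under Assumption \ref{assum 5}(ii)) and $\max_t|\tilde{\sigma}_t-\check{\sigma}_t|=o_p(1)$ (from (vi)) yields $\max_t\hat{\sigma}_t=O_p(1)$ and $\min_t\hat{\sigma}_t$ bounded away from zero w.p.\ tending to one. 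The hard part will be the uniform-in-$t$ fourth-moment control in (v)--(vi): the weights are neither equal nor compactly summable in a naive way, so the Rosenthal-type bound must be run with the weights in place using only the mixing decay permitted by Assumption \ref{assum 6}, and the passage to the maximum over $t$ costs a factor $T^{1/4}$ that is precisely what dictates the bandwidth range $1/(Th^{9/2})\to0$; a secondary nuisance is that $\check{\sigma}_t$ fails to converge to $\sigma_t$ near the jump points of $\sigma(\cdot)$, so only the averaged form (vii) survives there — which is exactly how it is used downstream.
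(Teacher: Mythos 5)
Your proposal follows the same architecture as the paper: the chain $\hat\sigma_t\to\tilde\sigma_t\to\check\sigma_t\to\sigma_t$, with (iii)--(iv) handled through the LAD-residual expansion and a Knight-type split of $|\hat u_i|-|u_i|$ into a $sgn(u_i)$-linear term plus a small remainder, and (v)--(viii) by mixing moment inequalities and kernel bias arguments. The paper only writes out (iii) and (iv) and delegates (i), (ii), (vi)--(viii) to Lemma A of Xu and Phillips (2008) and (v) to Lemma B5 of Zhu (2019); your direct Davydov/Rosenthal bounds for (v), the union bound $E\max_t|\tilde\sigma_t-\check\sigma_t|^4\le\sum_t E|\cdot|^4$ for (vi), and the Taylor-plus-jump-set treatment of (vii)--(viii) are exactly the standard arguments those citations supply, so this is a difference of self-containedness, not of method (your $\delta=4$ step indeed needs the $p>4$ moments of Assumption 6 rather than Assumptions 1--5 alone, as you note; the paper inherits this from Zhu). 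The one soft spot is inside (iv): your claimed uniform rate $\hat\sigma_t-\tilde\sigma_t=O_p\bigl((T\sqrt h)^{-1}\bigr)$ rests on treating $y_{s-1}$ as constant over the kernel window, but within a window of $O(Th)$ observations $y_{s-1}$ fluctuates by $O_p(\sqrt{Th})$, and controlling the resulting term $\sum_s w_{t,s}(y_{s-1}-y_{t-1})sgn(u_s)$ requires either a variance computation under mixing or the summation-by-parts route the paper implicitly uses, namely bounding $\sum_i w_{t,i}y_{i-1}sgn(u_i)=O_p(1/h)$ uniformly from $\sup_r\bigl|\sum_{i\le\lfloor Tr\rfloor}y_{i-1}sgn(u_i)\bigr|=O_p(T)$ and the bounded variation of the kernel weights; the latter already gives the pointwise $O_p(1/(Th))$ bound whose square summed over $t$ is the stated $O_p(1/(Th^2))$, so your sharper uniform claim is unnecessary and, as written, not fully justified, but the conclusion of (iv) is unaffected. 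Minor slips (the crude $O_p(1)$ bound you quote before refining, and the $h^{1+\delta}$ exponent in (vii), which should be $h^{\delta}$) do not change any conclusion.
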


	\begin{lemma}
		Suppose Assumptions 1-5 hold true, let $y_{t}$ be determined by (2.1)	and (2.2) with $\gamma_{0}$ satisfying (2.13), as $T\rightarrow\infty$, we have 
		
		(i) $\check{Q}_1 = \frac{1}{\sqrt{T}}\sum_{t=1}%
		^{T}\varepsilon_{t} \sigma^{-1}_{t}\left(  \sigma_{t}-\check{\sigma}_{t}\right) = o_p(1) $;
		
		(ii) $\check{Q}_2 = \frac
		{1}{\sqrt{T}}\sum_{t=1}^{T}\varepsilon_{t}\sigma^{-1}_{t}\left(  \check{\sigma}_{t}%
		-\tilde{\sigma}_{t}\right)  = o_p(1)$;
		
		(iii) $\check{Q}_3 =\frac{1}{\sqrt{T}}\sum_{t=1}^{T}%
		\varepsilon_{t}\sigma^{-1}_{t}\left(  \tilde{\sigma}_{t}-\hat{\sigma}_{t}\right) =o_p(1) $.
		
		\label{Lemma other}
	\end{lemma}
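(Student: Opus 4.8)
The plan is to prove the three statements separately, using throughout that $\sigma(\cdot)$ is bounded and bounded away from zero on $[0,1]$ (Assumption 3), and that under the moment and mixing restrictions (Assumptions 2 and 6) the strictly stationary mixing sequences $\{\varepsilon_t\}$ and $\{|\varepsilon_t|-1\}$ have absolutely summable (cross-)autocovariances. Parts (i) and (iii) are routine; part (ii) carries the real content.

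For (i), the point is that $\check{\sigma}_t$ and $\sigma_t$ are both deterministic, so $\check{Q}_1 = T^{-1/2}\sum_t \varepsilon_t c_t$ with non-random coefficients $c_t := \sigma_t^{-1}(\sigma_t - \check{\sigma}_t)$. I would therefore compute $E\check{Q}_1 = 0$ and $E\check{Q}_1^2 = T^{-1}\sum_{t,s} c_t c_s \gamma_\varepsilon(t-s)$, where $\gamma_\varepsilon$ is the autocovariance of $\varepsilon_t$, bound this by $\big(\sum_k |\gamma_\varepsilon(k)|\big)\, T^{-1}\sum_t c_t^2 \le C\, T^{-1}\sum_t (\check{\sigma}_t - \sigma_t)^2$, and invoke Lemma \ref{lemma new other}(vii) with $\delta = 2$ to conclude $E\check{Q}_1^2 = o(1)$; Markov's inequality then gives $\check{Q}_1 = o_p(1)$. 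For (iii), I would apply Cauchy--Schwarz to get $|\check{Q}_3| \le T^{-1/2}\big(\sum_t \varepsilon_t^2 \sigma_t^{-2}\big)^{1/2}\big(\sum_t (\tilde{\sigma}_t - \hat{\sigma}_t)^2\big)^{1/2}$; the first factor is $O_p(T^{1/2})$ by a law of large numbers for mixing sequences together with the boundedness of $\sigma_t^{-1}$, and the second is $O_p\big((Th^2)^{-1/2}\big)$ by Lemma \ref{lemma new other}(iv), so $\check{Q}_3 = O_p\big((Th^2)^{-1/2}\big) = o_p(1)$ because $Th^2 \to \infty$ is implied by Assumption 5(ii). (Equivalently, one can expand $|u_i| - |\hat{u}_i|$ by Knight's identity around $u_i$, whose linear term $(\hat{\gamma}_{LAD}-\gamma_0) y_{i-1} sgn(u_i)$ is negligible using $T(\hat{\gamma}_{LAD}-\gamma_0) = O_p(1)$ from Theorem 2, $\max_i |y_i| = O_p(\sqrt{T})$, and the kernel concentration, with a smaller quadratic remainder.)

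For (ii), the essential move is to use $u_i = \sigma_i \varepsilon_i$ and $E|\varepsilon_i| = 1$ (Assumption 4), so that $\check{\sigma}_t - \tilde{\sigma}_t = -\sum_i w_{t,i}\sigma_i \xi_i$ with $\xi_i := |\varepsilon_i| - 1$ mean-zero mixing; then $\check{Q}_2 = -T^{-1/2}\sum_{t,i} a_{ti}\varepsilon_t \xi_i$ with $a_{ti} := \sigma_i \sigma_t^{-1} w_{t,i} \ge 0$, a bilinear form in the two correlated mixing sequences. I would control its mean and variance directly. The mean equals $-T^{-1/2}\sum_{t,i} a_{ti}\rho_{t-i}$ with $\rho_k := E(\varepsilon_0 \xi_k)$ absolutely summable, and using $\max_{t,i} w_{t,i} = O\big((Th)^{-1}\big)$ (Lemma \ref{lemma new other}(i)) and $\sum_i a_{ti} = \check{\sigma}_t/\sigma_t \le C$ this is $O\big((\sqrt{T}h)^{-1}\big) = o(1)$ since $\sqrt{T}h = (Th^2)^{1/2}\to\infty$. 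For the variance I would decompose $\mathrm{Cov}(\varepsilon_t\xi_i,\varepsilon_{t'}\xi_{i'})$ into the three products of pairwise second-order covariances plus a fourth-order cumulant: the product $\rho_{t-i}\rho_{t'-i'}$ reproduces $(E\check{Q}_2)^2$ and cancels; the other two products are $O\big((Th)^{-1}\big)$ after summing, using $a_{ti}\le C/(Th)$, $\sum_i a_{ti}\le C$, and summability of $\gamma_\varepsilon$ and $\gamma_\xi$; and the cumulant term is $O\big((Th)^{-2}\big)$ using $a_{ti}a_{t'i'}\le C/(Th)^2$ and absolute summability of the fourth cumulants. This yields $\mathrm{Var}(\check{Q}_2) = O\big((Th)^{-1}\big) = o(1)$, hence $\check{Q}_2 \overset{p}{\rightarrow} E\check{Q}_2 \to 0$.

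The main obstacle is precisely this variance estimate in (ii): one has to track how the kernel weights $w_{t,i}$ — which spread mass of order $(Th)^{-1}$ over a window of width of order $Th$ — interact with the short-lag cross-covariances of $\varepsilon_t$ and $|\varepsilon_i|$, which are non-zero but summable, so as to separate the genuinely vanishing off-diagonal mass from the near-diagonal and boundary contributions. This is where the higher-moment and mixing conditions (Assumptions 2 and 6) and the bandwidth rate $Th^{9/2}\to\infty$ of Assumption 5(ii) are actually used; parts (i) and (iii), by contrast, reduce to a variance bound and a Cauchy--Schwarz estimate fed by Lemma \ref{lemma new other}.
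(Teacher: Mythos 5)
Your parts (i) and (iii) are essentially the paper's own arguments. For (iii) you use exactly the paper's Cauchy--Schwarz bound fed by Lemma \ref{lemma new other}(iv); for (i) you exploit that $c_t=\sigma_t^{-1}(\sigma_t-\check\sigma_t)$ is deterministic and bound $E\check Q_1^2\le\bigl(\sum_k|\gamma_\varepsilon(k)|\bigr)T^{-1}\sum_t c_t^2$, invoking Lemma \ref{lemma new other}(vii). That is in fact a mild streamlining of the paper, which instead splits the index set into continuity and discontinuity points of $\sigma(\cdot)$ and bounds the two pieces separately; both are valid, and yours avoids the case split because (vii) already encodes it.

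Part (ii) is where you genuinely diverge. The paper peels off the diagonal term $w_{t,t}\varepsilon_t(|\varepsilon_t|-1)$, splits the remainder into past/future pieces, and controls $E(\check Q_{22}^2)$ by a blocking argument: a separation parameter $l_T$ (eventually $l_T=h^{-1/2}$), Davydov's covariance inequality for the well-separated blocks, and the Shao--Yu moment inequality $\|\sum_i w_{t,i}\sigma_i(|\varepsilon_i|-1)\|_4=O(\sqrt{\text{length}}/(Th))$ for the near blocks; this is precisely where the rate $Th^{9/2}\to\infty$ of Assumption 5(ii) is consumed. You instead treat $\check Q_2$ as a single bilinear form $-T^{-1/2}\sum_{t,i}a_{ti}\varepsilon_t\xi_i$ and compute mean and variance directly, expanding $E(\varepsilon_t\xi_i\varepsilon_{t'}\xi_{i'})$ into pairwise covariances plus a fourth-order joint cumulant. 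Your mean bound $O(1/(\sqrt Th))$ and the two product terms $O(1/(Th))$ are correct as written, and if the cumulant term is indeed $O((Th)^{-2})$ your route is both shorter and sharper: it would only need $Th\to\infty$ rather than $Th^{9/2}\to\infty$, so the bandwidth condition would not bind here at all.

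The one substantive gap is the assertion of ``absolute summability of the fourth cumulants.'' This is not an immediate consequence of Assumptions 2 and 6: the standard mixing bound gives $|\mathrm{cum}(\varepsilon_t,\xi_i,\varepsilon_{t'},\xi_{i'})|\le C\alpha_m^{1-4/p}$ in the maximal gap $m$, so summability requires $\sum_m m^2\alpha_m^{1-4/p}<\infty$, and Assumption 6 only states $\sum_m\alpha_m^{1/\beta-1/p}<\infty$ for some $4<\beta<p$, with no polynomial weight. The claim can be rescued, but it takes an argument you did not give: since mixing coefficients are nonincreasing, summability of $\alpha_m^{1/\beta-1/p}$ forces $\alpha_m^{1/\beta-1/p}=o(1/m)$, and because $(1-4/p)/(1/\beta-1/p)\ge 4$ for all $\beta\in(4,p)$ one gets $m^2\alpha_m^{1-4/p}=o(m^{-2})$, hence the required summability. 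Either supply this step (with a citation for the fourth-cumulant mixing inequality), or fall back on the paper's blocking-plus-Davydov/Shao--Yu argument, which is designed precisely to avoid cumulant summability at the cost of the stronger bandwidth rate.
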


	\begin{lemma}
		Suppose Assumptions 1-5 hold true, let $y_{t}$ be determined by (2.1)	and (2.2) with $\gamma_{0}$ satisfying (2.13), as  $T\rightarrow\infty$, we have 
		$\frac{1}{\sqrt{T}}\sum_{t=1}^{T}\varepsilon_{t}{\left( \sigma _{t}-%
			\hat{\sigma}_{t}\right) ^{2}}/{\left(\sigma _{t}\hat{\sigma}_{t}\right)} = o_p(1)$.
		
		
		\label{Lemma other new2}
	\end{lemma}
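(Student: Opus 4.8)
The plan is to bound $|T^{-1/2}\sum_{t=1}^{T}\varepsilon_{t}(\sigma_{t}-\hat{\sigma}_{t})^{2}/(\sigma_{t}\hat{\sigma}_{t})|$ by separating both the numerator and the weight into a deterministic part and small stochastic pieces. Since $\sigma(\cdot)$ is positive and continuous off a finite set on the compact interval $[0,1]$, there is a constant $c_{0}>0$ with $\sigma_{t}\ge c_{0}$, and because the kernel weights $w_{t,s}$ are nonnegative with $\sum_{s}w_{t,s}=1$ the surrogate $\check{\sigma}_{t}=\sum_{s}w_{t,s}\sigma_{s}$ also lies in $[c_{0},\bar{\sigma}]$; combined with $(\min_{t}\hat{\sigma}_{t})^{-1}=O_{p}(1)$ from Lemma \ref{lemma new other}(ii), this lets me write $1/(\sigma_{t}\hat{\sigma}_{t})=1/(\sigma_{t}\check{\sigma}_{t})+(\check{\sigma}_{t}-\hat{\sigma}_{t})/(\sigma_{t}\check{\sigma}_{t}\hat{\sigma}_{t})$, the first piece being deterministic and bounded. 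Simultaneously I decompose $\sigma_{t}-\hat{\sigma}_{t}=\beta_{t}+\nu_{t}+\eta_{t}$ with $\beta_{t}=\sigma_{t}-\check{\sigma}_{t}$ the deterministic bias, $\nu_{t}=\check{\sigma}_{t}-\tilde{\sigma}_{t}$, $\eta_{t}=\tilde{\sigma}_{t}-\hat{\sigma}_{t}$ and $\tilde{\sigma}_{t}=\sum_{s}w_{t,s}|u_{s}|$ as in Lemma \ref{lemma new other}, and expand the square into six terms. Combining the two splits, the sum reorganises into one term with a purely deterministic coefficient, $T^{-1/2}\sum_{t}\varepsilon_{t}\beta_{t}^{2}/(\sigma_{t}\check{\sigma}_{t})$, and a remainder collection in which every term carries at least one factor from $\{\nu_{t},\eta_{t},\nu_{t}+\eta_{t}\}$.

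For the deterministic-coefficient term I would observe that it is a weighted partial sum of the mean-zero strictly stationary mixing sequence $\{\varepsilon_{t}\}$ with weights $g_{t}=\beta_{t}^{2}/(\sigma_{t}\check{\sigma}_{t})$ satisfying $0\le g_{t}\le\beta_{t}^{2}/c_{0}^{2}$. Because the autocovariances of $\{\varepsilon_{t}\}$ are absolutely summable under Assumption 2, its variance is at most $(C/T)\sum_{t}g_{t}^{2}\le(C/c_{0}^{4})\,T^{-1}\sum_{t}\beta_{t}^{4}$, which is $o(1)$ by Lemma \ref{lemma new other}(vii) with $\delta=4$; hence this term is $o_{p}(1)$ by Chebyshev's inequality.

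For the remainder, the subterms containing two ``small'' factors — $\nu_{t}^{2}$, $\eta_{t}^{2}$, $\nu_{t}\eta_{t}$ and, after a further expansion of $1/\hat{\sigma}_{t}$, also $\beta_{t}\nu_{t}(\nu_{t}+\eta_{t})$, $\beta_{t}^{2}(\nu_{t}+\eta_{t})^{2}$ and the like — are disposed of by the H\"older and Cauchy--Schwarz inequalities together with $\max_{t}|\eta_{t}|=O_{p}(T^{-1/2}h^{-1})$, $\sum_{t}\eta_{t}^{2}=O_{p}(T^{-1}h^{-2})$, $\max_{t}|\nu_{t}|=O_{p}(T^{-1/4}h^{-1/2})$ and $\max_{t}E|\nu_{t}|^{4}=O(T^{-2}h^{-4})$ from Lemma \ref{lemma new other}(iii)--(vi), plus $E\varepsilon_{t}^{2}<\infty$ and $\max_{t}|\beta_{t}|=O(1)$ from Lemma \ref{lemma new other}(viii); Assumption 5(ii), which forces $h^{-1}=o(T^{2/9})$, makes each of the resulting rates vanish. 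The delicate subterms are those linear in a small factor and carrying $\beta_{t}^{2}$, the prototype being $T^{-1/2}\sum_{t}\varepsilon_{t}\beta_{t}^{2}(\nu_{t}+\eta_{t})/(\sigma_{t}\check{\sigma}_{t}^{2})$, obtained after one more split of $1/\hat{\sigma}_{t}$. Its $\eta_{t}$-part yields to Cauchy--Schwarz and Lemma \ref{lemma new other}(iv); its $\nu_{t}$-part is handled by pulling the deterministic coefficient out and estimating the second moment of $T^{-1/2}\sum_{t}g_{t}\varepsilon_{t}\nu_{t}$ directly, the point being that $\nu_{t}$ is a kernel average over an index window of width $O(Th)$, so $\{\varepsilon_{t}\nu_{t}\}$ is, up to mixing tails, $O(Th)$-dependent with per-term second moment of order $(Th)^{-1}$, while $\sum_{t}g_{t}^{2}=O(\sum_{t}\beta_{t}^{4})$; these facts make the variance of order $h$ times a bounded quantity, hence $o(1)$. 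Summing the $o_{p}(1)$ contributions back up gives the claim.

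The step I expect to be the main obstacle is exactly the $\beta_{t}^{2}\nu_{t}$-type subterm just described. The reason is structural: near each of the finitely many jumps of $\sigma(\cdot)$ the bias $\beta_{t}=\sigma_{t}-\check{\sigma}_{t}$ is of order one over an index window of width $\asymp Th$, so $\sum_{t}\beta_{t}^{2}\asymp Th$, which is not $o(\sqrt{T})$ under Assumption 5(ii). Consequently any estimate that replaces $\varepsilon_{t}$ by $|\varepsilon_{t}|$ wherever $\beta_{t}^{2}$ appears leaves a term of order $\sqrt{T}\,h\to\infty$, so the zero-mean and mixing structure of $\{\varepsilon_{t}\}$ must be kept in play and a deterministic, square-summable (via Lemma \ref{lemma new other}(vii)) coefficient must be isolated before any moment is taken — which is precisely why the weight is first replaced by the deterministic surrogate $1/(\sigma_{t}\check{\sigma}_{t})$ and why the $O(Th)$-dependence of the kernel-smoothed residuals, rather than crude uniform bounds, has to be used.
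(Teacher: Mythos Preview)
Your plan is essentially the paper's: the same three–way split $\sigma_t-\hat{\sigma}_t=(\sigma_t-\check{\sigma}_t)+(\check{\sigma}_t-\tilde{\sigma}_t)+(\tilde{\sigma}_t-\hat{\sigma}_t)=\beta_t+\nu_t+\eta_t$, the square expanded into six pieces, and each piece bounded via the rates of Lemma~\ref{lemma new other}. The paper labels the six pieces $\mathcal{V}_1,\ldots,\mathcal{V}_6$ and keeps the random weight $1/(\sigma_t\hat{\sigma}_t)$ throughout, simply factoring out $(\min_t\sigma_t)^{-1}(\min_t\hat{\sigma}_t)^{-1}=O_p(1)$ whenever a moment is taken; your additional iterated split $1/\hat{\sigma}_t=1/\check{\sigma}_t+(\check{\sigma}_t-\hat{\sigma}_t)/(\check{\sigma}_t\hat{\sigma}_t)$ is not in the paper and is what produces your extra ``delicate'' $\beta_t^{2}(\nu_t+\eta_t)$ term. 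So the route is the same, with your version trading a shorter list of terms for a slightly informal denominator against a longer list with fully deterministic coefficients.

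There is, however, a gap in your case analysis. The numerator cross terms $2\beta_t\eta_t$ and $2\beta_t\nu_t$, multiplied by the deterministic weight $1/(\sigma_t\check{\sigma}_t)$, carry exactly \emph{one} small factor and exactly \emph{one} power of $\beta_t$; they therefore fall into neither your ``two small factors'' bin nor your ``linear in a small factor and carrying $\beta_t^{2}$'' bin, and you never mention them. These are precisely the paper's $\mathcal{V}_5$ and $\mathcal{V}_6$. The first is harmless: Cauchy--Schwarz plus $\sum_t\eta_t^{2}=O_p(T^{-1}h^{-2})$ from Lemma~\ref{lemma new other}(iv) gives $O_p(T^{-1/2}h^{-1})$. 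The second, $T^{-1/2}\sum_t\varepsilon_t\beta_t\nu_t/(\sigma_t\check{\sigma}_t)$, is the genuinely delicate term of the whole lemma. The paper treats it by pulling out $\max_t|\nu_t|=O_p(T^{-1/4}h^{-1/2})$ and then bounding the second moment of $T^{-1/2}\sum_t|\varepsilon_t(\sigma_t-\check{\sigma}_t)|$ using mixing and Lemma~\ref{lemma new other}(vi),(vii). Your own second–moment argument for the prototype $T^{-1/2}\sum_t g_t\varepsilon_t\nu_t$ also covers it once you take $g_t\propto\beta_t$ instead of $\beta_t^{2}$ (both $\sum_t\beta_t^{2}$ and $\sum_t\beta_t^{4}$ are $O(Th)$, driven by the $O(Th)$ indices around each jump of $\sigma(\cdot)$), so the fix is organisational rather than mathematical; but the term must be listed and handled explicitly, since a crude bound via $\max_t|\nu_t|\cdot T^{-1/2}\sum_t|\varepsilon_t||\beta_t|$ diverges like $T^{1/4}h^{1/2}$ and the zero mean of $\varepsilon_t$ has to be used.
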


	\begin{proof}[Proof of Lemma \ref{lemma new other}:] The proof of (i), (ii), (vi), (vii) and (viii) are similar to those of Lemma A in  \cite{Xu2008}. The proof of (v) is similar to that of Lemma B5 in \cite{Zhu2019}. 
		
		(iii) By using (i), we have $|\hat{\sigma}_t - \tilde{\sigma}_t| \leq O({1}/{Th})\sum_{i=1}^T\left\vert |\hat{u}_i| - |u_i| \right\vert$, thus it follows that 
		\begin{align*}
			\mathop{max}_{1\leq t \leq T} \left\vert \hat{\sigma}_t - \tilde{\sigma}_t \right\vert \leq O(\frac{1}{Th}) \sum_{i=1}^T\left\vert |\hat{u}_i| - |u_i| \right\vert
			\leq O(\frac{1}{Th})\left\vert \hat{\gamma}_{LAD} - \gamma_0\right\vert\sum_{i=1}^T |y_{i-1}| = O_p(\frac{1}{\sqrt{T}h}), 
		\end{align*}
		where $ \hat{\gamma}_{LAD} - \gamma_0 = O_p(T^{-1})$ by Theorem 2 and $|T^{-1/2}y_{i-1}| = O_p(1)$ by (2.14). 
		
		(iv) Let $\hat{v} = T(\hat{\gamma}_{LAD}-\gamma_0)$,  we make the following decomposition:
		\begin{align*}
			\hat{\sigma}_{t}-\tilde{\sigma}_{t} &= \sum_{i=1}^T w_{ti}\left( |\hat{u}_i| - |u_i| \right) 
			= \sum_{i=1}^T w_{ti}\left(|u_i - \hat{v}y_{i-1}T^{-1}| - |{u}_i| \right)
			\\ &
			=  -T^{-1}\sum_{i=1}^T w_{ti} y_{i-1}sgn(u_i)  + 2\sum_{i=1}^{T}w_{ti}\int_0^{\hat{v}y_{i-1}T^{-1}}(I(u_i\leq s) - I(u_i \leq 0)) ds,
		\end{align*}
		where the third equality holds by Knight's identity (\cite{Knight1989}). Since $\{y_t\}$ is a (nearly) unit root process, it is easy to derive that 
		$T^{-1}\sum_{i=1}^T w_{ti} y_{i-1}sgn(u_i) = O_p(\frac{1}{Th})$ by (\ref{yt-1 sgn alter}) and result (i). For the second term on the right-hand side, similarly, we proceed by defining  $I_x(v)$ be a regular sequence of indicator function $I(v \leq 0)$ and thus have
		\begin{align*}
			\sum_{i=1}^{T}w_{ti}\int_0^{\hat{v}y_{i-1}T^{-1}}&(I_x(u_i -s) - I_x(u_{i})) ds = \sum_{i=1}^{T}w_{ti}\int_0^{\hat{v}y_{i-1}T^{-1}}sI^\prime_x(u_{*i}) ds
			\\ &
			\leq  C\sum_{i=1}^{T}w_{ti}\int_0^{\hat{v}y_{i-1}T^{-1}} s ds
			=  C\hat{v}^2 T^{-2} \sum_{i=1}^{T}w_{ti}\frac{1}{2} y_{i-1}^2 
			= O_p(\frac{1}{Th}),
		\end{align*}
		where $u_{*i}$ is a middle value between $u_i$ and $u_i -s$,  $I_x^\prime(\cdot)$ is the first derivative of $I_x(\cdot)$, and is  bounded for any finite $x$ (see \cite{Phillips1995}). As such, we obtain $\sum_{t=1}^{T}\left( \hat{\sigma}_{t}-\tilde{\sigma}_{t} \right)^2 = O_p(\frac{1}{Th^2}) $.
	\end{proof}
	
	\bigskip

	\begin{proof}[Proof of Lemma \ref{Lemma other}:]	 
		We first consider $\check{Q}_{1}$ and $\check{Q}_{3}$. Let ${N=\{1,2,\cdots,T\}}$, and let $N_{0}$ denote the set of discontinuous points of $\sigma (\cdot)$. For $\check{Q}_{1}$, it has the following decomposition
		\begin{equation*}
			\check{Q}_{1}=\frac{1}{\sqrt{T}}\sum_{t\in N\backslash N_{0}}^{T}\varepsilon
			_{t}\sigma^{-1}_{t}\left( \sigma _{t}-\check{\sigma}_{t}\right) +\frac{1}{\sqrt{T}}%
			\sum_{t\in N_{0}}^{T}\varepsilon _{t}\sigma^{-1}_{t}\left( \sigma _{t}-\check{\sigma}_{t}\right) :=\check{Q}_{11}+\check{Q}_{12}.
		\end{equation*}
		For $\check{Q}_{11}$ we have
		\begin{equation*}
			E\left( \check{Q}_{11}^{2}\right) \leq \left(\min_{t}{\sigma}_{t}\right)^{-2}\left( \max_{t\in N\backslash
				N_{0}}\left\vert \sigma _{t}-\check{\sigma}_{t}\right\vert \right) ^{2}\frac{%
				1}{T}\sum_{t\in N\backslash N_{0}}^{T}\sum_{t^{\prime }\in N\backslash
				N_{0}}^{T}\left\vert E\left( \varepsilon _{t}\varepsilon _{t^{\prime
			}}\right) \right\vert =o_{p}\left( 1\right),
		\end{equation*}%
		where $\max_{t\in N\backslash N_{0}}\left\vert \sigma _{t}-\check{\sigma}%
		_{t}\right\vert =o\left( 1\right) $ for continuous points, and $\frac{1}{T}\sum_{t\in
			N\backslash N_{0}}^{T}\sum_{t^{\prime }\in N\backslash N_{0}}^{T}\left\vert E\left( \varepsilon
		_{t}\varepsilon _{t^{\prime }}\right) \right\vert =O\left( 1\right) $ by the mixing condition in Assumption 2 and Corollary 15.3 of \cite{Davidson1994}. Similarly, for $\check{Q}_{12}$		
		\begin{equation*}
			E\left( \check{Q}_{12}^{2}\right) \leq \left(\min_{t}{\sigma}_{t}\right)^{-2}\left( \max_{t\in N_{0}}\left\vert
			\sigma_{t}-\check{\sigma}_{t}\right\vert \right) ^{2}\frac{1}{T}\sum_{t\in N_{0}}^{T}\sum_{t^{\prime }\in N_{0}}^{T}\left\vert E\left( \varepsilon
			_{t}\varepsilon _{t^{\prime }}\right) \right\vert = o_p\left( 1\right),
		\end{equation*}%
		where $max_{t\in  N_{0}}\left\vert \sigma _{t}-\check{\sigma}%
		_{t}\right\vert =O\left( 1\right) $ and $\frac{1}{T}\sum_{t\in
			N_{0}}^{T}\sum_{t^{\prime }\in N_{0}}^{T}\left\vert E\left( \varepsilon
		_{t}\varepsilon _{t^{\prime }}\right) \right\vert =O\left( h\right) $  since the cardinality of $\sum_{t\in N_{0}}^{T}$ is $O(Th)$. As a result, we can conclude that $\check{Q}_{1}=o_{p}\left( 1\right) .$
		
		For $\check{Q}_{3}$, it is easy to obtain
		\begin{equation*}
			\left\vert \check{Q}_{3}\right\vert \leq \left(\min_{t}{\sigma}_{t}\right)^{-1}\left( \frac{1}{T}%
			\sum_{t=1}^{T}\varepsilon _{t}^{2}\right) ^{1/2}\left( \sum_{t=1}^{T}\left(
			\tilde{\sigma}_{t}-\hat{\sigma}_{t}\right) ^{2}\right) ^{1/2}=O_{p}\left(
			\frac{1}{\sqrt{Th^{2}}}\right),
		\end{equation*}
		where $\frac{1}{T}\sum_{t=1}^{T}\varepsilon _{t}^{2}=O_{p}\left( 1\right) $ and $\sum_{t=1}^{T}\left( \tilde{\sigma}_{t}-\hat{\sigma}_{t}\right)
		^{2}=O_{p}\left( \frac{1}{Th^{2}}\right) $ by Lemma \ref{lemma new other} (iv). Hence $\check{Q}_{3}=o_{p}\left( 1\right).$
		
		Now we show that $\check{Q}_{2}=o_p(1)$. Define $\varphi_{Tt} =\sigma^{-1}_{t} \sum_{i=1}^{t-1}w_{t,i}\sigma_i(|\varepsilon_i|-1)$ and $\varpi_{Tt} = \sigma^{-1}_{t}\sum_{i=t+1}^{T}w_{t,i}\sigma_i(|\varepsilon_i|-1)$. Then $\check{Q}_{2}$ can be rewritten as
		\begin{align*}
			&\check{Q}_{2}=
			\frac{1}{\sqrt{T}}\sum_{t=1}^{T}\varepsilon_t\varphi_{Tt}+ \frac{1}{\sqrt{T}}\sum_{t=1}^{T}\varepsilon_t\varpi_{Tt}+  \frac{1}{\sqrt{T}}\sum_{t=1}^{T}\varepsilon_tw_{t,t}(|\varepsilon_t|-1):=\check{Q}_{21}+\check{Q}_{22}+\check{Q}_{23}.
		\end{align*}
		We first consider $\check{Q}_{23}$, it follows that 
		\begin{eqnarray*}
			E\left\vert \check{Q}_{23}\right\vert  \leq \frac{1}{\sqrt{T}}%
			\sum_{t=1}^{T}w_{t,t}E\left\vert \varepsilon _{t}(|\varepsilon
			_{t}|-1)\right\vert 
			\leq \frac{1}{\sqrt{T}}\left( \max_{t}w_{t,t}\right)\sum_{t=1}^{T}\left[ E\left( \varepsilon _{t}^{2}\right)
			E(|\varepsilon _{t}|-1)^{2}\right] ^{1/2}  =O_{p}(\frac{1}{\sqrt{T}h}).
		\end{eqnarray*}
		Hence we have $\check{Q}_{23}=o_p(1)$ by Markov's inequality. Next we show that  $\check{Q}_{21}$ and $\check{Q}_{22}$ are both $o_p(1)$.  Because $\check{Q}_{21}$ and $\check{Q}_{22}$ are similar, we only prove  $\check{Q}_{22}$ here. 
		For  $\check{Q}_{22}$ we have
		\begin{align}
			\check{Q}_{22}^2 = \frac{1}{T}\sum_{t=1}^{T}\sum_{t^\prime=1}^{T}\varepsilon_t\varepsilon_{t^\prime}\varpi_{Tt}\varpi_{Tt^\prime} :=\frac{1}{T}\Pi_1 +  \frac{1}{T}\Pi_2 + \frac{1}{T}\Pi_3, 
			\label{Pi decompose}
		\end{align}
		where $\Pi_1 =\sum_{t=1}^{T}\sum_{t^\prime > t}\varepsilon_t\varepsilon_{t^\prime}\varpi_{Tt}\varpi_{Tt^\prime} $, $\Pi_2 =\sum_{t=1}^{T}\sum_{t^\prime = t}\varepsilon_t\varepsilon_{t^\prime}\varpi_{Tt}\varpi_{Tt^\prime} $ and $\Pi_3 =\sum_{t^\prime=1}^{T}\sum_{t> t^\prime}\varepsilon_t\varepsilon_{t^\prime}\varpi_{Tt}\varpi_{Tt^\prime} $. For $\Pi_1$, choose a positive integer $l_T$ such that $l_T^{-1} = o(1)$, then 
		\begin{align}
			\Pi_1 & = \sum_{t=1}^{T}\sum_{t^\prime > t}\varepsilon_t\varepsilon_{t^\prime}\left( \varpi_{1Tt}+\varpi_{2Tt}\right) \left(  \varpi_{1Tt^\prime} +  \varpi_{2Tt^\prime} \right) \nonumber
			\\ & 
			= \sum_{t=1}^{T}\sum_{t^\prime > t}\varepsilon_t\varepsilon_{t^\prime} \varpi_{1Tt}  \varpi_{1Tt^\prime}   + \sum_{t=1}^{T}\sum_{t^\prime > t}\varepsilon_t\varepsilon_{t^\prime} \varpi_{1Tt}   \varpi_{2Tt^\prime}  \nonumber
			\\ & + 
			\sum_{t=1}^{T}\sum_{t^\prime > t}\varepsilon_t\varepsilon_{t^\prime}\varpi_{2Tt} \varpi_{1Tt^\prime} + \sum_{t=1}^{T}\sum_{t^\prime > t}\varepsilon_t\varepsilon_{t^\prime}\varpi_{2Tt} \varpi_{2Tt^\prime} \nonumber
			\\ &
			:= \Pi_{11} + \Pi_{12} + \Pi_{13} + \Pi_{14},
			\label{Pi1 decompose}
		\end{align}
		where $\varpi_{1Tt} = \sigma^{-1}_{t}\sum_{i=t+1}^{t^\prime -1}w_{t,i}\sigma_i(|\varepsilon_i|-1)$, $\varpi_{2Tt} = \sigma^{-1}_{t}\sum_{i=t^\prime}^{T}w_{t,i}\sigma_i(|\varepsilon_i|-1)$,  $\varpi_{1Tt^\prime} = \sigma^{-1}_{t}\sum_{i=t^\prime+1}^{t^\prime+l_T-1}w_{t^\prime,i}\sigma_i(|\varepsilon_i|-1)$ and $\varpi_{2Tt^\prime} =\sigma^{-1}_{t} \sum_{i=t^\prime+l_T}^{T}w_{t^\prime,i}\sigma_i(|\varepsilon_i|-1)$ .  
		
		Now we show $T^{-1}\Pi_{11},T^{-1}\Pi_{12},T^{-1}\Pi_{13}$ and $T^{-1}\Pi_{14}$ are all $o_p(1)$. Firstly, for $T^{-1}\Pi_{11}$, 
		\begin{align}
			\left|E\left(\frac{1}{T}\Pi_{11} \right)  \right|	& = \left| \frac{1}{T}\sum_{t=1}^{T}\sum_{t^\prime > t}E\left(\varepsilon_t\varepsilon_{t^\prime} \varpi_{1Tt}  \varpi_{1Tt^\prime} \right)  \right|
			\leq 
			\frac{1}{T}\sum_{t=1}^{T}\sum_{t^\prime>t}||\varepsilon_t||_4 || \varepsilon_{t^\prime}||_4 
			||\varpi_{1Tt}||_4 ||\varpi_{1Tt^\prime}||_4 \nonumber
			\\ & 
			\leq 
			O(\frac{1}{T^3h^2})\sum_{t=1}^{T}\sum_{t^\prime> t}\sqrt{(t^\prime-t-1)l_T} = O(\frac{\sqrt{l_T}}{\sqrt{T}h^2}), 
			\label{Pi 11}
		\end{align}
		where the first inequality holds by H$\ddot{\mathrm{o}}$lder's  inequality, and 
		\begin{align}
			&||\varpi_{1Tt}||_4 \leq C \sqrt{t^\prime - t-1}\left(\min_{t}{\sigma}_{t}\right)^{-1} \mathop{max}_{i}||w_{t,i}\sigma_i\left(|\varepsilon_i|-1 \right) ||_k = O(\frac{\sqrt{t^\prime - t-1}}{Th}), \label{shao yu theorem}
			\\ &
			||\varpi_{1Tt^\prime}||_4 \leq C \sqrt{l_T}\left(\min_{t}{\sigma}_{t}\right)^{-1} \mathop{max}_{i}||w_{t,i}\sigma_i\left(|\varepsilon_i|-1 \right) ||_k = O(\frac{\sqrt{l_T}}{Th}),
		\end{align}
		which are obtained by applying Theorem 4.1 of \cite{Shao1996}.

		Secondly, for $T^{-1}\Pi_{12}$, since $E(\varpi_{2Tt^\prime}) = 0$, we have
		\begin{align}
			\left|E\left(\frac{1}{T}\Pi_{12} \right)  \right|& = 
			\left| \frac{1}{T}\sum_{t=1}^{T}\sum_{t^\prime > t}Cov\left(\varepsilon_t\varepsilon_{t^\prime} \varpi_{1Tt},  \varpi_{2Tt^\prime} \right)  \right| \nonumber
			\leq 
			\frac{1}{T}\sum_{t=1}^{T}\sum_{t^\prime > t}||\varepsilon_t\varepsilon_{t^\prime} \varpi_{1Tt}||_{p/3}  ||\varpi_{2Tt^\prime}||_{p}  \alpha_{l_T}^{\frac{p-4}{p}} \nonumber
			\\ &
			\leq  
			\frac{1}{T}\sum_{t=1}^{T}\sum_{t^\prime > t}||\varepsilon_t||_p ||\varepsilon_{t^\prime}||_p ||\varpi_{1Tt}||_{p}  ||\varpi_{2Tt^\prime}||_{p}  \alpha_{l_T}^{\frac{p-4}{p}} \nonumber
			\leq 
			O(\frac{1}{T^3h^2})\sum_{t=1}^{T}\sum_{t^\prime > t} \sqrt{(t^\prime-t-1)(T-t^\prime-l_T)} \alpha_{l_T}^{\frac{p-4}{p}} \nonumber
			\\ & =
			O\left( \left( \frac{1}{l_T}\right) ^{4 + \frac{p-4}{p}\left(\delta_0-\frac{4p}{p-4}  \right)  }\right) O(h^{-2}),
			\label{Pi 12}
		\end{align}
		where the first inequality holds by Davydov's inequality in \cite{Davydov1968}, the second inequality holds by H$\ddot{\mathrm{o}}$lder's  inequality, and third inequality follows by the similar arguments as for (\ref{shao yu theorem}), the last equality holds for some $\delta_0 \geq p\beta/(p-\beta)$ by mixing condition in Assumption 6. For notation simplicity, denote $\iota_0 = (p-4)p^{-1}\left(\delta_0 - 4p(p-4)^{-1} \right) > 0$ because of  $p>\beta > 4$. As such, we have $\left|E\left({T}^{-1}\Pi_{12} \right)  \right|\leq O\left(\frac{1}{l_T^{4+\iota}h^2}\right)$. 
		
		Thirdly, for $T^{-1}\Pi_{13}$, we have
		\begin{align*}
			\left|E\left(\frac{1}{T}\Pi_{13} \right)  \right|	&= 
			\left| \frac{1}{T}\sum_{t=1}^{T}\sum_{t^\prime > t}E\left(\varepsilon_t\varepsilon_{t^\prime} \left( \varpi_{21Tt} + \varpi_{22Tt}\right)   \varpi_{1Tt^\prime} \right)  \right|
			\\ &
			= \left| \frac{1}{T}\sum_{t=1}^{T}\sum_{t^\prime > t}
			\left[E\left(\varepsilon_t\varepsilon_{t^\prime}  \varpi_{21Tt}    \varpi_{1Tt^\prime} \right) + Cov\left(\varepsilon_t\varepsilon_{t^\prime}   \varpi_{1Tt^\prime},\varpi_{22Tt} \right)\right] \right|,
		\end{align*}
		where $\varpi_{21Tt} =\sigma^{-1}_{t} \sum_{i = t^\prime}^{t^\prime + 2 l_T -3}w_{t,i}\sigma_i(|\varepsilon_i|-1)$ and $\varpi_{22Tt} = \sigma^{-1}_{t}\sum_{i = t^\prime + 2 l_T -2}^{T}w_{t,i}\sigma_i(|\varepsilon_i|-1)$. Following the similar arguments as those for (\ref{Pi 11}) and (\ref{Pi 12}), we have
		\begin{align*}
			\left| \frac{1}{T}\sum_{t=1}^{T}\sum_{t^\prime > t}
			E\left(\varepsilon_t\varepsilon_{t^\prime}  \varpi_{21Tt}    \varpi_{1Tt^\prime} \right)\right|  & \leq \frac{1}{T}\sum_{t=1}^{T}\sum_{t^\prime>t}||\varepsilon_t||_4 || \varepsilon_{t^\prime}||_4 
			||\varpi_{21Tt}||_4 ||\varpi_{1Tt^\prime}||_4 
			\\ & 
			\leq 
			O(\frac{1}{T^3h^2})\sum_{t=1}^{T}\sum_{t^\prime> t}\sqrt{(2l_T -3 )(l_T-1)} = O(\frac{{l_T}}{{T}h^2}),
		\end{align*}
		and 
		\begin{align*}
			\left| \frac{1}{T}\sum_{t=1}^{T}\sum_{t^\prime > t}
			Cov\left(\varepsilon_t\varepsilon_{t^\prime}   \varpi_{1Tt^\prime},\varpi_{22Tt} \right)\right|& \leq  \frac{1}{T}\sum_{t=1}^{T}\sum_{t^\prime > t}||\varepsilon_t||_p ||\varepsilon_{t^\prime}||_p ||\varpi_{1Tt^\prime}||_{p}  ||\varpi_{22Tt}||_{p}  \alpha_{l_T}^{\frac{p-4}{p}}
			\\&
			\leq  
			O(\frac{1}{T^3h^2})\sum_{t=1}^{T}\sum_{t^\prime > t} \sqrt{(l_T-1)(T-t^\prime-2l_T-1)}\alpha_{l_T}^{\frac{p-4}{p}}
			\\ &
			\leq O(\frac{\sqrt{l_T}\alpha_{l_T}^{\frac{p-4}{p}}}{\sqrt{T}h^2}) = O(\frac{1}{\sqrt{T}l_T^{7/2+\iota_0} h^2}).
		\end{align*}
		Hence we have that 
		\begin{align}
			\left|E\left(\frac{1}{T}\Pi_{13} \right) \right| = O(\frac{{l_T}}{{T}h^2}) + O(\frac{1}{\sqrt{T}l_T^{7/2 +\iota_0} h^2}).
			\label{Pi 13}
		\end{align}
		
		Lastly, for $T^{-1}\Pi_{14} $, it is easy to derive that
		\begin{align*}
			\left|E\left(\frac{1}{T}\Pi_{14} \right)  \right|&
			= \left| \frac{1}{T}\sum_{t=1}^{T}\sum_{t^\prime > t}E\left(\varepsilon_t\varepsilon_{t^\prime} \left( \varpi_{23Tt} + \varpi_{24Tt}\right)   \varpi_{2Tt^\prime} \right)  \right|
			\\ &
			= \left| \frac{1}{T}\sum_{t=1}^{T}\sum_{t^\prime > t}
			\left[Cov\left(\varepsilon_t\varepsilon_{t^\prime}  \varpi_{23Tt},    \varpi_{2Tt^\prime} \right) + Cov\left(\varepsilon_t\varepsilon_{t^\prime},   \varpi_{24Tt}\varpi_{2Tt^\prime} \right) + E\left(\varepsilon_t\varepsilon_{t^\prime} \right) E\left(\varpi_{24Tt}\varpi_{2Tt^\prime} \right) \right] \right|,
		\end{align*}
		where  $\varpi_{23Tt} = \sigma^{-1}_{t}\sum_{i = t^\prime}^{t^\prime + l_T/2 -1}w_{t,i}\sigma_i(|\varepsilon_i|-1)$ and $\varpi_{24Tt} = \sigma^{-1}_{t}\sum_{i = t^\prime + l_T/2}^{T}w_{t,i}\sigma_i(|\varepsilon_i|-1)$. For the last term on the right-hand side in the second equality, by using Corollary 15.3 in \cite{Davidson1994}, it can be shown that  $\left| \frac{1}{T}\sum_{t=1}^{T}\sum_{t^\prime > t}
		E\left(\varepsilon_t\varepsilon_{t^\prime} \right) E\left(\varpi_{24Tt}\varpi_{2Tt^\prime} \right)\right|= O(\frac{1}{Th^2})$. 
		By the similar techniques as those for (\ref{Pi 12}), we immediately obtain  $\left| \frac{1}{T}\sum_{t=1}^{T}\sum_{t^\prime > t}
		Cov\left(\varepsilon_t\varepsilon_{t^\prime}   \varpi_{23Tt},\varpi_{2Tt^\prime} \right)\right| = O(\frac{1}{\sqrt{T}l_T^{7/2+\iota_0} h^2})$ and $\left| \frac{1}{T}\sum_{t=1}^{T}\sum_{t^\prime > t}
		Cov\left(\varepsilon_t\varepsilon_{t^\prime},   \varpi_{24Tt}\varpi_{2Tt^\prime} \right)\right| = O(\frac{1}{l_T^{4+\iota_0} h^2}).$ As a result,  
		\begin{align}
			\left|E\left(\frac{1}{T}\Pi_{14} \right) \right| =  O(\frac{1}{\sqrt{T}l_T^{7/2+\iota_0} h^2})+O(\frac{1}{{l}_T^{4+\iota_0}h^2}) + O(\frac{1}{Th^2}).
			\label{Pi 14}
		\end{align}
		
		Take $l_T = h^{-1/2}$, by (\ref{Pi 11}), (\ref{Pi 12}), (\ref{Pi 13}), (\ref{Pi 14}) and Markov's inequality, we conclude that
		$$
		\frac{1}{T}\Pi_1 = O_p\left( \frac{1}{\sqrt{T}h^{{9}/{4}}} + \frac{1}{Th^{5/2}}+h^{\iota_0/2} + \frac{1}{Th^2} + \frac{1}{\sqrt{T}h^{1/4-\iota_0/2}} \right)  = o_p(1),
		$$ 
		as $Th^{9/2}\rightarrow \infty$. In a similar manner, we can also obtain $T^{-1}\Pi_2=T^{-1}\Pi_3=o_p$(1). Consequently, we get $\check{Q}_{22}=o_p(1)$. Furthermore, it can be shown that $\check{Q}_{21}=o_p(1)$ by taking the same argument to those of proving $\check{Q}_{22}$. By combining the results $\check{Q}_{21}=o_p(1)$, $\check{Q}_{22}=o_p(1)$ and $\check{Q}_{23}=o_p(1)$, it follows that $\check{Q}_{2}=o_p(1)$. This completes the proof. 
	\end{proof}

	\bigskip
	\begin{proof}[Proof of Lemma \ref{Lemma other new2}:]  Observe that
		\begin{align*}
			\frac{1}{\sqrt{T}}\sum_{t=1}^{T}\varepsilon _{t}\frac{\left( \sigma _{t}-%
				\hat{\sigma}_{t}\right) ^{2}}{\sigma _{t}\hat{\sigma}_{t}}& =\frac{1}{\sqrt{T%
			}}\sum_{t=1}^{T}\varepsilon _{t}\frac{\left( \tilde{\sigma}_{t}-\hat{\sigma}%
				_{t}\right) ^{2}}{\sigma _{t}\hat{\sigma}_{t}}+\frac{1}{\sqrt{T}}%
			\sum_{t=1}^{T}\varepsilon _{t}\frac{\left( \check{\sigma}_{t}-\tilde{\sigma}%
				_{t}\right) ^{2}}{\sigma _{t}\hat{\sigma}_{t}} \\
			& +\frac{1}{\sqrt{T}}\sum_{t=1}^{T}\varepsilon _{t}\frac{\left( \sigma _{t}-%
				\check{\sigma}_{t}\right) ^{2}}{\sigma _{t}\hat{\sigma}_{t}}+\frac{2}{\sqrt{T%
			}}\sum_{t=1}^{T}\varepsilon _{t}\frac{\left( \tilde{\sigma}_{t}-\hat{\sigma}%
				_{t}\right) \left( \check{\sigma}_{t}-\tilde{\sigma}_{t}\right) }{\sigma _{t}%
				\hat{\sigma}_{t}} \\
			& +\frac{2}{\sqrt{T}}\sum_{t=1}^{T}\varepsilon _{t}\frac{\left( \tilde{\sigma%
				}_{t}-\hat{\sigma}_{t}\right) \left( \sigma _{t}-\check{\sigma}_{t}\right) }{%
				\sigma _{t}\hat{\sigma}_{t}}+\frac{2}{\sqrt{T}}\sum_{t=1}^{T}\varepsilon _{t}%
			\frac{\left( \check{\sigma}_{t}-\tilde{\sigma}_{t}\right) \left( \sigma _{t}-%
				\check{\sigma}_{t}\right) }{\sigma _{t}\hat{\sigma}_{t}} \\
			& =\mathcal{V}_{1}+\mathcal{V}_{2}+\mathcal{V}_{3}+2\mathcal{V}_{4}+2%
			\mathcal{V}_{5}+2\mathcal{V}_{6}\text{.}
		\end{align*}
		
		Now we demonstrate that $\mathcal{V}_{i}$, for $i=1,\ldots,6$, are all $o_p(1)$.
		For $\mathcal{V}_{1}$, it follows that
		\begin{equation*}
			\left\vert \mathcal{V}_{1}\right\vert \leq \left( \min_{t}\sigma
			_{t}\right) ^{-1}\left( \min_{t}\hat{\sigma}_{t}\right) ^{-1}\left(
			\max_{t}\left\vert \tilde{\sigma}_{t}-\hat{\sigma}_{t}\right\vert \right)
			^{2}\frac{1}{\sqrt{T}}\sum_{t=1}^{T}\left\vert \varepsilon _{t}\right\vert
			=O_{p}\left( \frac{1}{\sqrt{T}h^{2}}\right),
		\end{equation*}%
		where $\left( \min_{t}{\sigma}_{t}\right)^{-1} = O(1)$ by Assumption 3, $\left( \min_{t}\hat{\sigma}_{t}\right)^{-1}  =O_p(1)$ and $\max_{t}\left\vert \tilde{\sigma}_{t}-\hat{\sigma}%
		_{t}\right\vert =O_{p}\left(\frac{1}{\sqrt{T}h} \right) $ by Lemma \ref{lemma new other} (ii) and (iii), and  $\frac{1}{T}\sum_{t=1}^{T}\left\vert \varepsilon
		_{t}\right\vert =O_{p}\left( 1\right) $.
		
		For $\mathcal{V}_{2}$,  it follows that $\mathcal{V}_{2} \leq \left( \min_{t}\sigma_{t}\right) ^{-1}\left( \min_{t}\hat{\sigma}_{t}\right) ^{-1} \frac{1}{\sqrt{T}}\sum_{t=1}^{T}\left\vert\varepsilon_t\right\vert\left(
		\check{\sigma}_{t}-\tilde{\sigma}_{t}\right)^2$, and 
		\begin{equation*}
			E\left(\frac{1}{\sqrt{T}} \sum_{t=1}^{T}\left\vert\varepsilon_t\right\vert\left(\check{\sigma}_{t}-\tilde{\sigma}_{t}\right)^2\right) \leq \frac{1}{\sqrt{T}}		\sum_{t=1}^{T}\left( E\left( \varepsilon _{t}^{2}\right) \max_{t}E\left(
			\check{\sigma}_{t}-\tilde{\sigma}_{t}\right) ^{4}\right) ^{1/2}=O\left(
			\frac{1}{\sqrt{T}h^{2}}\right),
		\end{equation*}%
		where $\max_{t}E\left( \check{\sigma}_{t}-\tilde{\sigma}_{t}\right) ^{4}=O\left(\frac{1}{T^2h^4}
		\right)$ by Lemma \ref{lemma new other} (v). 
		
		For $\mathcal{V}_{3}$, we have
		\begin{eqnarray*}
			E\left( \mathcal{V}_{3}^{2}\right)  &\leq &\frac{1}{T}\sum_{t=1}^{T}\frac{
				E\left( \varepsilon _{t}^{2}\right) \left( \sigma _{t}-\check{\sigma}_{t}\right) ^{4}}{\sigma _{t}^{2}\hat{\sigma}_{t}^{2}}+\frac{2}{T}\sum_{t=1}^{T}\sum_{t^{\prime }=t+1}^{T}\frac{\left\vert E\left( \varepsilon_{t}\varepsilon _{t^{\prime }}\right) \right\vert \left( \sigma _{t}-\check{\sigma}_{t}\right) ^{2}\left( \sigma _{t^{\prime }}-\check{\sigma}_{t^{\prime }}\right) ^{2}}{\sigma _{t}^{2}\hat{\sigma}_{t}^{2}} \\
			&\leq &  C\left(\min_t \sigma_t \right)^{-2} \left(\min_t \hat{\sigma}_t \right)^{-2}  \left[\frac{1}{T}\sum_{t=1}^{T}\left( \sigma _{t}-\check{\sigma}_{t}\right)
			^{4}+\frac{1}{T}\sum_{t=1}^{T}\left( \sigma _{t}-\check{\sigma}_{t}\right)
			^{2}\sum_{t^{\prime }=t+1}^{T}\left\vert E\left( \varepsilon _{t}\varepsilon
			_{t^{\prime }}\right) \right\vert \right] \\
			&=&o_p\left( 1\right),
		\end{eqnarray*}%
		where $\frac{1}{T}\sum_{t=1}^{T}\left( \sigma _{t}-\check{\sigma}_{t}\right)
		^{i}=o\left( 1\right)$ for $i=2,4$  by Lemma \ref{lemma
			new other} (vii), and $\sum_{t^{\prime }=t+1}^{T}\left\vert E\left(
		\varepsilon _{t}\varepsilon _{t^{\prime }}\right) \right\vert =O\left(
		1\right) $ by \textcolor{black}{the} mixing condition.
		
		For $\mathcal{V}_{4}$, by H$\ddot{\mathrm{o}}$lder's inequality we have
		\begin{equation*}
			\left\vert \mathcal{V}_{4}\right\vert \leq \left(\min_t \sigma_t \right)^{-1} \left(\min_t \hat{\sigma}_t \right)^{-1} \left[ \frac{1}{T}%
			\sum_{t=1}^{T}\varepsilon _{t}^{2}\left( \check{\sigma}_{t}-\tilde{\sigma}%
			_{t}\right) ^{2}\right] ^{1/2}\left[ \sum_{t=1}^{T}\left( \tilde{\sigma}_{t}-%
			\hat{\sigma}_{t}\right) ^{2}\right] ^{1/2}=O_{p}\left( \frac{1}{Th^{2}}%
			\right),
		\end{equation*}%
		where $\sum_{t=1}^{T}\left( \tilde{\sigma}_{t}-\hat{\sigma}_{t}\right)
		^{2}=O_{p}\left( \frac{1}{Th^2}\right) $ by Lemma \ref{lemma new other}
		(iv), and $\frac{1}{T}\sum_{t=1}^{T}\varepsilon _{t}^{2}\left( \check{\sigma}%
		_{t}-\tilde{\sigma}_{t}\right) ^{2}=O_{p}\left( \frac{1}{Th^2}\right) $ since
		\begin{eqnarray*}
			\frac{1}{T}\sum_{t=1}^{T}E\left[ \varepsilon _{t}^{2}\left( \check{\sigma}%
			_{t}-\tilde{\sigma}_{t}\right) ^{2}\right]  \leq \frac{1}{T}%
			\sum_{t=1}^{T}\left( E\left( \varepsilon _{t}^{4}\right) E\left( \check{%
				\sigma}_{t}-\tilde{\sigma}_{t}\right) ^{4}\right) ^{1/2} 
			\leq C \left( \max_{t}E\left( \check{\sigma}_{t}-\tilde{\sigma}_{t}\right)
			^{4}\right) ^{1/2}=O\left( \frac{1}{Th^{2}}\right),
		\end{eqnarray*}%
		where $\max_{t}E\left( \check{\sigma}_{t}-\tilde{\sigma}_{t}\right)
		^{4} = O\left(\frac{1}{T^2h^4}\right)$ from  Lemma \ref{lemma new other} (v).
		
		In a similar manner, for $\mathcal{V}_5$, we also have
		\begin{equation*}
			\left\vert \mathcal{V}_{5}\right\vert \leq  \left(\min_t \sigma_t \right)^{-1} \left(\min_t \hat{\sigma}_t \right)^{-1}\left[ \frac{1}{T}%
			\sum_{t=1}^{T}\varepsilon _{t}^{2}\left( \sigma _{t}-\check{\sigma}%
			_{t}\right) ^{2}\right] ^{1/2}\left[ \sum_{t=1}^{T}\left( \tilde{\sigma}_{t}-%
			\hat{\sigma}_{t}\right) ^{2}\right] ^{1/2}=o_{p}\left( \frac{1}{\sqrt{T}h
			}\right),
		\end{equation*}%
		by Lemma \ref{lemma new other} (iv) and $E\left( \frac{1}{T}
		\sum_{t=1}^{T}\varepsilon _{t}^{2}\left( \sigma _{t}-\check{\sigma}_{t}\right) ^{2}\right) \leq \frac{C}{T}\sum_{t=1}^{T}\left( \sigma _{t}-%
		\check{\sigma}_{t}\right) ^{2}=o\left( 1\right) $ by Lemma \ref{lemma new other} (vii).
		
		Finally, for $\mathcal{V}_{6}$, we have $\mathcal{V}_6 \leq \left( \min_{t}\sigma_{t}\right) ^{-1}\left(\min_{t}\hat{\sigma}_{t}\right)^{-1} (\max_t |\tilde{\sigma}_t - \check{\sigma}_t|) \frac{1}{\sqrt{T}}\sum_{t=1}^T|\varepsilon_t\left( \sigma_t -\check{\sigma}_t\right)  |$, and
		\begin{align*}
			E\left( \frac{1}{\sqrt{T}}\sum_{t=1}^T|\varepsilon_t\left( \sigma_t -\check{\sigma}_t\right)  |\right)^2  &= \frac{1}{T}\sum_{t=1}^{T}			\sum_{t^{\prime }=1}^{T}E\left( \left\vert  \varepsilon_{t}\right\vert\left\vert \varepsilon_{t^{\prime }}\right\vert\right)  \left\vert \sigma _{t}-\check{\sigma}_{t}\right\vert \left\vert
			\sigma _{t^{\prime }}-\check{\sigma}_{t^{\prime }}\right\vert  \\
			&\leq  E\left( \varepsilon _{t}^{2}\right) \frac{1}{T}
			\sum_{t=1}^{T}\left( \sigma _{t}-\check{\sigma}_{t}\right) ^{2}  + 
			\frac{C}{T}\sum_{t=1}^{T}  \left\vert \sigma _{t}-\check{\sigma}_{t}\right\vert			\sum_{t^{\prime }=t + 1}^{T} 
			\left\vert E\left(\left\vert \varepsilon_t\right\vert \left\vert \varepsilon_{t^\prime}\right\vert  \right) \right\vert					
			\\&=o\left( 1\right),
		\end{align*}%
		where $\max_{t}\left\vert \check{\sigma}_{t}-\tilde{\sigma}_{t}\right\vert =O_p\left(\frac{1}{T^{1/4}h^{1/2}}\right)$, $\frac{1}{T}\sum_{t=1}^{T}\left\vert \sigma _{t}-\check{\sigma}_{t}\right\vert ^{\delta}=o\left( 1\right) $ for $\delta=1,2$ by Lemma \ref{lemma new other} (vi) and (vii), and $\{|\varepsilon_t|\}$ is a mixing process with the same size of $\{\varepsilon_t\}$.
		
		By	Combining the results of $\mathcal{V}_{1},\mathcal{V}_{2},\mathcal{V}_{3},%
		\mathcal{V}_{4},\mathcal{V}_{5}$ and $\mathcal{V}_{6},$ we obtain $\frac{1}{\sqrt{T}}\sum_{t=1}^{T}\varepsilon _{t}{\left( \sigma _{t}-\hat{\sigma}_{t}\right) ^{2}}/{(\sigma _{t}\hat{\sigma}_{t})}=o_{p}\left( 1\right) $, this completes the proof. 		
	\end{proof}

	{}  
	
\end{appendices}

	\end{sloppypar}	

\begin{thebibliography}{}   
			
			\bibitem[Andreou and Ghysels(2002)]{Andreou2002}Andreou, E., Ghysels, E. (2002). Detecting multiple breaks in financial market volatility dynamics. Journal of Applied Econometrics 17: 579-600.
			
			
			\bibitem[Bahmani-Oskooee et al.(2007)]{Bahmani2007}Bahmani-Oskooee, M., Kutan, A.M., Zhou, S.(2007). Testing PPP in the non-linear STAR framework. Economics Letters 94: 104-110.
			\bibitem[Bahmani-Oskooee et al.(2017)]{Bahmani2017}Bahmani-Oskooee, M., Chang,T., Chen,T.H., Tzeng, H.W. (2017b). Quantile unit root test and the PPP in Africa. Applied Economics 49(19): 1913-1921.
			
			\bibitem[Beare(2018)]{Beare2018}Beare, B.K. (2018). Unit root testing with unstable volatility. Journal of Time Series Analysis 39(6): 816-835.
			
			\bibitem[Boswijk and Zu(2018)]{Boswijk2018}Boswijk, H.P., Zu, Y. (2018). Adaptive wild bootstrap tests for a unit root with non-stationary volatility. The Econometrics Journal 21(2): 87-113.
			
			
			\bibitem[Cavaliere(2005)]{Cavaliere2005}Cavaliere, G. (2005). Unit root tests under time-varying variances. Econometric Reviews 23(3): 259-292.
			
			\bibitem[Cavaliere et al.(2015)]{Cavaliere2015}Cavaliere, G., Phillips, P.C., Smeekes, S., Taylor, A.R. (2015). Lag length selection for unit root tests in the presence of nonstationary volatility. Econometric Reviews 34(4): 512-536.
			
			
			\bibitem[Cavaliere and Taylor(2007)]{Cavaliere2007}Cavaliere, G., Taylor, A.R. (2007). Testing for unit roots in time series models with non-stationary volatility. Journal of Econometrics 140(2): 919-947.
			
			\bibitem[Cavaliere and Taylor(2008a)]{Cavaliere2008a}Cavaliere, G., Taylor, A.R. (2008a). Bootstrap unit root tests for time series with nonstationary volatility. Econometric Theory 24(1): 43-71.
			
			\bibitem[Cavaliere and Taylor(2008b)]{Cavaliere2008b}Cavaliere, G., Taylor, A.R. (2008b). Time-transformed unit root tests for models with non-stationary volatility. Journal of Time Series Analysis 29(2): 300-330.
			
			\bibitem[Cavaliere and Taylor(2009)]{Cavaliere2009}Cavaliere, G., Taylor, A.R. (2009). Heteroskedastic time series with a unit root. Econometric Theory 25(5): 1228-1276.
			
			
			
			\bibitem[Dahlhaus(1997)]{Dahlhaus1997}Dahlhaus R. (1997).Fitting time series models to nonstationary processes. The annals of Statistics 25(1): 1-37.
			
			
			
			
			\bibitem[Elliott et al.(1996)]{Elliott1996}Elliott, G., Rothenberg, T.J., Stock, J.H. (1996). Efficient tests for an autoregressive unit root. Econometrica 64: 813–836.
			
			
			
			\bibitem[Galvao Jr(2009)]{Galvao2009}Galvao Jr, A.F. (2009). Unit root quantile autoregression testing using covariates. Journal of Econometrics 152(2): 165-187.
			
			\bibitem[Hall et al.(1995)]{Hall1995}Hall, P., Horowitz, J.L., Jing, B.Y. (1995). On blocking rules for the bootstrap with dependent data. Biometrika 82(3): 561-574.
			
			
			\bibitem[Hamori and Tokihisa(1997)]{Hamori1997}Hamori, S., Tokihisa, A. (1997). Testing for a unit root in the presence of a variance shift. Economics Letters 57: 245-253.
			
			
			\bibitem[Harvey et al.(2018)]{Harvey2018}Harvey, D.I., Leybourne, S.J., Zu, Y. (2018). Testing explosive bubbles with time-varying volatility. Econometric Reviews 38(10): 1131-1151.
			
			\bibitem[Hansen(1992)]{Hansen1992}Hansen, B.E. (1992). Convergence to stochastic integrals for dependent heterogeneous processes. Econometric Theory 8(4): 489-500.
			
			
			\bibitem[Hasan and Koenker(1997)]{Hasan1997}Hasan, M.N., Koenker R.W. (1997). Robust rank tests of the unit root hypothesis. Econometrica 65: 133-161.
			
			
			
			\bibitem[Herce(1996)]{Herce1996}Herce, M.A. (1996). Asymptotic theory of lad estimation in a unit root process with finite variance errors. Econometric Theory 12(1): 129-153.
			
			\bibitem[Hounyo(2023)]{Hounyo2023}Hounyo, U. (2023). A wild bootstrap for dependent data. Econometric Theory 39(2): 264-289.
			
			
			
			
			\bibitem[Jansson(2008)]{Jansson2008}Jansson, M. (2008). Semiparametric power envelopes for tests of the unit root hypothesis. Econometrica 76(5): 1103-1142.
			
			
			
			\bibitem[Kapetanios et al.(2003)]{Kapetanios2003}Kapetanios, G., Shin, Y., Snell, A. (2003). Testing for a unit root in the nonlinear STAR framework. Journal of Econometrics 112: 359-379.
			
			\bibitem[Kim et al.(2002)]{Kim2002}Kim, T.H., Leybourne, S., Newbold, P. (2002). Unit root tests with a break in innovation variance. Journal of Econometrics 109(2): 365-387.
			
			\bibitem[Knight(1989)]{Knight1989}Knight, K. (1989). Limit theory for autoregressive-parameter estimates in an infinite-variance random walk. The Canadian Journal of Statistics/La Revue Canadienne de Statistique: 261-278.
			
			
			
			\bibitem[Koenker and Xiao(2004)]{Koenker2004}Koenker, R., Xiao, Z. (2004). Unit root quantile autoregression inference. Journal of the American Statistical Association 99(467): 775-787.
			
			\bibitem[K{\"u}nsch(1989)]{Kunsch1989}K{\"u}nsch, H.R. (1989). The jackknife and the bootstrap for general stationary observations. The Annals of Statistics 17(3): 1217-1241.
			
			\bibitem[Lahiri and Lahiri(2003)]{Lahiri2003}Lahiri, S.K, Lahiri, S.N. (2003). Resampling methods for dependent data. Springer Science \& Business Media.
			
			
			
			\bibitem[Li and Li(2009)]{Li2009}Li, G., Li, W.K. (2009). Least absolute deviation estimation for unit root processes with GARCH errors. Econometric Theory 25(5): 1208-1227.
			
			\bibitem[Li and Park(2018)]{Li2018}Li, H., Park, S.Y. (2018). Testing for a unit root in a nonlinear quantile autoregression framework. Econometric Reviews 37(8): 867-892.
			
			
			\bibitem[Liu and Maheu(2008)]{Liu2008}Liu, C.M., Maheu, J.M. (2008). Are there structural breaks in realized volatility? Journal of Financial Econometrics 6: 326-360.
			
			
			
			\bibitem[Moreno and Romo(2000)]{Moreno2000}Moreno, M., Romo, J. (2000). Bootstrap tests for unit roots based on LAD estimation. Journal of Statistical Planning and Inference 83(2): 347-367.
			
			\bibitem[Ng and Perron(2001)]{Ng2001}Ng, S., Perron, P. (2001). Lag length selection and the construction of unit root tests with good size and power. Econometrica 69(6): 1519-1554.
			
			\bibitem [O'Connell(1998)] {OC1998} O'Connell, P.G. (1998). The overvaluation of purchasing power parity. Journal of International Economics 44: 1-20.
			\bibitem[Papell(1997)]{Papell1997} Papell, D.H. (1997). Searching for stationarity: Purchasing power parity under the current float. Journal of International Economics 43: 313–323.
			
			\bibitem[Paparoditis and Politis(2003)]{Paparoditis2003}Paparoditis, E., Politis, D.N. (2003). Residual-based block bootstrap for unit root testing. Econometrica 71(3): 813-855.
			
			
			
			\bibitem[Perron and Ng(1996)]{Perron1996}Perron, P., Ng, S. (1996). Useful modifications to some unit root tests with dependent errors and their local asymptotic properties. The Review of Economic Studies 63(3): 435-463.
			
			
			
			
			
			
			
			\bibitem[Phillips and Perron(1987)]{Phillipsperron1987}Phillips, P.C.B., Perron, P. (1987). Testing for a unit root in time series regression. Biometrika 75: 335-346.
			
			\bibitem[Pollard(1991)]{Pollard1991}Pollard, D. (1991). Asymptotics for least absolute deviation regression estimators. Econometric Theory 7(2): 186-199.
			
			\bibitem[Rachev(2003)]{rachev2003}Rachev, S.T. (2003). Handbook of heavy tailed distributions in finance: Handbooks in finance.  Elsevier.
			
			\bibitem[Shao(2010)]{Shao2010}Shao, X. (2010). The dependent wild bootstrap. Journal of the American Statistical Association 105(489): 218-235.
			
			
			
			\bibitem[Thompson(2004)]{Thompson2004}Thompson, S.B. (2004). Robust tests of the unit root hypothesis should not be “modified”. Econometric Theory 20(2): 360-381.
			
			
			
			\bibitem[Taylor et al.(2001)]{Taylor2001}Taylor, M.P., Peel, D.A., Sarno L. (2001). Nonlinear mean-reversion in real exchange rates: toward a solution to the purchasing power parity puzzles. International Economic Review 42: 1015-1042.
			
			
			
			
			
			
			\bibitem[Wong(1983)]{Wong1983}Wong, W.H. (1983). On the consistency of cross-validation in kernel nonparametric regression. The Annals of Statistics: 1136-1141.
			
			
			\bibitem[Wu and Xiao(2018)]{Wu2018}Wu, J.L., Xiao, Z. (2018). Testing for changing volatility. Econometrics Journal 21: 192-217.
			
			
			
			
			\bibitem[Zhang et al.(2022)]{Zhang2022}Zhang, X.F., Zhang, R.M., Li, Y., Ling, S.Q. (2022). LADE-based inferences for autoregressive models with heavy-tailed G-GARCH(1, 1) noise. Journal of Econometrics 227(1): 228-240.
			
			\bibitem[Zhao and Xiao(2014)]{Zhao2014}Zhao, Z., Xiao, Z. (2014). Efficient regressions via optimally combining quantile information. Econometric Theory 30(6): 1272-1314.
			
			
			\bibitem[Zhu(2019)]{Zhu2019}Zhu, K. (2019). Statistical inference for autoregressive models under heteroscedasticity of unknown form. The Annals of Statistics 47(6): 3185-3215.
			
			\bibitem[Zhu and Ling(2015)]{zhu2015}Zhu, K., Ling, S.Q. (2015). LADE-based inference for ARMA models with unspecified and heavy-tailed heteroscedastic noises. Journal of the American Statistical Association 110(510): 784-794. 
		\end{thebibliography}

\begin{thebibliography}{}  
		\bibitem[Boswijk and Zu(2018)]{Boswijk2018}Boswijk, H.P., Zu, Y. (2018). Adaptive wild bootstrap tests for a unit root with non-stationary volatility. The Econometrics Journal 21(2): 87-113.
		\bibitem[Cavaliere et al.(2010)]{Cavaliere2010}Cavaliere, G., Rahbek, A., Taylor, A.M.R. (2010). Testing for co-integration in vector autoregressions with non-stationary volatility. Journal of Econometrics 158: 7-24.
		\bibitem[Davidson(1994)]{Davidson1994}Davidson, J. (1994). Stochastic Limit Theory. Oxford: Oxford University Press.
		\bibitem[Davydov(1968)]{Davydov1968}Davydov, Y.A. (1968). Convergence of distributions generated by stationary stochastic processes. Theory of Probability \& Its Applications 13(4): 691-696.
		\bibitem[Hansen(1992)]{Hansen1992}Hansen, B.E. (1992). Convergence to stochastic integrals for dependent heterogeneous processes. Econometric Theory 8(4): 489-500.
		
		
		\bibitem[Knight(1989)]{Knight1989}Knight, K. (1989). Limit theory for autoregressive-parameter estimates in an infinite-variance random walk. The Canadian Journal of Statistics/La Revue Canadienne de Statistique: 261-278.
		
		\bibitem[K{\"u}nsch(1989)]{Kunsch1989}K{\"u}nsch, H.R. (1989). The jackknife and the bootstrap for general stationary observations. The Annals of Statistics: 1217-1241.
		\bibitem[Lahiri and Lahiri(2003)]{Lahiri2003}Lahiri, S.K, Lahiri, S.N. (2003). Resampling methods for dependent data. Springer Science \& Business Media.
		
		\bibitem[Phillips and Durlauf(1986)]{Phillips1986}Phillips, P.C.B., Durlauf, S.N. (1986). Multiple time series regression with integrated processes. Review of Economic Studies 53: 473-495.
		\bibitem[Phillips(1995)]{Phillips1995}Phillips, P.C.B. (1995). Robust nonstationary regression. Econometric Theory 12: 912-951.
		\bibitem[Shao and Yu(1996)]{Shao1996}Shao, Q.M., Yu, H. (1996). Weak convergence for weighted empirical processes of dependent sequences. The Annals of Probability 24: 2098-2127.
		\bibitem[Xiao(2012)]{Xiao2012}Xiao, Z. (2012). Robust inference in nonstationary time series models. Journal of Econometrics 169: 211-223.
		\bibitem[Xu and Phillips(2008)]{Xu2008}Xu, K.L., Phillips, P.C. (2008). Adaptive estimation of autoregressive models with time-varying variances. Journal of Econometrics 142(1): 265-280.
		
		\bibitem[Zhu(2019)]{Zhu2019}Zhu, K. (2019). Statistical inference for autoregressive models under heteroscedasticity of unknown form. The Annals of Statistics 47(6): 3185-3215.
		
	\end{thebibliography}
\end{document}